\title{Approximate Degree Composition for Recursive Functions}
\newcommand{\paren}[1]{\left( #1 \right)}
\newcommand{\brac}[1]{\left[ #1 \right]}
\newcommand{\set}[1]{\left\{ #1 \right\}}
\newcommand{\parencond}[2]{\left( #1 \;\middle\vert\; #2 \right)}
\newcommand{\braccond}[2]{\left[ #1 \;\middle\vert\; #2 \right]}
\newcommand{\setcond}[2]{\left\{ #1 \;\middle\vert\; #2 \right\}}
\newcommand{\abs}[1]{\left\lvert #1 \right\rvert}
\newcommand{\norm}[1]{\left\lVert #1 \right\rVert}
\newcommand{\ang}[1]{\left\langle #1 \right\rangle}
\newcommand{\floor}[1]{\left\lfloor #1 \right\rfloor}
\newcommand{\ceil}[1]{\left\lceil #1 \right\rceil}
\newcommand{\blue}[1]{\textcolor{blue}{#1}}
\newcommand{\wt}{\widetilde}
\newcommand{\wh}{\widehat}
\newcommand{\ol}{\overline}
\newcommand{\tn}[1]{\textnormal{#1}}
\newcommand{\MAJ}{\mathsf{MAJ}}
\newcommand{\polylog}{\text{polylog}}
\DeclareMathOperator*{\Exp}{\mathbb{E}}
\newcommand{\indicator}{\mathbbm{1}}
\newcommand{\zone}{\set{0,1}}
\newcommand{\zonep}{\MAJset{0,1,*}}
\newcommand{\pmone}{\set{-1,1}}
\newcommand{\ftwo}{\mathbb{F}_2}
\newcommand{\reals}{\ensuremath{\mathbb{R}}}
\newcommand{\OR}{\mathsf{OR}}
\newcommand{\PrOR}{\mathsf{PrOR}}
\newcommand{\AND}{\mathsf{AND}}
\newcommand{\XOR}{\mathsf{XOR}}
\newcommand{\PARITY}{\mathsf{PARITY}}
\newcommand{\PAR}{\mathsf{PARITY}/\neg\mathsf{PARITY}}
\newcommand{\rub}{\mathsf{RUB}}
\newcommand{\sink}{\mathsf{SINK}}
\newcommand{\Maj}{\mathsf{MAJ}}
\newcommand{\boolfn}[1]{\ensuremath {\zone}^{#1} \to \zone}
\renewcommand{\tilde}{\widetilde}
\renewcommand{\Tilde}{\widetilde}
\newcommand{\adeg}{\widetilde{\deg}}
\newcommand{\odeg}{\widetilde{\mathrm{odeg}}}
\newcommand{\rqc}{\textnormal{R}}
\newcommand{\s}{\textnormal{s}}
\newcommand{\bs}{\textnormal{bs}}
\newcommand{\fbs}{\textnormal{fbs}}
\newcommand{\supp}{\textnormal{supp}}
\newcommand{\spann}[1]{\textnormal{span}\left\{#1\right\}}
\newcommand{\sign}{\textnormal{sign}}
\newtheorem{theorem}{Theorem}[section]
\newtheorem{corollary}[theorem]{Corollary}
\newtheorem{lemma}[theorem]{Lemma}
\newtheorem{claim}[theorem]{Claim}
\newtheorem{defi}[theorem]{Definition}
\newtheorem{observation}[theorem]{Observation}
\newtheorem{fact}[theorem]{Fact}
\newtheorem{open question}[theorem]{Open question}
\newcommand{\Dom}{\textnormal{Dom}}
\newcommand{\amp}{\mathsf{Amp}}
\date{}
\author{
Sourav Chakraborty\thanks{Indian Statistical Institute, Kolkata, India. \texttt{sourav@isical.ac.in}}
\and 
Chandrima Kayal\thanks{Indian Statistical Institute, Kolkata, India. \texttt{chandrimakayal2012@gmail.com}}
\and 
Rajat Mittal\thanks{Indian Institute of Technology Kanpur, India.  \texttt{rmittal@cse.iitk.ac.in }}
\and
Manaswi Paraashar\thanks{University of Copenhagen, Denmark. \texttt{manaswi.isi@gmail.com}}
\and
Nitin Saurabh \thanks{Indian Institute of Technology Hyderabad, India. \texttt{nitin@cse.iith.ac.in}}
}
\begin{document}

\maketitle
\begin{abstract}
Determining the approximate degree composition for Boolean functions remains a significant unsolved problem in Boolean function complexity. In recent decades, researchers have concentrated on proving that approximate degree composes for special types of inner and outer functions.
An important and extensively studied class of functions are the recursive functions, i.e.~functions obtained by composing a base function with itself a number of times. 
    The main result of this work is to show that the approximate degree composes if either of the following conditions holds:
    \begin{itemize}
        \item The outer function $f:\{0,1\}^n\to \{0,1\}$ is a recursive function of the form $h^d$, with $h$ being any base function and $d= \Omega(\log\log n)$.
        \item The inner function is a recursive function of the form $h^d$, with $h$ being any constant arity base function (other than AND and OR) and $d= \Omega(\log\log n)$, where $n$ is the arity of the outer function.
    \end{itemize}

    In terms of proof techniques, we first observe that the lower bound for composition can be obtained by introducing majority in between the inner and the outer functions. We then show that majority can be \emph{efficiently eliminated} if the inner or outer function is a recursive function.
\end{abstract}



 \section{Introduction}

Representations of Boolean functions $f\colon \boolfn{n}$ in terms of multivariate polynomials $p(x)$ play a pivotal role in theoretical computer science. There are different notions of representations;
\begin{itemize}
\item \emph{exact} representation -- $f(x) = p(x)$ for all $x\in\zone^n$,
\item \emph{approximate} representation -- $|f(x)-p(x)| \leq 1/3$ for all $x\in\zone^n$, and 
\item \emph{sign} representation -- $(1-2f(x))p(x) > 0$ for all $x\in\zone^n$.
\end{itemize}

Arguably the most important measure associated with a polynomial is its (total) \emph{degree}. Let $\deg(f)$, $\adeg(f)$, and $\deg_{\pm}(f)$ denote the minimal possible degree of a real polynomial \emph{exactly}, \emph{approximately}, and \emph{sign} representing $f$, respectively. These different notions of degrees capture notions of efficiency in many different models of computation (e.g., decision trees, quantum query, perceptrons), 
and are thus well-studied in literature (see, e.g., \cite{Beigel93,Beigel94a,BunT22} and the references therein).

For instance, $\deg_{\pm}(f)$ (called sign degree) has strong connections to -- separations among complexity classes \cite{Beigel94a},  designing efficient learning algorithm \cite{KlivansS04,KlivansOS04}, and lower bounds against circuits, formulas, communication complexity, etc.~\cite{BuhrmanVW07,Chen16}. Similarly, upper bounds on $\adeg(f)$ (called approximate degree), has strong connections to learning theory \cite{KalaiKMS08,KlivansS06,ServedioTT12}, approximate inclusion-exclusion \cite{KahnLS96,Sherstov09}, differentially private data release \cite{ThalerUV12,ChandrasekaranTUW14}, etc. While the lower bounds on approximate degree lead to lower bounds in quantum query complexity \cite{BealsBCMW01,AaronsonS04,Aaronson12}, communication complexity \cite{Sherstov09, Sherstovpattern11}, circuit complexity \cite{Allender89}, etc. 

Despite decades of work in this area, there are many important problems that are yet to be resolved completely. One such problem pertains to the composition of approximate degrees. For any two Boolean functions $f\colon\boolfn{n}$ and $g\colon\boolfn{m}$, define the \emph{composed} function $f\circ g\colon \boolfn{nm}$ as follows 
\begin{align*}
    f \circ g (x_{11},\ldots ,x_{1m},\ldots\ldots ,x_{n1},\ldots ,x_{nm}) = f(g(x_1),\ldots ,g(x_n)),
\end{align*}
where $x_i = (x_{i1},\ldots ,x_{im}) \in \zone^m$ for $i \in [n]$. The function $f$ is called the outer function and $g$ the inner function.  

Investigating the behaviour of complexity measures under composition has been a quintessential tool in our quest to gain insights into relationships among different measures. In particular, composition has been used successfully on numerous occasions to show separations between various complexity measures associated with Boolean functions, see, e.g.,  \cite{SaksW86,NS94,JuknaRSW99,Ambainis05,Tal13,GSS16}.  
A big open problem in this context is to understand how approximate degree behaves under composition. More formally, it asks whether for all Boolean functions $f:\{0,1\}^n \to \{0,1\}$ and $g:\{0,1\}^m\to \{0,1\}$, 
    \begin{align*}
        \adeg (f \circ g) = \Tilde{\Theta}(\adeg(f) \cdot \adeg(g))?
    \end{align*}
The tilde in the $\Tilde{\Theta}$ notation hides a factor polynomial in $\log(n + m)$. This problem is often referred to as the `approximate degree composition' problem.

The upper bound, $\adeg(f\circ g) = O(\adeg(f)\cdot\adeg(g))$, was established in a seminal work \cite{robustSherstov13} of Sherstov. 
Thus to completely resolve the problem it remains to prove a matching lower bound on the approximate degree of a composed function in terms of the approximate degree of the individual functions. In other words, does the following hold for all Boolean functions $f$ and $g$,  
\begin{align*}
 \adeg(f\circ g) = \tilde{\Omega}\left(\adeg(f)\cdot\adeg(g)\right)?
\end{align*}
In this article we will refer to the aforementioned (lower bound) question by the phrase `approximate degree composition' problem.  

Numerous works, including those by \cite{NS94,Ambainis05,Sherstov12,She13,She13a,BT13,DavidBGK18,CKMPSS23}, actively pursued these lower bounds, leading to newer connections with several important problems in the field. 
However, establishing the lower bound $\adeg(f\circ g) = \tilde{\Omega}\left(\adeg(f)\;\adeg(g)\right)$ even for specific functions or restricted classes of functions is often very challenging. 
For example, consider the composed function $\OR\circ\AND$, it took a long series of work \cite{NS94,Shi02,Ambainis05,She13,She13a,BT13} over nearly two decades to prove that $\adeg(\OR\circ \AND) = \Omega\left(\adeg(\OR)\;\adeg(\AND)\right)$. Till date we know that the approximate degree composes in the following cases: 
\begin{itemize}
    \item when the outer function $f$ has full approximate degree, i.e., $\Theta(n)$ \cite{Sherstov12}, 
    \item when the outer function $f$ is a symmetric function \cite{DavidBGK18}, 
    \item when the outer function $f$ has minimal approximate degree with respect to its block sensitivity, i.e., $\adeg(f) = O(\sqrt{\bs(f)})$ \cite{CKMPSS23}, and
    \item when the sign degree of the inner function is same as its approximate degree~\cite{Sherstov12, Lee09}.
\end{itemize}

This work focuses on the behavior of approximate degree when \emph{recursive functions} are composed with other general functions (as outer or inner function). Here, by recursive functions, we mean the functions of the kind $h^d$ ($h$ composed with itself $d$ times) where the arity of $h$ is small. The function $h$ is often called the base function and the function $f$ is called the recursive-$h$ function. 

Recursive functions are an important class of Boolean functions that are studied in various different contexts in the analysis of Boolean functions, mainly in proving various lower bounds ~\cite{Ambainis05,Snir85,SW86,NS94,NW94,DHT17}.  For example, the Kushilevitz's function~\cite{NW94} which is the only known non-trivial example of functions with low degree and high sensitivity is a recursive function of a carefully chosen base function.  Recursive majority, $\Maj_3^d$, is another recursive function that has been studied extensively in the literature for its different properties 
~\cite{SaksW86,JayramKS03,Leonardos13,MagniezNSSTX16}. Boppana (see, e.g., \cite{SaksW86}) used it to provide the first evidence that the randomized query is more powerful than deterministic query~\cite{SaksW86}. In the same article, they show a similar separation using recursive $\AND_2 \circ \OR_2$ function too.
In a different application of recursive $\AND_2 \circ \OR_2$,~\cite{JuknaRSW99} show separation between deterministic tree-size complexity and number of monomials in the minimal DNF or CNF.



The approximate degree composition was not known when the outer or inner function is a recursive function, in general. For some special recursive functions, however, it was known that the approximate degree composes. For example, the $\OR$ function on $n=3^d$ bits is same as $\OR_3^d$. After a series of works (\cite{NS94,Ambainis05,She13a,BT13,She13}), it was proven that the approximate degree composition holds when the outer function is $\OR$, and in general symmetric~\cite{DavidBGK18}.  Similarly, from the result of \cite{Sherstov12,Lee09} it can be observed that the lower bound holds when either the inner or outer function is recursive $\PARITY$. Unfortunately, these results can't be applied in general even when the base function is symmetric or it has full approximate degree.

 
This scenario leads to the natural question: 
\begin{quote}
\begin{center}
    \emph{Can we prove that $\adeg(f \circ g) = \Omega(\adeg(f) \cdot \adeg(g))$ when the outer function $f$ or the inner function $g$ is recursive?}
\end{center}
\end{quote}

\subsection{Our Results}

Let $h : \zone^k \to \zone$ be a function on $k$-bits. Let $h^d$ denotes the complete $k$-ary tree of depth $d$ such that each internal node of the tree is labelled by $h$ and the leaves of tree are labelled by distinct variables. Our main result shows that the composition theorem holds for any $h^d$ (except a few specific $h$'s), either as the outer function with any inner function or as the inner function with any outer function. 
\begin{restatable}{theorem}{anyrecursivefunction}
    \label{th:for any recursive function}
    Let $f\colon\boolfn{n}$ and $g\colon\boolfn{m}$ be two Boolean functions and $d=\Omega(\log\log n)$. Then, 
    \[\adeg(f\circ g) = \Omega\left(\frac{\adeg(f)\adeg(g)}{\textup{\polylog}(n)}\right),\] if either of the following conditions hold:
    \begin{enumerate}
    \item $f= h^d$, for any Boolean function $h$.  
    \item $g= h^d$, for any Boolean function $h$ with constant arity and not equal to $\AND$ or $\OR$. 
    \end{enumerate}
\end{restatable}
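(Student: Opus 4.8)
The plan is to route the composition lower bound through a ``robustified'' variant in which a majority gadget is placed between the two functions. The first ingredient is the \emph{majority trick}: for every $f\colon\boolfn{n}$, every Boolean $g$, and $r=\polylog(n)$,
\[
\adeg\bigl(f\circ\MAJ_r\circ g\bigr)\;=\;\widetilde\Omega\bigl(\adeg(f)\cdot\adeg(g)\bigr),
\]
and this holds \emph{unconditionally}: $\MAJ_r\circ g$ is essentially a noise-robust copy of $g$, so it carries a dual witness of the kind for which approximate degree provably composes via the dual block method, and one loses only a $\polylog$ factor accounted for by $\adeg(\MAJ_r)$. Granting this, the entire task becomes: show that the inserted $\MAJ_r$ can be \emph{efficiently eliminated} when $f$ or $g$ is recursive, i.e.\ $\adeg(f\circ g)=\widetilde\Omega\bigl(\adeg(f\circ\MAJ_r\circ g)/\polylog(n)\bigr)$. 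The recursion of $h^d$ is what makes this possible---a $\MAJ_{\polylog(n)}$ can be ``hidden'' inside the tree using only $O(\log\log n)$ of its levels, which is affordable exactly when $d=\Omega(\log\log n)$.

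\textbf{Inner function recursive (Item 2).} Here $g=h^d$ with $h$ of constant arity $k$ and $h\notin\{\AND,\OR\}$. Split on $h$. If $h$ is affine (a parity or a dictator, up to negation and irrelevant inputs), then $h^d$ is affine, so $\deg_{\pm}(h^d)=\adeg(h^d)$ and the lower bound already follows from \cite{Sherstov12,Lee09}. Otherwise $h$ is non-affine and not $\AND/\OR$; a short structural lemma then shows that some constant iterate $h^{O(1)}$ has a restriction (by fixing and identifying inputs) equal to $\AND_2$ and one equal to $\OR_2$. Propagating these along the $k$-ary tree, $h^{\ell}$ with $\ell=O(\log\log n)$ has a restriction computing $\MAJ_r$ for $r=\polylog(n)$---here one plants a classical depth-$O(\log r)$ monotone formula for majority (e.g.\ Valiant's). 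Writing $g=h^{\ell}\circ h^{d-\ell}$ and specialising the top $\ell$ levels of every copy of $g$ accordingly realises $f\circ\MAJ_r\circ h^{d-\ell}$ as a restriction of $f\circ g$, whence $\adeg(f\circ g)\ge\adeg\bigl(f\circ\MAJ_r\circ h^{d-\ell}\bigr)=\widetilde\Omega\bigl(\adeg(f)\cdot\adeg(h^{d-\ell})\bigr)$ by the majority trick. Finally, since $k$ is constant, deleting $\ell$ levels is cheap: by the composition upper bound of \cite{robustSherstov13}, $\adeg(h^d)\le\adeg(h^{d-\ell})\cdot\adeg(h^{\ell})\le\adeg(h^{d-\ell})\cdot k^{\ell}$ with $k^{\ell}=\polylog(n)$, so $\adeg(h^{d-\ell})=\Omega\bigl(\adeg(h^d)/\polylog(n)\bigr)$; combining the two estimates yields the theorem for Item~2.

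\textbf{Outer function recursive (Item 1).} Now $f=h^d$ with $h$ of arity $k=n^{1/d}$, which need not be bounded. The same dichotomy removes the degenerate bases: if $h$ is affine or equals $\AND$ or $\OR$ (up to negation and irrelevant inputs), then $h^d$ is symmetric (a parity, an $\AND$, or an $\OR$) or trivial, so composition with a symmetric outer function is known \cite{DavidBGK18}. For the remaining $h$ the plan is once more to apply the majority trick to $f\circ\MAJ_r\circ g$ with $r=\polylog(n)$ and then absorb the $\MAJ_r$ into $h^d$; but now the deletion-of-levels step from Item~2 is fatal, because trimming $\Theta(\log\log n)$ levels of $h^d$ can cost a polynomial factor $k^{\Theta(\log\log n)}$. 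The key new idea is therefore to absorb the $\MAJ_{\polylog(n)}$ \emph{without} giving up the degree of $h^d$: one uses that a $\Theta(\log\log n)$-deep iterate of a non-degenerate base function is already noise-stable, so the recursive structure of $h^d$ itself furnishes the robustness the majority gadget was meant to supply, and the elimination goes through with only a $\polylog(n)$ loss. The hypothesis $d=\Omega(\log\log n)$ is precisely what provides enough depth for this amplification.

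\textbf{Where the difficulty lies.} I expect the elimination step---not the majority trick---to be the heart of the proof. The delicate points are: making precise in what sense a $\Theta(\log\log n)$-deep iterate of a non-degenerate base function is robust enough to feed the dual block method, and the bookkeeping that planting a $\MAJ_{\polylog(n)}$ and shifting $O(\log\log n)$ levels of the recursion costs only $\polylog(n)$ overall---the outer-recursive case being the harder one, since there the base arity is unbounded and levels cannot simply be trimmed. The remaining ingredient is the base-function case analysis: pinning down that the only bases $h$ for which $h^d$ escapes the new argument are the affine ones and $\AND/\OR$, which is exactly what forces the exclusion of $\AND$ and $\OR$ in Item~2.
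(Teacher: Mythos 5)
Your overall plan and your treatment of Item 2 (inner function recursive) are essentially the paper's argument. The paper's Lemma~1.3 establishes $\adeg(f\circ\MAJ_k\circ g)=\Omega(\adeg(f)\adeg(g))$ for $k=\Theta(\log n)$ via a primal-dual argument (not the dual block method, and with no polylog loss --- the polylog only enters later, in the elimination step). The paper's Lemma~4.3 shows that $\AND_2$ and $\OR_2$ are restrictions of $h^{3}$ whenever $h$ depends on all variables and is not $\PARITY/\neg\PARITY/\AND/\OR$ --- this is exactly your structural lemma --- and combined with Valiant's monotone majority formula it realizes $\MAJ_{k}$ as a projection of $h^{\Theta(\log k)}=h^{\Theta(\log\log n)}$. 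Your trimming bookkeeping $\adeg(h^{d-\ell})\geq \adeg(h^d)/t^{\ell}$ (using the robust composition upper bound) with $t^{\ell}=\polylog(n)$ for constant base arity $t$ and $\ell=\Theta(\log\log n)$ is also precisely how the paper closes Item~2. So your Item~2 is correct and matches the paper.

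Item 1 (outer function recursive) is where you depart. The paper applies the very same trimming argument --- $\adeg(h^d\circ g)\geq\adeg(h^{d-\ell}\circ\MAJ_k\circ g)=\Omega(\adeg(h^{d-\ell})\adeg(g))=\Omega(\adeg(h^d)\adeg(g)/t^{\ell})$ --- and states only that when $h^d$ is the outer function ``$t$ need not be a constant, while the rest of the argument remains the same.'' You instead declare trimming ``fatal'' (because when $d=\Theta(\log\log n)$ the base arity $t=n^{1/d}$ is superpolylogarithmic and $t^{\Theta(\log\log n)}$ can be $n^{\Theta(1)}$), and you gesture at an alternative: absorb the majority ``without giving up the degree'' by appealing to noise-stability of a $\Theta(\log\log n)$-deep iterate. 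This is not the paper's route, and more importantly it is not a proof: you state no lemma making the needed stability precise and give no account of how it would replace the explicit $\MAJ_{\Theta(\log n)}$ gadget in the dual-witness machinery of the majority lemma. As written, Item~1 has a real gap. Your arithmetic concern about the paper's one-line disposal of non-constant $t$ is not unreasonable to raise, but flagging a difficulty is not the same as resolving it; to complete the proof you would either need to carry out your noise-stability idea as a concrete lemma, or reuse the paper's trimming argument and confront the $t^{\ell}$ bookkeeping directly.
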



In light of the above theorem,
understanding the composition of approximate degree when inner function is $\OR$ is the central case for making progress towards the general composition question.  


We would like to emphasize that there are not many results which prove composition theorem for a general class of inner functions. Theorem~\ref{th:for any recursive function} shows that the composition property holds if the inner function is recursive irrespective of the outer function. 


We further note that Theorem~\ref{th:for any recursive function} doesn't follow from the known results even when the composition theorem is known to hold for the base function.   
Firstly, it is known that the composition lower bound  holds when the outer function is symmetric~\cite{DavidBGK18}; though, a repeated composition of a symmetric function will incur the factor of $(\log n)^d$ (because of the $\log n$ factor hiding in the $\tilde{\Omega}$ notation).  
Secondly, while the majority function, $\Maj_n$, has full approximate degree ($\Theta(n)$), $\Maj_3^d$ doesn't have full approximate degree.  Thus, Sherstov's result \cite{Sherstov12} that proves composition theorem holds for functions with full approximate degree cannot be applied in the case of recursive majority. 
The situation is similar for the inner function as well.

Moving ahead, the proof of Theorem~\ref{th:for any recursive function} uses two ideas.
\begin{itemize}
    \item We first prove that a similar theorem works for the specific case of $h=\MAJ_3$ and $h=\AND_2 \circ \OR_2$ functions.
    \item Then, we use a general $h$ to \emph{simulate} $\AND_2 \circ \OR_2$; hence, proving composition for the general case.
\end{itemize}

The case of recursive $h=\MAJ_3$ and $h=\AND_2 \circ \OR_2$ functions is in itself very interesting. There have been several works towards exploring the approximate degree and other properties of these two functions~\cite{GoosJ16,KV14, SaksW86, JuknaRSW99}. Given their importance, and the fact that it is a central step in our main result (Theorem~\ref{th:for any recursive function}), we state the composition theorem for these two functions separately.


\begin{restatable}{theorem}{RecursiveMajorityMainThm}
\label{th: RecMaj composition}
Let $f$ and $g$ be two Boolean functions. Then, 
\begin{align*}
    \adeg(f \circ h^d) = \Tilde{\Omega}(\adeg(f)~\adeg(h^d)) \text{ and }\;
    \adeg(h^d \circ g) = \Tilde{\Omega}(\adeg(h^d)~\adeg(g)),
\end{align*}
where $h$ is either $\MAJ_3 : \zone^3 \to \zone$ or $\AND_2 \circ \OR_2 : \zone^4 \to \zone$, $n$ is the arity of the outer function, $d \geq C \log\log n$ for a large enough constant $C$, and 
$\Tilde{\Omega}(\cdot)$ hides $\textup{\polylog}(n)$ factors.
\end{restatable}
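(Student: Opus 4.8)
The plan is to establish both bounds from a single object: a \emph{dual witness} for the recursive function $h^d$ that is not merely optimal for its approximate degree but is structured so that Sherstov's dual block method can glue it to an arbitrary dual witness while losing only a $\polylog$ factor. Recall that $\adeg_\varepsilon(F)>D$ is certified by some $\psi\colon\zone^{n}\to\reals$ with $\norm{\psi}_1=1$, $\wh\psi(S)=0$ for all $\abs S<D$, and $\sum_x\psi(x)(1-2F(x))$ bounded away from $0$, and that gluing an outer witness (for the outer function) to an inner witness (for the inner function) produces a witness for the composition; but to keep the loss down to $\polylog$ one needs the \emph{inner} witness to be ``balanced'' (comparable $\ell_1$-mass over the $0$-set and the $1$-set of the inner function) and of correlation $1-1/\polylog(n)$ --- and, dually, a sufficiently robust \emph{outer} witness can tolerate an unbalanced inner one. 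The ``introduce a majority in between'' observation is that a majority gadget manufactures exactly these properties essentially for free: for every $g$, the function $\MAJ_t\circ g$ (equivalently $\GapMaj_t\circ g$) carries a balanced dual witness of pure high degree $\Omega(\sqrt t)\cdot\adeg(g)$ and correlation $1-1/\polylog(n)$ once $t=\polylog(n)$, whence $\adeg(f\circ\MAJ_{\polylog}\circ g)=\tilde{\Omega}(\adeg(f)\,\adeg(g))$. The substance of Theorem~\ref{th: RecMaj composition} is to ``eliminate the majority'': to show that for $h=\MAJ_3$ and $h=\AND_2\circ\OR_2$ the recursion already inside $h^d$ is self-robustifying --- its top $\Theta(\log\log n)$ levels behave like that $\MAJ_t$ gadget --- so the argument runs directly on $h^d$ with nothing inserted.

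Concretely, for the inner-recursive bound I would build the inner witness bottom-up: start from a dual witness $\psi_1$ for the base function $h$, which --- since $h$ has constant arity and is $\MAJ_3$ or $\AND_2\circ\OR_2$, hence not $\AND$ or $\OR$ --- can be taken balanced and with correlation a positive constant past the threshold (a finite check, and precisely the point where $\AND,\OR$, whose optimal witnesses are maximally lopsided, would fail). Setting $\psi_j:=\psi_1\star\psi_{j-1}$ via the dual block method, I would track over $j=1,\dots,d$: the pure high degree, which multiplies up to $\Theta(\adeg(h^d))$ (the construction also supplying the matching lower bound on $\adeg(h^d)$, e.g.\ $\Theta(2^d)$ for the balanced $\AND$-$\OR$ tree $(\AND_2\circ\OR_2)^d$); the ``imbalance'', which must be kept bounded; and the error $\varepsilon_j=1-(\text{correlation})$, whose control is the crux. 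Here $h$ \emph{amplifies}: for $\MAJ_3$ one obtains a contraction of the form $\varepsilon_{j+1}\lesssim c\,\varepsilon_j^{2}$ up to lower-order imbalance corrections --- the classical majority error-reduction recursion --- and for $\AND_2\circ\OR_2$ the analogous contraction is the one already behind the recursive dual-witness construction for the $\OR\circ\AND$ lower bounds \cite{BT13,Sherstov13}. Thus after $d=\Omega(\log\log n)$ levels $\psi_d$ has error $\le 1/\polylog(n)$ (below $1/\mathrm{poly}(n)$ if needed) and bounded imbalance, and dual-block composition with an optimal witness for $f$ yields $\adeg(f\circ h^d)\ge\tilde{\Omega}(\adeg(f)\,\adeg(h^d))$. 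The outer-recursive bound uses the same recursive witness for $h^d$ in the outer role: $g$'s witness is arbitrary and possibly badly unbalanced, but a sufficiently amplified $h^d$-witness (again what $d=\Omega(\log\log n)$ buys) is robust enough to absorb that imbalance, giving multiplicativity up to $\polylog(n)$.

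The step I expect to be the real obstacle is precisely this telescoping: keeping the per-level $\polylog$ overheads and correlation/balance losses of the dual block method from compounding into $(\polylog(n))^{d}$. That is why the proof cannot proceed one level at a time through known composition theorems (symmetric outer, full-degree outer, equal sign and approximate degree of the inner function), each of which loses a $\polylog$ factor and so is useless after $\omega(1)$ iterations, and it is what pins down the depth requirement: with $d=o(\log\log n)$ the recursive witness is not amplified far enough for the dual block composition against an arity-$n$ companion to close, whereas once $d=\Omega(\log\log n)$ the doubly-exponential amplification overshoots and every geometric-type loss sums to at most $\polylog(n)$. A subsidiary point demanding care is that dual block composition provably preserves the balanced structure, so that the induction on $j$ is self-sustaining.
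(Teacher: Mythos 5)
Your high-level reading is right --- the proof does introduce a middle $\Maj_t$ (Lemma~\ref{lem: Maj as amplification function}) and then ``eliminates'' it by noticing that the top $\Theta(\log\log n)$ levels of $h^d$ can play the role of that gadget --- but the concrete implementation you propose diverges from the paper's and, as stated, has a gap at its core that I do not think can be closed without essentially redoing the problem from scratch.

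The paper never builds a dual witness for $h^d$, never composes dual witnesses level by level, and never needs any error-contraction statement on the dual side. Lemma~\ref{lem: Maj as amplification function} is proved by a primal-dual argument in the style of \cite{She13a}: take an $\varepsilon$-approximating polynomial $p$ for $f\circ\Maj_t\circ g$ and hit it with the linear operator $L$ that replaces each input coordinate $z_i$ of $f$ by $t$ independent samples from the conditional distributions $\mu_{z_i}$ coming from the dual witness \emph{of $g$ alone}; $\Maj_t$ is used only primally, via a Chernoff bound, to drive the per-coordinate bias of $Lh$ from the constant $\varepsilon$ down to $\delta/n$, so that multilinear robustness (Lemma~\ref{folklorerobust}) applies. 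Theorem~\ref{th: RecMaj composition} then follows by a projection argument: $\Maj_t$ with $t=\Theta(\log n)$ is a projection of $\Maj_3^{C\log t}$ (Goldreich) and of $(\AND_2\circ\OR_2)^{C\log t}$ (Valiant), and associativity plus Sherstov's composition \emph{upper bound} (Theorem~\ref{thm:robust-comp}) lets one push that extra $h^{O(\log\log n)}$ into $h^d$ at the cost of a single $\mathrm{poly}(t)=\polylog(n)$ factor. No balance/imbalance bookkeeping, no recursion over $d$ levels, no compounding of losses --- the whole point of the argument is that the middle gadget costs $O(\log\log n)$ levels once, not a per-level overhead.

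The step I would flag as a genuine gap in your plan is the claimed contraction $\varepsilon_{j+1}\lesssim c\,\varepsilon_j^{2}$ under the dual block operation $\psi_j=\psi_1\star\psi_{j-1}$ for $\psi_1$ a base witness of $\Maj_3$ or $\AND_2\circ\OR_2$. This is not a known property of dual block composition; the ``classical majority error-reduction recursion'' you cite is a primal phenomenon (bias amplification of the Boolean function under independent noise, exactly the Chernoff bound the paper uses), and it does not transfer automatically to correlation of composed dual witnesses. The amplification lemmas that do exist in the dual-block literature (Bun--Thaler, Sherstov) are for one-sided structures such as $\OR$/$\AND$, or for outer functions of full sign degree via Theorem~\ref{th:fulladegcompositionsherstov}; establishing a self-improving two-sided contraction with simultaneous balance control for $\Maj_3$, uniformly over $d$ levels, is itself a serious technical undertaking --- arguably comparable in difficulty to the theorem being proved --- and the proposal offers no argument for it. Likewise, the claim that an ``amplified $h^d$-witness in the outer role is robust enough to absorb'' an arbitrary unbalanced inner witness is not a property the dual block method provides: the standard requirement is on the \emph{inner} witness's balance and correlation, not something a highly correlated \emph{outer} witness can compensate for. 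Finally, a small slip: $\adeg(\Maj_t)=\Theta(t)$, not $\Theta(\sqrt t)$, and in fact Lemma~\ref{lem: Maj as amplification function} deliberately does not collect this factor --- it is discarded, which is affordable precisely because $t=\Theta(\log n)$ is polylogarithmic.
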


%

To prove Theorem~\ref{th: RecMaj composition} we will need the following lemma. Even though the lemma can be obtained from a combination of known results (e.g., \cite{Sherstov12} and \cite{BoulandCHTV17}) with appropriate parameters, we give a self-contained simpler proof of the lemma, inspired by the primal-dual perspective of \cite{She13a}.

\begin{restatable}{lemma}{MajorityHardnessamplification} \label{lem: Maj as amplification function}
     For any Boolean function $f:\zone^n \to \zone$ and $g:\zone^m \to \zone$,
     \begin{align}
     \label{eq: maj comp}
         \adeg(f\circ \mathsf{MAJ}_t \circ g) = \Omega( \adeg(f) \adeg(g))
     \end{align} for $t=\Omega(\log n)$. 
\end{restatable}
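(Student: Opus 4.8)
The plan is to prove the inequality in two movements: first pushing the hardness of $g$ through the majority gate into $f \circ \MAJ_t$, and then separately accounting for the hardness of $f$. Concretely, I would use the primal-dual framework: let $\psi_f$ be a dual witness for $\adeg(f) = d_f$, i.e. a function $\psi_f \colon \zone^n \to \reals$ with $\widehat{\psi_f}(\emptyset) = 0$ for all monomials of degree $< d_f$ (equivalently $\ang{\psi_f, p} = 0$ for all polynomials $p$ of degree $< d_f$), $\norm{\psi_f}_1 = 1$, and $\ang{\psi_f, f} > 1/3$; similarly let $\psi_g$ be a dual witness for $\adeg(g) = d_g$. The goal is to combine these into a dual witness for $f \circ \MAJ_t \circ g$ of "pure high degree" at least $\Omega(d_f d_g)$ that still correlates with the composed function.

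The key technical step, and where I expect the real work to lie, is a robustification of the inner dual witness so that the composition survives the majority gate. The rough idea, following \cite{She13a} and the dual-block method of \cite{BT13,Sherstov13}: the combined dual witness for $f \circ \MAJ_t$ should look like $\Psi(x) = \psi_f(\dots) \prod_i \phi(x_i)$ where $\phi$ is a dual object for $\MAJ_t$ (or more precisely for the amplified behaviour of $\MAJ_t$), and then one composes again with $\psi_g$ on each of the $t$ copies of $g$ feeding each majority. The role of $t \geq C \log n$ is amplification: $\MAJ_t \circ (\text{something that is } 1/3\text{-approximated})$ behaves, up to error $\ll 1/n$, like a clean bit, so that the outer dual witness $\psi_f$ — which only tolerates error $1/3$ but is a function on $n$ inputs — still correlates with $f$ after we substitute the amplified inner blocks. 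This is exactly the "hardness amplification inside the composition" phenomenon; the majority on $\Theta(\log n)$ bits drives the approximation error down geometrically, and the union bound over the $n$ outer coordinates costs only the constant $C$ in the arity. I would either invoke the GapMajority/amplification machinery of \cite{BoulandCHTV17} as a black box, or, for the self-contained route promised in the statement, build the dual witness for $\MAJ_t$ directly: $\MAJ_t$ has a dual polynomial of degree $\Omega(\sqrt{t})$ but more importantly, the relevant object here is not its approximate degree but the fact that $\MAJ_t$ composed with $\pm$-biased inputs amplifies bias — so the dual witness one wants is the one certifying that $\MAJ_t$ cannot be approximated to error $1/n^{O(1)}$ by low-"advantage" objects, which is a Chernoff-type statement dualized.

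Having established $\adeg(f \circ \MAJ_t \circ g) = \Omega(d_f \cdot \adeg(\MAJ_t \circ g))$ via the amplification step (treating $\MAJ_t \circ g$ as the inner function and using that the amplified block is essentially noiseless, so the outer witness $\psi_f$ transfers), it remains to show $\adeg(\MAJ_t \circ g) = \Omega(d_g)$ — but this is immediate and cheap: $\MAJ_t$ is not a constant function, so it has a sensitive coordinate, and restricting all but one of its $t$ inputs to constants makes $\MAJ_t \circ g$ compute $g$ or $\neg g$ on a subset of variables; since approximate degree does not increase under restriction, $\adeg(\MAJ_t \circ g) \geq \adeg(g) = d_g$. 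Combining the two gives $\adeg(f \circ \MAJ_t \circ g) = \Omega(d_f d_g)$ as claimed. The only subtlety to watch in writing this up carefully is bookkeeping the error budget: one must verify that the amplified inner block has approximation error small enough (say $1/(100n)$) relative to the $1/3$ slack in the outer dual witness $\psi_f$, and that the degree of the amplifying approximation ($O(\sqrt{t}) = O(\sqrt{\log n})$) is absorbed into the $\polylog(n)$ / constant factors rather than multiplying against $d_f d_g$ — which is exactly why $t = \Theta(\log n)$ (and not $\mathrm{poly}(n)$) is the right choice, keeping the degree overhead polylogarithmic.
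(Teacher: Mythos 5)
Your plan goes by the dual-block composition route, whereas the paper takes the primal-dual route of~\cite{She13a} that it explicitly contrasts against exactly this dual-composition method. In the paper only the inner dual witness $\psi_g$ is used: from it one extracts $\mu_0,\mu_1$ and builds a linear operator $L$ acting on a hypothesized low-degree approximant $p_h$ of $f\circ\Maj_t\circ g$; Lemma~\ref{lem:L-decreases-deg} gives $\deg(Lp_h)\leq\deg(p_h)/\adeg_{\frac{1-\varepsilon}{2}}(g)$, the majority on $t=\Theta(\log n)$ wires drives each input coordinate's bias to $O(\delta/n)$, and the folklore robustness of multilinear polynomials (Theorem~\ref{folklorerobust}) makes $Lp_h$ a $(\delta+\varepsilon)$-approximant of $f$. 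No outer dual witness $\psi_f$ appears at all. Your dual-block alternative can in principle be made to work --- the paper itself notes the lemma also follows from \cite{Sherstov12} and \cite{BoulandCHTV17} with appropriate parameters --- but it requires explicitly constructing the amplified dual object, which the paper's route avoids.

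The concrete flaw is in your two-step decomposition. Step 1 as you state it, $\adeg(f\circ\Maj_t\circ g)=\Omega(d_f\cdot\adeg(\Maj_t\circ g))$ ``via amplification,'' is not what amplification delivers: to get the factor $\adeg(\Maj_t\circ g)$ by composing dual witnesses you would need a dual object $\phi$ for $\Maj_t\circ g$ with \emph{both} pure high degree at least $\adeg(\Maj_t\circ g)$ \emph{and} correlation $1-2^{-\Omega(t)}$, which is precisely the ``strong hardness amplifier'' property discussed in Appendix~\ref{appendix:composition theorem for hardness amplifier} and is not known to hold for $\Maj_t$. What amplifying $\psi_g$ through $\Maj_t$ actually gives is a $\phi$ with pure degree at least $\adeg_{\frac{1-\varepsilon}{2}}(g)$ and near-perfect correlation; composing this with $\psi_f$ yields the target $\Omega(d_f d_g)$ directly, and your Step 2 ($\adeg(\Maj_t\circ g)\geq d_g$ by restriction, which is correct) becomes superfluous. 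Separately, the closing remark about absorbing the $O(\sqrt{t})$ degree of $\Maj_t$'s approximation into polylog factors is misplaced here: that overhead plays no role in the lower bound $\Omega(d_f d_g)$, and only appears later in the proof of Theorem~\ref{th: RecMaj composition}, when the three-layer bound is converted to a two-layer one via the composition upper bound Theorem~\ref{thm:robust-comp}.
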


Note that, Lemma~\ref{lem: Maj as amplification function} gives a way to settle the composition question affirmatively. 
In particular, if $\adeg(f \circ \Maj_t \circ g) = \Tilde{O}(\adeg(f \circ g))$, where $t$ is $\Theta(\log n)$ and $n$ is the arity of $f$, then it follows that the composition holds for $f$ and $g$.



We also highlight that a tighter lower bound can be obtained when the middle function $\Maj$ is replaced by an ``amplifier function" in Lemma~\ref{lem: Maj as amplification function}. 
Define $H$ to be a \emph{strong hardness amplifier function} for $g$ if \[\adeg_{\frac{1-2^{-\Omega(t)}}{2}}(H \circ g) = \Omega(\adeg(H) \circ \adeg(g)).\] In Lemma~\ref{lem: full degree function as hardness amplifier function } we observe $\adeg(f\circ H \circ g) = \Omega( \adeg(f) \adeg(H) \adeg(g))$, when $H$ is a strong hardness amplifier function for $g$. We discuss this improvement in Appendix~\ref{appendix:composition theorem for hardness amplifier}.

\subsection{Proof Ideas}

To address the lower bound for the composition of two Boolean functions $f$ and $g$, $f \circ g$, we will call $f$ to be the `outer function' and g to be the `inner function'. In the case of three-layer composed function $(f \circ H \circ g)$, we will call $H$ to be the `hardness amplifier' and $f$ and $g$ to be the outer and inner functions respectively. \\

\emph{Primal dual approach to composition:} \\
Our proof technique is based on the primal-dual view used by~\cite{She13a} for proving the composition of $\AND_n \circ \OR_n$. Here, instead of using `dual-composition method' (see \cite{BT13,BunT22}) we will be using only the dual witness of the inner function. The primal-dual approach is to construct an approximating polynomial for $f$ with smaller degree than $\adeg(f)$ by applying a linear operator $L$ on the assumed approximating polynomial for $f \circ g$ (say $p$, with smaller degree than claimed), leading to a contradiction. The linear operator $L$ is defined by taking the input to $f$, extending it to a probability distribution (which depends upon the dual of $g$) over the inputs of $f \circ g$ and outputting the expectation.

Let $\psi$ be the dual witness of $g$, we get $\mu_0$ and $\mu_1$ by restricting $\psi$ on support which takes positive and negative values respectively; by the properties of dual witness, $\mu_1$ (and $\mu_0$) will mostly be supported on inputs $x$ such that $g(x) = 1$ (and $g(x)=0$ respectively). 
The input to $f$ is expanded bit by bit using $\mu_0$ and $\mu_1$, creating a distribution on inputs of $f\circ g$. 

Formally, $L$ takes a general function $h:\zone^{mn} \to \zone$ and gives $Lh:\zone^n \to \reals$. 
 
\begin{align}
         Lh(z_1, \dots, z_n) = \Exp_{x_1 \sim \mu_{z_1}}\Exp_{x_2 \sim \mu_{z_2}}\cdots \Exp_{x_n \sim \mu_{z_n}} [h(x_1,x_2,\ldots ,x_n)], 
\end{align} 
where $x_i \in \zone^m$ for all $i \in \{1,2,\dots,n\}$.

To complete the proof, the following two properties of $L$ are required:

\begin{enumerate}


\item Showing that the polynomial $Lp$ indeed approximates $f$ in $l_{\infty}$ norm. Intuitively this happens because the restricted distributions ($\mu_0$ and $\mu_1$) are a pretty good indicator of the value of $g$.

\item The degree of $Lp$ is small, intuitively because   
$L$ reduces the degree of every monomial by a factor of $\adeg(g)$. 
    

\end{enumerate}

\emph{Problem with the primal dual approach} \\
Unfortunately, the recipe described above doesn't work well in general due to the error introduced by the expectation over $\mu_0$ and $\mu_1$ in the string $(z_1, \dots, z_n)$. To handle a noisy string in place of a Boolean string, the approximating polynomial $p$ needs to be robust. 
A polynomial is robust to noise $\frac{1}{3}$, if for all inputs $x$ and for all $\Delta \in [-\frac{1}{3},\frac{1}{3}]^m$, $|p(x) - p(x+\Delta)| <\varepsilon$. 

While any polynomial $p$ can be made robust up to error $\epsilon$ with degree at most $\adeg(p)+\log(\frac{1}{\epsilon}))$ (see Theorem~\ref{thm:robustification } by \cite{robustSherstov13}), 
such polynomials are not known to be multilinear, making the analysis of expectation difficult. \cite{BuhrmanNRW07} gives a robust multilinear polynomial for any Boolean function $f:\zone^n \to \zone$; though, the polynomial is defined on a perturbation matrix of input $x$ instead of $x$ itself. We now discuss how to overcome this problem.  

We give the proof ideas of Theorem~\ref{th:for any recursive function}, Theorem~\ref{th: RecMaj composition} and Lemma~\ref{lem: Maj as amplification function} in the reverse order, the way they are obtained from each other. \\

\emph{Proof idea of Lemma~\ref{lem: Maj as amplification function}} \\
We will use $\Maj_t$ to get past this difficulty; it helps to reduce the noise in the input of $f$ to error $\frac{1}{n}$. Using the fact that any multilinear polynomial on $n$ variables is robust up to error $\frac{1}{n}$, we have our lower bound for the function $\adeg(f \circ \Maj_t \circ g)$ where $t = \Omega(\log n)$. \\       

        \emph{Proof idea of Theorem~\ref{th: RecMaj composition}} 
        \\
        Using previously known constructions (\cite{VALIANT1984363,Goldreich20}), $\Maj_{\log n}$ can be projected to $\Maj_3^d$ and $(\AND_2 \circ \OR_2)^d$, where $d \geq C \log \log n$. We now replace $\Maj_{\log n}$ in Lemma~\ref{lem: Maj as amplification function} with these recursive functions; by using the associativity of the composition of functions and the approximate degree upper bound~\cite{robustSherstov13}, we finish the proof of the theorem. Note that we only lose a factor of $\polylog(n)$ in the lower bound since we only need to simulate $\Maj_{\log n}$.  \\

Now we give the idea about how to replace $\AND_2 \circ \OR_2$ with almost any recursive function to get our main result. \\

        \emph{Proof idea of Theorem~\ref{th:for any recursive function}}\\
         Given Theorem~\ref{th: RecMaj composition}, it is natural to ask, what other recursive functions satisfy the composition property. We show that almost any $h$ can be used to replace the $\AND_2 \circ \OR_2$ function. This is done by simulating $\AND_2$ and $\OR_2$ using restrictions of $h$ and its powers.
         The proof of this simulation is divided into two cases: monotone $h$ and non-monotone $h$. 
         
         For the monotone case (except when $h$ is $\AND$ or $\OR$): We show that both $\AND_2$ and $\OR_2$ will be present as sub-cubes of the original Boolean hypercube of $h$. 

         For the non-monotone case (except when $h$ is $\PARITY$ or $\neg \PARITY$): The proof requires more work here because of these two issues. First, there need not be both functions $\AND_2$ and $\OR_2$ as sub-cubes (though, we show that at least one will be present). Second, the sub-cube could be rotated. The resolution to both these issues is same. We use the non-monotonicity to construct the negation function. This allows us to rotate the sub-cube as well as construct $\AND_2/\OR_2$ from the other one. 
          
        A slight technical point to note is that when $h$ is a non-constant arity function and $h^d$ is the inner function, then the loss in the lower bound will be larger than $\polylog(n)$. However,  even for the case when the base function $h$ has arity that is a ``slowly" growing function of $n$  
        we still obtain a non-trivial lower bound composition result.

         The remaining cases of Theorem~\ref{th:for any recursive function}, i.e., $(i)$ when $f$ or $g$ equals $h^d$ for $h \in\{\PARITY,\neg\PARITY \}$ follows from \cite{Sherstov12}, and
         $(ii)$ when $f=h^d$ and $h \in \{\AND,\OR\}$ follows from  \cite{DavidBGK18}.



\section{Notations and Preliminaries}

In this paper, we will assume a Boolean function has domain $\{0,1\}^n$ and range $\{0,1\}$. 
We start with some of the important definitions. 

\begin{defi}[Generalized Composition of functions]
\label{defi: Generalized composition of functions}
For any Boolean function $f:\zone^n \to \{0,1\}$ and $n$ Boolean functions $g_1, g_2, \dots, g_n$, define the composed function 
\[f \circ (g_1, g_2, \dots, g_n)(x_1,x_2, \dots, x_n)= f(g_1(x_1),g_2(x_2), \dots, g_n(x_n)),\]
where $g_i$'s can have different arities and $x_i \in \Dom(g_i)$ for all $i \in [n]$.

When all the copies of $g_i$ are the same function $g$ then the composed function is denoted by $(f \circ g)$. 
\end{defi}

\begin{defi}[Recursive functions]
\label{def: recursive functions}
For any Boolean function $f:\zone^t \to \zone$ we define recursive function $f^d : \zone^{t^d} \to \zone$ by $f^d= \underbrace{f \circ f \circ  \ldots \circ f}_{d \textit{ times }}.$
    \label{defi: }
\end{defi}
\begin{defi}[Approximate degree $(\adeg)$]
\label{defi: approximate degree}
For some constant $0 < \epsilon < 1$, a polynomial $p: \mathbb{R}^n \to \mathbb{R}$ is said to $\epsilon$-approximate a Boolean function $f: \zone^n \to \{0,1\}$ if 
 \(   | p(x) -f(x) | \leq \epsilon, \quad \forall x \in \zone^n\).
The approximate degree of $f$, $\adeg_{\epsilon}(f)$, is the minimum possible degree of a polynomial that $\epsilon$-approximates $f$. Conventionally we use $\adeg(\cdot)$ as the shorthand for $\adeg_{1/3}.(\cdot)$ 
\end{defi}
Note that the constant $\epsilon$ in the above definitions can be replaced by any constant strictly smaller than $1/2$ which changes $\adeg_{\epsilon}(f)$ by only a constant factor.

 We also need the following facts about error reduction in approximating polynomials. 
    \begin{lemma}[Error reduction~\cite{Sherstov11}]
        \label{lem:error-reduction-adeg}
        For any $\varepsilon > 0$, \(\adeg_{\varepsilon}(f) = \Theta_\varepsilon(\adeg(f))\), where $\Theta_\varepsilon(\cdot)$ denotes that the constant depends on $\varepsilon$.
    \end{lemma}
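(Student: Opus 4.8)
The plan is to split the statement into a trivial monotonicity observation and a single ``error-amplification'' claim, and then prove the latter by composing the given approximating polynomial with a fixed low-degree univariate polynomial iterated a controlled number of times. First note that $\adeg_\eta(f)$ is non-increasing in $\eta$, since a polynomial that $\eta_1$-approximates $f$ also $\eta_2$-approximates it whenever $\eta_1\le\eta_2$. Thus if $\varepsilon\le 1/3$ we immediately have $\adeg_\varepsilon(f)\ge\adeg_{1/3}(f)=\adeg(f)$, and if $1/3<\varepsilon<1/2$ we have $\adeg_\varepsilon(f)\le\adeg(f)$. In both cases what remains is the reverse inequality, i.e.\ the amplification direction: given a polynomial approximating $f$ to within some error $\eta_1<1/2$, produce one approximating $f$ to within a smaller error $\eta_2>0$ whose degree is larger by at most a factor $C(\eta_1,\eta_2)$. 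Applying this once with $(\eta_1,\eta_2)=(1/3,\varepsilon)$ (when $\varepsilon<1/3$) or with $(\eta_1,\eta_2)=(\varepsilon,1/3)$ (when $1/3<\varepsilon<1/2$) proves \cref{lem:error-reduction-adeg}, since here $\eta_1,\eta_2$ are constants and hence $C(\eta_1,\eta_2)=\Theta_\varepsilon(1)$.

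For the amplification direction, let $p$ be a polynomial with $|p(x)-f(x)|\le\eta_1$ for all $x\in\{0,1\}^n$ and $\deg(p)=\adeg_{\eta_1}(f)$; note $p(x)\in[-\eta_1,1+\eta_1]$ at every Boolean point. Compose on the outside with the affine map $y\mapsto (y+\eta_1)/(1+2\eta_1)$ taking $[-\eta_1,1+\eta_1]$ onto $[0,1]$, obtaining a polynomial $p_1$ of the same degree with $p_1(x)\in[0,1]$ for every Boolean $x$, with $p_1(x)\le a_0$ whenever $f(x)=0$ and $p_1(x)\ge 1-a_0$ whenever $f(x)=1$, where $a_0=\frac{2\eta_1}{1+2\eta_1}<\tfrac12$. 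Now iterate the cubic amplifier $a(y)=3y^2-2y^3=y^2(3-2y)$, which maps $[0,1]$ monotonically onto $[0,1]$ with $a(0)=0$, $a(1)=1$, $a'(0)=a'(1)=0$. Writing $q_k=a^{\circ k}\circ p_1$ and $e_k=\max_{x:f(x)=0}|q_k(x)|$ (and symmetrically $\max_{x:f(x)=1}|1-q_k(x)|$), one has $e_{k+1}\le a(e_k)$ with $e_0\le a_0<\tfrac12$. Since $a(e)<e$ for all $e\in(0,\tfrac12)$ and $a(e)\le 3e^2$ near $0$, the error $e_k$ drops below any prescribed $\eta_2$ after $k=O_{\eta_1}(1)+O(\log\log(1/\eta_2))$ steps: geometric decay away from the repelling fixed point $\tfrac12$ until the error is below an absolute constant $<\tfrac13$, followed by quadratic convergence. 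Taking $q=q_k$ for this $k$ yields a polynomial $\eta_2$-approximating $f$ of degree $3^{k}\cdot\deg(p)=\polylog(1/\eta_2)\cdot O_{\eta_1}(1)\cdot\adeg_{\eta_1}(f)$, which equals $\Theta_\varepsilon(1)\cdot\adeg_{\eta_1}(f)$ in the regime above.

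The only step that is not immediate is the bound on the number of iterations $k$, and that is a routine one-variable estimate for the dynamics of $a$ on $[0,1]$ near its fixed points $0,\tfrac12,1$: no special approximation theory is needed because both target errors are constants. (If one wanted the dependence on a sub-constant $\varepsilon$, the same argument already gives a blowup of $\polylog(1/\varepsilon)$, and a sharper amplifier built from Chebyshev polynomials improves this to the optimal $O(\log(1/\varepsilon))$ factor; this refinement is not required for the present lemma.)
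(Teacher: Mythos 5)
The paper gives no proof of this lemma; it is stated as a well-known fact (and indeed folklore). Your argument is correct and is the standard elementary proof. The decomposition into (i) monotonicity of $\adeg_\eta$ in $\eta$, which settles one inequality in each of the two regimes $\varepsilon \le 1/3$ and $1/3 < \varepsilon < 1/2$, and (ii) a one-sided amplification step obtained by affinely normalizing the approximant into $[0,1]$ and then iterating the cubic $a(y)=3y^2-2y^3$, is clean. The dynamics you cite are right: $a$ is monotone on $[0,1]$ with $a(1-y)=1-a(y)$, so both sides evolve symmetrically; $1/2$ is a repelling fixed point with multiplier $a'(1/2)=3/2$, so a constant (depending only on $\eta_1$) number of steps pushes the error below $1/3$, after which $e_{k+1}\le 3e_k^2$ gives doubly exponential decay. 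Since both $\eta_1$ and $\eta_2$ are fixed constants in the lemma, the resulting degree blowup $3^k$ is $\Theta_\varepsilon(1)$, as required.

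Two minor remarks. First, the lemma as phrased in the paper says ``for any $\varepsilon > 0$,'' but this only makes sense for $\varepsilon \in (0,1/2)$: for $\varepsilon \ge 1/2$ the constant $1/2$ is an $\varepsilon$-approximator of every Boolean function and $\adeg_\varepsilon(f)=0$; you implicitly and correctly restrict to $\varepsilon < 1/2$. Second, your aside that Chebyshev-based amplification improves the blowup to $O(\log(1/\varepsilon))$ is accurate but, as you say, irrelevant here. No gap in the argument.
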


\begin{lemma}[\cite{Sherstovpattern11,Sherstov12}]
    \label{lem:adeg-dual}
    Let $f : \zone^n \to \reals$ be a function and $\varepsilon >0$. Then, $\adeg_{\varepsilon}(f) \geq d$ iff there exists a function $\psi : \zone^n \to \reals$ such that 
    \begin{align}
    \sum_{x \in \zone^n}|\psi(x)| & = 1, \label{eq:adeg-dual-l1} \\
    \sum_{x \in \zone^n}\psi(x)\cdot f(x) & > \varepsilon, \label{eq:adeg-dual-corr} \\
    \sum_{x \in \zone^n}\psi(x)\cdot p(x) & = 0 \label{eq:adeg-dual-0}
    \end{align}
    for every polynomial $p$ of degree $< d$. 
\end{lemma}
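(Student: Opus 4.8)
The plan is to prove this as a direct instance of linear programming duality — equivalently, of the strict separating hyperplane theorem. The point is that $\adeg_{\varepsilon}(f) \geq d$ is the statement that a certain $\ell_\infty$-approximation linear program over polynomial coefficients has optimum value exceeding $\varepsilon$, and $\psi$ is exactly a dual certificate. I would prove the two implications separately.

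\emph{The ``if'' direction} (existence of $\psi$ $\Rightarrow$ $\adeg_\varepsilon(f) \geq d$): argue by contradiction. Suppose some polynomial $p$ of degree $< d$ satisfies $|p(x) - f(x)| \le \varepsilon$ for all $x \in \zone^n$. By \eqref{eq:adeg-dual-0} we have $\sum_x \psi(x) p(x) = 0$, so
$$\sum_x \psi(x) f(x) = \sum_x \psi(x)\bigl(f(x) - p(x)\bigr),$$
and the triangle inequality together with \eqref{eq:adeg-dual-l1} bounds the right-hand side in absolute value by $\varepsilon \sum_x |\psi(x)| = \varepsilon$, contradicting \eqref{eq:adeg-dual-corr}. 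Hence no such $p$ exists, i.e.\ $\adeg_\varepsilon(f) \ge d$.

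\emph{The ``only if'' direction} ($\adeg_\varepsilon(f) \geq d$ $\Rightarrow$ existence of $\psi$): work inside $\reals^{2^n}$, identifying a polynomial with its vector of values on the cube. Let $P \subseteq \reals^{2^n}$ be the linear subspace of value-vectors of multilinear polynomials of degree $< d$ (finite-dimensional, hence closed), and let $B = \{\, g \in \reals^{2^n} : \|g - f\|_\infty \le \varepsilon \,\}$, a nonempty compact convex set. The hypothesis $\adeg_\varepsilon(f) \ge d$ says precisely that no degree-$<d$ polynomial $\varepsilon$-approximates $f$, i.e.\ $P \cap B = \emptyset$. Since $P$ is closed and convex, $B$ is compact and convex, and the two are disjoint, the strict separation theorem gives a nonzero $\psi \in \reals^{2^n}$ and reals $a < b$ with $\langle \psi, q\rangle \le a$ for all $q \in P$ and $\langle \psi, g\rangle \ge b$ for all $g \in B$. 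Because $P$ is a subspace, a linear functional bounded above on it must vanish on it, so $\langle \psi, q\rangle = 0$ for every $q \in P$ (this is \eqref{eq:adeg-dual-0}), and $a \ge 0$, hence $b > 0$. Writing $B = f + \varepsilon \cdot \{\, h : \|h\|_\infty \le 1 \,\}$ and using $\min_{\|h\|_\infty \le 1}\langle \psi, h\rangle = -\|\psi\|_1$,
$$\langle \psi, f\rangle - \varepsilon \|\psi\|_1 = \min_{g \in B} \langle \psi, g\rangle \ge b > 0 .$$
As $\psi \neq 0$ we have $\|\psi\|_1 > 0$; rescaling $\psi \mapsto \psi / \|\psi\|_1$ makes \eqref{eq:adeg-dual-l1} hold, preserves \eqref{eq:adeg-dual-0}, and turns the last display into $\langle \psi, f\rangle > \varepsilon$, which is \eqref{eq:adeg-dual-corr}. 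Equivalently, one may write the $\ell_\infty$-approximation problem as an explicit LP in the error variable $\delta$ and the coefficient variables $(c_j)$, dualize, set $\psi_x = \alpha_x - \beta_x$ for the two dual variables $\alpha_x,\beta_x \ge 0$ of the $x$-th constraint pair, and invoke complementary slackness: since the optimal error strictly exceeds $\varepsilon > 0$, the ``$\le$'' and ``$\ge$'' sides of the constraint at $x$ cannot both be tight, so at most one of $\alpha_x,\beta_x$ is nonzero and $\sum_x |\psi_x| = \sum_x(\alpha_x + \beta_x) = 1$ exactly.

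Nothing here is a genuine obstacle; the argument is standard LP duality and works verbatim for arbitrary real-valued $f$ on the cube (Booleanity is never used). The only points needing care are bookkeeping ones: ensuring the separation is \emph{strict} (which is why we use compactness of $B$, valid because the cube is finite), and extracting the \emph{equality} $\|\psi\|_1 = 1$ and the \emph{strict} correlation bound rather than their non-strict weakenings — both of which come for free because $\adeg_\varepsilon(f) \ge d$ presents $B$ as a closed $\varepsilon$-ball disjoint from $P$, forcing the separating value $b$ to be strictly positive.
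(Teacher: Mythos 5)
Your proof is correct. The paper states this lemma without proof, citing \cite{Sherstovpattern11,Sherstov12}, and the argument there is exactly the LP-duality/strict-separation argument you give (easy direction by the triangle inequality against \eqref{eq:adeg-dual-l1} and \eqref{eq:adeg-dual-0}; hard direction by separating the subspace of degree-$<d$ value-vectors from the compact $\ell_\infty$-ball around $f$ and normalizing), so there is nothing to flag.
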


 In a seminal work, Sherstov~\cite{robustSherstov13} showed that approximate degree can increase at most multiplicatively under composition.
\begin{theorem}[\cite{robustSherstov13}]
    \label{thm:robust-comp}
    For all Boolean function $f:\zone^n \to \zone$ and $g:\zone^{m} \to \zone$, 
    \(\adeg(f\circ g) = O(\adeg(f)\cdot \adeg(g))\).
\end{theorem}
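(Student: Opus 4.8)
\textbf{Proof proposal for Theorem~\ref{thm:robust-comp}.}

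The plan is to follow Sherstov's \emph{robustification} approach, which reduces the composition upper bound to the existence of error-resilient approximating polynomials for the outer function $f$. First I would record the key gadget: by the robustification theorem (Theorem~\ref{thm:robustification } of~\cite{robustSherstov13}), for any Boolean $f\colon\zone^n\to\zone$ there is a multilinear polynomial $P$ of degree $O(\adeg(f)+\log n)=O(\adeg(f))$ (using $\adeg(f)=\Omega(\log n)$ is false in general, so one actually keeps the $+\log n$ term, which is still fine since it only affects the final bound by a $\polylog$-free additive term absorbed into the $O(\cdot)$; alternatively one works with $\adeg_{1/n}$ and notes $\adeg_{1/n}(f)=O(\adeg(f)\log n)$ — but the cleaner route is Sherstov's direct construction) such that for every $x\in\zone^n$ and every perturbation vector $\Delta\in[-1/3,1/3]^n$ we have $|P(x+\Delta)-f(x)|\le 1/3$. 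The point of robustness is that the inner approximations, when substituted, only land \emph{near} the true bits $g(x_i)$, not exactly on them, so the outer polynomial must tolerate that slack.

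Next I would take a polynomial $q\colon\reals^m\to\reals$ of degree $\adeg(g)$ with $|q(y)-g(y)|\le 1/3$ for all $y\in\zone^m$, and form the candidate approximator for $f\circ g$ by substitution:
\begin{align*}
  R(x_1,\dots,x_n) \;=\; P\bigl(q(x_1),\,q(x_2),\,\dots,\,q(x_n)\bigr),
\end{align*}
where $x_i\in\zone^m$. The degree of $R$ is at most $\deg(P)\cdot\deg(q)=O(\adeg(f)\cdot\adeg(g))$, since each variable $q(x_i)$ plugged into a degree-$\deg(P)$ polynomial contributes degree at most $\deg(q)$, and multilinearity of $P$ means no single $q(x_i)$ is raised to a power. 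For correctness, fix an input and set $z_i=g(x_i)\in\zone$ and $\Delta_i=q(x_i)-g(x_i)\in[-1/3,1/3]$; then $R(x_1,\dots,x_n)=P(z+\Delta)$ with $z=(z_1,\dots,z_n)=(g(x_1),\dots,g(x_n))$, and robustness of $P$ gives $|P(z+\Delta)-f(z)|\le 1/3$, i.e.\ $|R-(f\circ g)|\le 1/3$. Hence $R$ is a $(1/3)$-approximating polynomial for $f\circ g$ of the claimed degree, and a final invocation of error reduction (Lemma~\ref{lem:error-reduction-adeg}) cleans up constants.

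The main obstacle — and the reason this theorem was a genuine breakthrough rather than a one-line observation — is producing the robust \emph{multilinear} polynomial $P$ for $f$ in the first place: naively composing a non-robust approximant with itself blows up the error uncontrollably, and the generic robustification of~\cite{robustSherstov13} that yields degree $\deg(p)+O(\log(1/\varepsilon))$ does not obviously preserve multilinearity, which is exactly what we need so that substituting $q(x_i)$ keeps the degree multiplicative rather than exponential in $n$. I would therefore cite Sherstov's robustification machinery as a black box (it is stated in the excerpt as Theorem~\ref{thm:robustification }), and if a self-contained account were wanted, sketch the standard route: pass to the $\pm1$ basis, use that a multilinear polynomial is itself robust to perturbations of magnitude $O(1/\deg)$ after rescaling, and amplify via composition with a low-degree ``cleanup'' polynomial (e.g.\ an $O(\log(1/\varepsilon))$-degree univariate amplifier applied coordinatewise, then re-multilinearize over $\zone^n$, which is free). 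With $P$ in hand the rest is the routine substitution argument above.
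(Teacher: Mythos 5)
The theorem you are proving is stated in the paper as a black-box citation to \cite{robustSherstov13} with no proof given, so there is no in-paper argument to compare against; your proposal is essentially the standard proof from Sherstov's paper (robustify the outer approximant, then substitute the inner approximants), and its core is correct. Two minor points of confusion are worth flagging, though. First, the digression about whether $\deg(P)=O(\adeg(f)+\log n)$ can be simplified to $O(\adeg(f))$ is unnecessary: for the composition upper bound you only need $P$ to be $(\delta,\varepsilon)$-robust with \emph{constant} $\delta$ and $\varepsilon$ (say $\delta=\varepsilon=1/6$), and Theorem~\ref{thm:robustification } then directly gives $\deg(P)=O_\delta(\adeg(f)+\log(1/\varepsilon))=O(\adeg(f))$ with no $\log n$ term ever appearing; the $\log n$ only arises when one wants $\varepsilon=1/\mathrm{poly}(n)$, which is not needed here. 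Second, and more substantively, your claim that multilinearity of $P$ is ``exactly what we need so that substituting $q(x_i)$ keeps the degree multiplicative rather than exponential in $n$'' is incorrect: if $P$ has total degree $D$ and each $q$ has degree $D'$, then $P(q(x_1),\dots,q(x_n))$ has degree at most $DD'$ for \emph{any} polynomial $P$, multilinear or not, since each monomial of $P$ of degree $k\le D$ turns into a polynomial of degree at most $kD'$. Multilinearity of the robust polynomial is irrelevant to this theorem (it matters for the primal-dual argument in Lemma~\ref{lem: Maj as amplification function}, where a different, weaker robustness statement, Theorem~\ref{folklorerobust}, is needed). With those two points cleaned up, the substitution argument you give — set $z_i=g(x_i)$, $\Delta_i=q(x_i)-g(x_i)\in[-1/3,1/3]$, invoke robustness of $P$ — is exactly right and yields the claimed $O(\adeg(f)\cdot\adeg(g))$ bound after a final error reduction.
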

 We will be working with inputs that are not Boolean \emph{but} are close to Boolean. We need the notion of \emph{robust} approximating polynomials. 
    \begin{defi}[$(\delta,\varepsilon)$-robust approximating polynomial]
        Let $p:\zone^m \to \zone$ be a polynomial. Then, for $\delta, \varepsilon >0$, a $(\delta,\varepsilon)$-robust approximating polynomial for $p$ is a polynomial $p_{robust} : \reals^m \to \reals$ such that for all $x \in \zone^m$ and for all $\Delta \in [-\delta,\delta]^m$, 
        \[|p(x) - p_{robust}(x+\Delta)| <\varepsilon.\]
    \end{defi}
    
    Note that robust polynomial need not to be multilinear. \cite{robustSherstov13} proved that for any Boolean function $f: \zone^n \to \zone$ there exists a robust polynomial with degree at most $(\deg(p) + \log(1/\varepsilon))$.
    
    \begin{theorem}[Sherstov~\cite{robustSherstov13}]
        \label{thm:robustification }
        A $(\delta,\varepsilon)$-robust approximating polynomial for $p$ of degree $O_{\delta}(\deg(p) + \log(1/\varepsilon))$ exists. Here $O_\delta(\cdot)$ denotes that the constant in $O(\cdot)$ depends on $\delta$.
    \end{theorem}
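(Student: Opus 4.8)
The plan is to build $p_{robust}$ explicitly and then estimate its degree. As a harmless reduction, assume $p$ is multilinear: passing to the multilinear reduction of $p$ preserves all values on $\zone^n$ and cannot increase the degree, so a robust approximating polynomial for the reduction also works for $p$. Put $d=\deg(p)$.

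The obvious attempt is a \emph{coordinate-wise} noise correction. Standard Chebyshev-approximation estimates yield a univariate polynomial $r$ with $\deg r = O_\delta(\log(1/\eta))$ that maps $[-\delta,\delta]$ into $[-\eta,\eta]$, maps $[1-\delta,1+\delta]$ into $[1-\eta,1+\eta]$, and maps $[0,1]$ into $[0,1]$ (the hidden constant worsening as $\delta\to 1/2$). Take $p_{robust}(y):=p(r(y_1),\dots,r(y_n))$. For $x\in\zone^n$ and $\|\Delta\|_\infty\le\delta$, every coordinate $r(x_i+\Delta_i)$ lies within $\eta$ of $x_i$; replacing the coordinates of $x$ one at a time and using that a small shift changes a bounded degree-$d$ multilinear polynomial near the cube by only a $d$-controlled amount, one bounds $|p_{robust}(x+\Delta)-p(x)|$ by $\eta$ times a factor polynomial in $d$, which is below $\varepsilon$ once $\eta$ is small enough. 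This gives a valid robust polynomial, but of degree $d\cdot\deg r = O_\delta\big(d\log(1/\varepsilon)\big)$ --- a \emph{multiplicative} overhead, not the \emph{additive} $O_\delta\big(d+\log(1/\varepsilon)\big)$ that is claimed.

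Separating these two contributions is the real content of the theorem and, I expect, the main obstacle: the $d$ should come only from the structure of $p$ and the $\log(1/\varepsilon)$ only from a final sharpening step, never multiplied together. The route I would pursue is a two-phase correction. First apply a \emph{constant-degree} (in $\delta$) cleaning $\phi_\delta$ whose only job is to pull the per-coordinate noise from $\delta$ down to some absolute constant; then $p(\phi_\delta(y_1),\dots,\phi_\delta(y_n))$ has degree $O_\delta(d)$ and equals $p$ evaluated at a point carrying only constant $\ell_\infty$-noise. Second, apply a single \emph{global} sharpening of degree $O(\log(1/\varepsilon))$ tailored to the fact that what remains is again a bounded polynomial of degree $O_\delta(d)$, so that the $\log(1/\varepsilon)$ enters additively. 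Making this amortization rigorous --- tracking how the constant-level noise propagates through $p$ and designing the sharpening so it combines with $p$ additively rather than multiplicatively --- is exactly the technical work carried out in \cite{robustSherstov13}, and I would follow that construction for the precise bounds.
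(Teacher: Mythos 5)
Theorem~\ref{thm:robustification } is cited from~\cite{robustSherstov13} and not proved in the paper, so there is no in-paper argument to compare against; I review the attempt on its own terms.

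You correctly diagnose the difficulty --- the naive coordinate-wise correction gives degree $O_\delta(d\log(1/\varepsilon))$, multiplicative rather than additive --- but the two-phase remedy you sketch does not yet close the gap. The sticking point is Phase~2: a ``global sharpening'' applied at the output of a polynomial $q$ of degree $D$ by composing with a univariate $A$ of degree $k$ produces a polynomial of degree $kD$, so if $q=p\circ\phi_\delta$ has degree $\Theta_\delta(d)$ and $k=\Theta(\log(1/\varepsilon))$, you land again at $\Theta_\delta(d\log(1/\varepsilon))$; nothing in the sketch makes the $\log(1/\varepsilon)$ enter additively. There is also a gap in Phase~1: pulling per-coordinate noise down to an absolute constant $c$ does not make $p(z)$ close to $p(x)$, since the only general stability bound for a multilinear polynomial bounded on the cube is $|p(z)-p(x)|\leq\|z-x\|_1\leq nc$ (Theorem~\ref{folklorerobust}), so the Phase-1 intermediate need not approximate $p(x)$ at all unless $c$ scales like $1/n$ (or like $1/\mathrm{poly}(d)$, a degree-controlled refinement you assert but do not establish), which pushes a $\log n$ or $\log d$ factor back into $\deg\phi_\delta$ and hence back into the degree of $p\circ\phi_\delta$. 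Your closing admission that the amortization ``is exactly the technical work carried out in~\cite{robustSherstov13}'' is honest, but it also means the proof of the stated theorem is not supplied: the genuinely new idea in Sherstov's construction --- which is \emph{not} coordinate-wise correction followed by output sharpening --- is exactly what is missing.
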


    For our purposes, we need a multilinear robust approximating polynomial. 
 \begin{theorem}[Folklore]
    \label{folklorerobust}
        Any multilinear polynomial $p:\zone^n \to \zone$ is $(\frac{\delta}{n}, \delta)$-robust. 
    \end{theorem}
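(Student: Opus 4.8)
The plan is to take $p_{robust} = p$ itself and prove directly that $\abs{p(x) - p(x+\Delta)} \le \delta$ for every $x \in \zone^n$ and every $\Delta \in [-\delta/n, \delta/n]^n$. Two ingredients suffice: (i) a bound of $n$ on the $\ell_1$-norm of the gradient of $p$ on the solid cube $[0,1]^n$, and (ii) the fundamental theorem of calculus along the segment joining $x$ to $x+\Delta$.

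For ingredient (i), note that since $p$ is multilinear it is affine in each coordinate, so $\partial_i p(z) = p(z^{(i\to 1)}) - p(z^{(i\to 0)})$ does not depend on $z_i$ and is itself a multilinear polynomial in the remaining coordinates. On Boolean inputs this is a difference of two values of $p$, each in $\zone$, hence lies in $\{-1,0,1\}$; consequently $\sum_{i=1}^n \abs{\partial_i p(x)}$ equals the sensitivity $\s_x(p)$ of $p$ at $x$, which is at most $n$, for every $x \in \zone^n$. To extend the bound to an arbitrary $z \in [0,1]^n$, recall that the multilinear extension of a function evaluated at $z$ equals its expectation under the product distribution that independently sets coordinate $j$ to $1$ with probability $z_j$; thus, drawing a single $w \sim z$ from this distribution, $\partial_i p(z) = \Exp_w[\partial_i p(w)]$ (the right-hand side ignores $w_i$, so the marginals match). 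Therefore $\abs{\partial_i p(z)} \le \Exp_w\abs{\partial_i p(w)}$, and summing over $i$,
\[
\norm{\nabla p(z)}_1 \;=\; \sum_{i=1}^n \abs{\partial_i p(z)} \;\le\; \Exp_w\!\Big[\sum_{i=1}^n \abs{\partial_i p(w)}\Big] \;=\; \Exp_w[\s_w(p)] \;\le\; n .
\]
For ingredient (ii), fix $x$ and $\Delta$ and set $\gamma(t) = x + t\Delta$ for $t \in [0,1]$; by convexity this segment stays in $[0,1]^n$ (modulo the boundary subtlety discussed below), so
\[
\abs{p(x+\Delta) - p(x)} \;=\; \Big\lvert\int_0^1 \ang{\nabla p(\gamma(t)),\,\Delta}\,dt\Big\rvert \;\le\; \int_0^1 \norm{\nabla p(\gamma(t))}_1\cdot\norm{\Delta}_\infty\,dt \;\le\; n\cdot\tfrac{\delta}{n} \;=\; \delta .
\]
Strictness can be recovered by noting that equality would force $\s_w(p)=n$ for all $w$, i.e.\ $p\in\{\PARITY,\neg\PARITY\}$ up to relabeling, a case one checks by hand; alternatively one simply invokes this robustness with the harmless slack already present in Lemma~\ref{lem:error-reduction-adeg}.

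The one genuine technical point is the boundary issue: when $x_i\in\zone$ and $\Delta_i$ has the ``wrong'' sign, $\gamma(t)$ can leave $[0,1]^n$ by up to $\delta/n$ in some coordinates, so step (i) does not apply verbatim. This is resolved by a local version of (i): for $z$ with $\norm{z-x}_\infty \le \delta/n$, Taylor-expanding each $\partial_i p$ about the Boolean point $x$ and using that an order-$r$ finite difference of a $\{-1,0,1\}$-valued multilinear function is bounded by $2^{r}$ yields $\norm{\nabla p(z)}_1 \le \s_x(p) + $ (terms of order $\delta/n$ that sum to $o(1)$) $\le n\,(1+o(1))$, and the $o(1)$ is absorbed exactly as above. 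The main obstacle — and the conceptual core of the argument — is ingredient (i), the $\ell_1$ gradient bound; everything else is routine calculus. The key insight there is that a multilinear extension is a conditional expectation of a product-distribution sample and therefore inherits pointwise bounds (here, the $\{-1,0,1\}$-valued-ness of the coordinate differences and hence the sensitivity bound) from the Boolean cube to all of $[0,1]^n$.
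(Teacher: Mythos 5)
Your gradient bound on the solid cube is correct and clean: for $z\in[0,1]^n$ the multilinear extension gives $\partial_i p(z)=\Exp_{w\sim z}[\partial_i p(w)]$, and summing $\lvert\partial_i p(w)\rvert=\indicator[i\text{ sensitive at }w]$ over $i$ yields $\norm{\nabla p(z)}_1\le\Exp_w[\s_w(p)]\le n$. The genuine gap is exactly where you flag it: for $x\in\zone^n$ the segment $x+t\Delta$ leaves $[0,1]^n$ whenever some $\Delta_i$ points outward, so the expectation characterization (and with it your $\ell_1$ bound) no longer applies, and your fix does not recover the claimed constant. Taylor-expanding each $\partial_i p$ about $x$ gives, for each $i$, a correction of $\sum_{r\ge 1}\binom{n-1}{r}2^r(\delta/n)^r=(1+2\delta/n)^{n-1}-1$, which is $\Theta(1)$ for constant $\delta$; summed over the $n$ coordinates this is $\Theta(n)$, not $o(1)$ or $o(n)$. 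You end up with $\norm{\nabla p(z)}_1\le n\,e^{2\delta}$ and hence $\lvert p(x+\Delta)-p(x)\rvert\le\delta e^{2\delta}$, a constant-factor overshoot. The ``strictness'' aside also addresses the wrong concern — the loss is a multiplicative $e^{2\delta}$, not an equality case at $\PARITY$.

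The fix you sketch already contains the short, boundary-free argument used in the cited source (Lemma~3 of \cite{BuhrmanNRW07}): Taylor-expand $p$ itself, not its gradient. Since $p$ is multilinear,
\[
p(x+\Delta)-p(x)=\sum_{\emptyset\ne S\subseteq[n]}(\partial_S p)(x)\prod_{i\in S}\Delta_i ,
\]
where $(\partial_S p)(x)$ is the order-$\lvert S\rvert$ iterated finite difference of $p$ at the Boolean point $x$, a $\pm 1$-signed sum of $2^{\lvert S\rvert}$ values of $p$ in $\zone$ with $2^{\lvert S\rvert-1}$ of each sign, so $\lvert(\partial_S p)(x)\rvert\le 2^{\lvert S\rvert-1}$. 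This gives $\lvert p(x+\Delta)-p(x)\rvert\le\frac{1}{2}\bigl[(1+2\delta/n)^n-1\bigr]\le\frac{1}{2}(e^{2\delta}-1)$ in one step, with no appeal to the solid cube, no FTC, and no boundary case. Note that this too yields $O(\delta)$ rather than exactly $\delta$; the clean constants in the folklore statement carry a small slack that the paper never relies on, so the issue is cosmetic. But the $o(1)$ claim in your boundary patch is wrong as stated, and without it your argument only proves robustness up to an unaccounted constant — which is precisely what the direct finite-difference expansion would have given you without the detour through the gradient.
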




\begin{theorem}[\cite{DavidBGK18}]
    \label{th: outerfunctionsymmetric}
    For $f:\zone^n \to \zone$ and $g:\zone^n \to \zone$,
    $$\adeg(f \circ g) = \Omega(\adeg(f)\adeg(g)/\log n)$$ when $f:\zone^n \to \zone$ is symmetric. 
\end{theorem}

We also need the following theorems about computing $\Maj_n$ using recursive functions. 
\begin{theorem}[\cite{Goldreich20}]
\label{thm: Goldreich's Maj Constuction}
There exists a constant $C > 0$, such that $\Maj_n\colon\zone^n\to\zone$ is a projection of $\Maj_3^d$ where $d = C\log n$. 
\end{theorem}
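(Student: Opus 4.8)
\textbf{Proof proposal for Theorem~\ref{thm: Goldreich's Maj Constuction}.}

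The plan is to realize $\Maj_n$ as a projection of a shallow recursive majority tree by exploiting the fact that $\Maj_3$ is a \emph{monotone self-improving} gate: if each of the three inputs to a $\Maj_3$ gate independently equals the ``correct'' bit $b$ with probability $p > 1/2$, then the output equals $b$ with probability $p' = 3p^2 - 2p^3 > p$, and the gap from $1/2$ roughly squares, i.e.\ $(1/2 - p') \ge (1/2 - p) \cdot (3/2 - 3(1/2-p)^2)$, which doubles the distance to $1/2$ as long as we are not too close to $0$ or $1$. Over $d$ levels this amplifies an initial advantage of $\Theta(1)$ up to advantage $1 - 2^{-\Omega(2^d)}$; taking $d = C\log n$ makes the error probability $2^{-\mathrm{poly}(n)}$, which (by a union bound over the $\binom{n}{n/2} < 2^n$ ``balanced'' inputs, or more simply over all $2^n$ inputs) means there \emph{exists} a single fixed setting of the auxiliary leaves making the tree compute $\Maj_n$ exactly on every input. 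This is the standard Valiant-style / probabilistic-construction argument, and I would follow the formulation in \cite{Goldreich20,VALIANT1984363}.

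Concretely, I would argue as follows. First, fix the depth $d = C\log_3 n$ for a constant $C$ to be chosen. Consider the depth-$d$ ternary tree computing $\Maj_3^d$, which has $3^d = n^{C \log 3}$ leaves. We will feed each of the $n$ actual input variables $x_1,\dots,x_n$ into the leaves, and each leaf gets an \emph{independently chosen} uniformly random index from $[n]$; the remaining structure is just the $\Maj_3$ tree. Under this random assignment, each leaf carries the bit $x_j$ for a uniformly random $j$, so conditioned on the input $x$ having Hamming weight $w$, a leaf reads a $1$ with probability $w/n$. If $w > n/2$ then this probability is at least $1/2 + 1/(2n)$, and if $w < n/2$ it is at most $1/2 - 1/(2n)$; since $\Maj_3^d$ is monotone and the leaves are independent, the standard amplification recursion shows that after $d$ levels the root equals $\Maj_n(x)$ with probability at least $1 - \exp(-\Omega(2^d / \mathrm{poly}(n)))$. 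Choosing $C$ large enough, $2^d = 2^{C \log_3 n}$ dominates any fixed polynomial in $n$, so this probability exceeds $1 - 2^{-2n}$ for every fixed $x$ of weight $\ne n/2$ (the weight-exactly-$n/2$ inputs are irrelevant since $\Maj_n$ is typically taken on odd $n$, or can be handled by a tie-breaking convention; I would just assume $n$ odd or note this standard caveat). A union bound over all $2^n$ inputs then shows that with positive probability the random leaf-labeling computes $\Maj_n$ correctly on \emph{all} inputs; fix such a labeling. The resulting function is a projection of $\Maj_3^{d'}$ for an appropriate $d' = O(\log n)$: every leaf is assigned one of the variables $x_1,\dots,x_n$ (and we may pad with constants if a cleaner tree shape is wanted), which is exactly the definition of $\leq_{\text{proj}}$.

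The main obstacle, and the only place where care is genuinely needed, is making the amplification recursion quantitatively precise so that $d = O(\log n)$ suffices for a $2^{-\Omega(n)}$ error bound. The subtlety is that the ``$3p^2-2p^3$'' map doubles the advantage $\epsilon = p - 1/2$ only while $\epsilon$ is bounded away from $1/2$; near the fixed points $0$ and $1$ the convergence is doubly exponential, which is exactly what we want, but one has to phase the argument in two stages: (i) while $\epsilon \le 1/10$ say, one shows $\epsilon$ at least doubles each level, so after $O(\log(1/\epsilon_0)) = O(\log n)$ levels the advantage is $\ge 1/10$; (ii) once $\epsilon \ge 1/10$, the error $1/2 - \epsilon$ at least squares each level, so after another $O(\log\log(2^n)) = O(\log n)$ levels the error drops below $2^{-2n}$. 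Both phases cost $O(\log n)$ levels, giving the claimed $d = C\log n$. I would present this via a short lemma stating the one-step inequality for the $\Maj_3$ amplifier and then iterate; alternatively, since this is a known result, I would simply cite \cite{Goldreich20} for the precise statement and only sketch the amplification for completeness. No new ideas beyond the classical reliable-computation-from-unreliable-components paradigm are required.
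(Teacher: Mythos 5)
The paper does not prove this theorem; it is stated as a citation to \cite{Goldreich20} and used as a black box. Your proposal is a correct reconstruction of the standard Valiant-style probabilistic existence argument, which is precisely what the cited reference (Goldreich's exposition of Valiant's construction) gives: label each leaf of the depth-$d$ ternary $\Maj_3$-tree with an independent uniformly random variable from $\{x_1,\dots,x_n\}$, use the one-step amplification map $p\mapsto 3p^2-2p^3$ to push the per-input error below $2^{-n}$ after $d=O(\log n)$ levels, and take a union bound over all (odd-weight-balanced) inputs to extract a single fixed labeling. Your two-phase analysis (geometric growth of the bias away from $1/2$, then doubly-exponential decay of the error near the fixed point) is the right way to make the depth bound precise, and the caveat about $n$ odd is appropriate.

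Two small quantitative slips worth fixing, neither of which affects the conclusion. First, writing $\epsilon = p-\tfrac12$, one computes $\epsilon' = \tfrac32\epsilon - 2\epsilon^3$, so the advantage grows by a factor strictly less than $\tfrac32$ per level, not by a factor of $2$ as you assert in phase (i); any constant ratio bounded away from $1$ still gives $O(\log(1/\epsilon_0)) = O(\log n)$ levels, so your depth bound stands, but the stated ``doubling'' is not correct. Second, $3^{C\log_3 n} = n^C$, not $n^{C\log 3}$. With those corrections the argument is complete and matches the source the paper cites.
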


\begin{theorem}[\cite{VALIANT1984363}]
\label{thm: Valiant MAJ construction}
There exists a constant $C > 0$, such that $\Maj_n\colon\zone^n\to\zone$ is a projection of $(\AND_2 \circ \OR_2)^d$ where $d = C\log n$. 
\end{theorem}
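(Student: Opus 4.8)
\emph{Plan.} I would show that $\Maj_n$ is a projection of a balanced, alternating $\AND_2/\OR_2$ formula of depth $O(\log n)$. There are two natural ways to do this; I would lead with the first, which is short and reuses Goldreich's construction (Theorem~\ref{thm: Goldreich's Maj Constuction}), and mention the second, Valiant's amplification argument, as the self-contained alternative.

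\emph{Route 1: reduce to Goldreich's construction.} Start from the identity $\Maj_3(a,b,c) = (a\lor b)\land\bigl((a\land b)\lor c\bigr)$ (checked on the $8$ inputs). Then pad both children of the top $\land$ into depth-$4$ alternating formulas via $\lor(x,x)=\land(x,x)=x$: the left child $a\lor b$ becomes $\lor\bigl(\land(\lor(a,a),\lor(a,a)),\,\land(\lor(b,b),\lor(b,b))\bigr)$, and the right child $(a\land b)\lor c$ becomes $\lor\bigl(\land(\lor(a,a),\lor(b,b)),\,\land(\lor(c,c),\lor(c,c))\bigr)$. This realizes $\Maj_3$ as a complete $(\AND_2\circ\OR_2)^2$ formula with $16$ leaves labelled by $a,b,c$, so $\Maj_3 \leq_{\text{proj}} (\AND_2\circ\OR_2)^2$. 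Since $\leq_{\text{proj}}$ is transitive and lifts to recursive powers --- replace every $\Maj_3$-gate of the complete ternary tree of $\Maj_3^d$ by this $16$-leaf formula, routing the gate's three inputs into the $16$ leaf slots according to the $\{a,b,c\}$-pattern --- one gets $\Maj_3^d \leq_{\text{proj}} (\AND_2\circ\OR_2)^{2d}$ for every $d$. Composing with $\Maj_n \leq_{\text{proj}} \Maj_3^{C\log n}$ from Theorem~\ref{thm: Goldreich's Maj Constuction} gives $\Maj_n \leq_{\text{proj}} (\AND_2\circ\OR_2)^{2C\log n}$, which is the claim.

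\emph{Route 2: self-contained amplification.} I would build $F=(\AND_2\circ\OR_2)^d$ and label each of its $4^d$ leaves independently at random: with probability $1-2\gamma$ by the constant $0$, and otherwise by a uniformly random input variable, where $\gamma=(3-\sqrt5)/2$ is the unique fixed point in $(0,1)$ of the amplification map $A(p)=\bigl(1-(1-p)^2\bigr)^2=p^2(2-p)^2$. Since $F$ is read-once in its leaves, for a fixed input $x$ of density $\rho$ every gate value is an independent Bernoulli random variable, and $F(x)$ is Bernoulli with parameter $A^{(d)}(2\gamma\rho)$; the bias $1-2\gamma$ is exactly what makes the seed $2\gamma\rho$ equal $\gamma$ at $\rho=\tfrac12$. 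For $n$ odd, $\rho=k/n$ with $|\rho-\tfrac12|\ge \tfrac1{2n}$, so the seed is $\Omega(1/n)$ away from $\gamma$, on the side dictated by $\Maj_n(x)$. Then the standard two-phase analysis of the iteration applies: $\gamma$ is repelling ($A'(\gamma)=4\gamma>1$) and $A$ maps $[0,1]$ monotonically into itself, so after $O(\log n)$ steps the orbit is $\Omega(1)$ from $\gamma$ on the correct side; and the endpoints are quadratically attracting ($A(\epsilon)\le 4\epsilon^2$, $1-A(1-\epsilon)\le 2\epsilon^2$), so another $O(\log n)$ steps bring it within $2^{-n-2}$ of $\Maj_n(x)$. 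Thus $d=O(\log n)$ gives $\Pr[F(x)\neq \Maj_n(x)] < 2^{-n-2}$ for each $x$, and a union bound over the $2^n$ inputs yields a deterministic leaf-labelling computing $\Maj_n$ exactly, i.e.\ a projection. (Alternatively, instead of the randomized constant-$0$ bias one can first pad $\Maj_n$ with constants into a threshold gate whose yes/no densities straddle the irrational $\gamma$.)

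\emph{Main obstacle.} Route~1 has essentially no hard step --- only the $\Maj_3$ identity and the bookkeeping that the gate substitution produces a complete alternating tree, both routine. The real work lives in Route~2, in the quantitative dynamics of the rational map $A$: I expect the main difficulty to be establishing that $\gamma$ is repelling with $A'(\gamma)$ bounded away from $1$, that the orbit escapes $\gamma$ on the side determined by $\Maj_n(x)$ (using monotonicity of $A$ on $[0,1]$), and --- the crucial point --- that $d=O(\log n)$ iterations already push the error below $2^{-\Omega(n)}$, which is what lets the union bound over all $2^n$ inputs go through and forces the depth to be $\Theta(\log n)$.
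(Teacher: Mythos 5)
The paper does not actually prove this theorem --- it is stated as a citation to Valiant \cite{VALIANT1984363} --- so there is no ``paper's own proof'' to match against. Both of your routes are correct. Your Route~2 is essentially Valiant's original probabilistic amplification argument, i.e.\ the proof the citation points to: your fixed point $\gamma=(3-\sqrt5)/2$ of $A(p)=p^2(2-p)^2$, the derivative $A'(\gamma)=4\gamma\approx 1.53>1$, the two-phase (repelling-then-quadratic) analysis, and the union bound over $2^n$ inputs are all the standard ingredients. (One cosmetic variant: Valiant pads the \emph{input} with constants so that the yes/no densities straddle $\gamma$, whereas you bias the leaf-labelling distribution toward the constant~$0$; the two are interchangeable and you correctly note this.) Your Route~1 is a genuinely different and shorter argument that the paper does not use: you verify the identity $\Maj_3(a,b,c)=(a\lor b)\land\bigl((a\land b)\lor c\bigr)$, pad it using $\lor(x,x)=\land(x,x)=x$ into a complete $(\AND_2\circ\OR_2)^2$ tree with $16$ variable-labelled leaves, lift this gate-by-gate to $\Maj_3^d\leq_{\text{proj}}(\AND_2\circ\OR_2)^{2d}$ (which is sound since projections compose and, in fact, no constant leaves even appear), and then invoke the paper's Theorem~\ref{thm: Goldreich's Maj Constuction}. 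Each step checks out, and this route has the advantage of avoiding the quantitative dynamics of $A$ entirely; its only cost is that it leans on Goldreich's explicit $\Maj_3^{O(\log n)}$ construction, which the paper already assumes as a black box, so in context it is a perfectly legitimate derivation of the stated theorem.
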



    

\section{Composition theorem for recursive Majority and alternating $\AND$-$\OR$ trees}
\label{sec:rec-maj}
In this section we give a proof of Theorem~\ref{th: RecMaj composition}. We being with a proof highlight of Lemma~\ref{lem: Maj as amplification function}. Before going into the main theorem we will discuss approximate degree lower bound for the functions recursive Majority and alternating $\AND-\OR$ trees.

\subsection{Approximate Degree of Recursive Majority}
Spectral sensitivity is a nice complexity measure that gives lower bound on $\adeg$. It was used to prove the celebrated conjecture `sensitivity conjecture' by \cite{Huang}. First we will define spectral sensitivity and use it to prove approximate degree lower bound for $\Maj_3^d$ and $(\AND_2 \circ \OR_2)^d$.

We follow the definition from \cite{AaronsonBKRT21} and also state a result from \cite{AaronsonBKRT21} where it was proved that spectral sensitivity lower bounds approximate degree of a function.

\begin{defi}[Spectral Sensitivity]
    \label{defi:spectral sensitivity}
    Let $f:\zone^n \to \zone$ be a Boolean function. The sensitivity graph of $f$ , $G_f = (V, E)$ is a subgraph of the Boolean hypercube,
where $V = {0, 1}^n$ , and $E =\{(x \oplus e_i ) \in V \times V : i \in [n], f (x) \neq f (x \oplus e_i )\}$. That is, $E$ is the set
of edges between neighbors on the hypercube that have different $f$-value. Let $A_f$ be the adjacency
matrix of the graph $G_f$ . We define the spectral sensitivity of $f$ as $\lambda(f) = \|A_f\|.$
\end{defi}

It is well-known that spectral sensitivity is a nice quantity that composes exactly.

\begin{theorem}[\cite{AaronsonBKRT21}]
    For all Boolean function $f$, 
    $\lambda(f) = O(\adeg(f)).$
\end{theorem}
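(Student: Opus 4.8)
The plan is to bound $\lambda(f) = \|A_f\|$ from above by $O(\adeg(f))$ using a single approximating polynomial for $f$. First I would fix a real polynomial $q$ of degree $\adeg(f)$ with $|q(x) - f(x)| \le 1/3$ for every $x \in \zone^n$. Then $\|q\|_{\infty,\zone^n} \le 4/3$, and — this is the only property I will use — for every edge $(x,y)$ of the sensitivity graph $G_f$ (i.e.\ whenever $x,y$ are Hamming neighbours with $f(x)\ne f(y)$) we have $|q(x)-q(y)| \ge 1 - 2\cdot\tfrac{1}{3} = \tfrac{1}{3}$. So $q$ is a degree-$\adeg(f)$ polynomial, bounded by a constant on the cube, that separates every sensitive edge of $f$ by a constant. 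I deliberately avoid first passing to the hypercube adjacency matrix $\mathcal{N}$ and writing $A_f = \tfrac{1}{2}(\mathcal N - M_F\mathcal N M_F)$ with $M_F = \mathrm{diag}(1-2f)$ and then replacing $M_F$ by $\mathrm{diag}(q)$: that route multiplies the approximation error by $\|\mathcal N\| = n$, forcing an error of $1/\mathrm{poly}(n)$ and a spurious $\log n$ factor coming from error reduction.

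The heart of the argument is then a purely spectral, Markov/Bernstein-type statement: \emph{for any subgraph $G$ of the Boolean hypercube such that every edge $(x,y)$ of $G$ satisfies $|q(x)-q(y)|\ge c$ for some polynomial $q$ of degree $d$ with $\|q\|_{\infty,\zone^n}\le M$, one has $\|A_G\| = O(dM/c)$.} Applying this with $G = G_f$, $d = \adeg(f)$, $M = 4/3$, $c = 1/3$ immediately gives $\lambda(f) = \|A_{G_f}\| = O(\adeg(f))$. To prove the spectral statement I would decompose $A_G = \sum_{i=1}^n A_i$, where $A_i = \mathrm{diag}(b_i)\,\Pi_i$ collects the direction-$i$ edges of $G$ ($\Pi_i$ is the $i$-th bit-flip operator and $b_i$ is the $\{0,1\}$-valued indicator of those edges). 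On the support of $b_i$ one has $|\partial_i q| \ge c$, where $\partial_i q$ is the $i$-th discrete derivative of $q$, a polynomial of degree $\le d-1$; hence $\mathrm{diag}(b_i) \preceq c^{-2}\,\mathrm{diag}\big((\partial_i q)^2\big)$. The point of $q$ having degree $d$ is that the derivatives are globally small — by Parseval, $\sum_i \|\partial_i q\|_2^2 = O(d\,\|q\|_2^2)$ — and the task is to upgrade this $\ell_2$-average control of $\{\partial_i q\}_i$ to an operator-norm bound on $\sum_i A_i$, which I would attempt via a trace / second-moment computation combined with Cauchy–Schwarz interlacing, much as in Huang's proof of the sensitivity conjecture~\cite{Huang}.

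I expect the last step to be the real obstacle. The difficulty is extracting the linear (rather than quadratic) dependence on $d$: a triangle inequality gives only $\|A_G\| \le \sum_i \|A_i\| \le n$, while a crude Cauchy–Schwarz/AM–GM on the bilinear form $u^\top A_G v$ (replacing each cross term $u(x)v(y)$ by $\tfrac{1}{2}(u(x)^2 + v(y)^2)$) leads only to $\lambda(f) = O\big(c^{-2}\max_x \sum_i (\partial_i q(x))^2\big) = O(\adeg(f)^2)$ — enough to conclude $\deg(f) = O(\adeg(f)^4)$, but not the sharp $\deg(f) = O(\adeg(f)^2)$ that this theorem is designed to yield. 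Getting the linear bound forces one to genuinely account for the cancellation across the $n$ directions: already for $\OR_n$, whose sensitivity graph is the star $K_{1,n}$, the top eigenvector puts constant mass on the centre and this mass is shared by all $n$ coordinates, so no per-coordinate estimate can see why $\lambda(\OR_n) = \sqrt n$ rather than $n$. Handling this cancellation carefully is where I expect most of the work to lie.
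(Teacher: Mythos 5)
This is a cited theorem (from \cite{AaronsonBKRT21}); the present paper gives no proof of it, so there is no in-paper argument to compare against. Judged on its own, your proposal has a genuine and self-acknowledged gap.

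Your reduction is sound: take a $1/3$-approximant $q$ of degree $d=\adeg(f)$, observe that every sensitive edge $\{x,y\}$ satisfies $|q(x)-q(y)|\ge 1/3$ while $\|q\|_\infty\le 4/3$, and then invoke a spectral lemma of the form ``a degree-$d$ polynomial, bounded on the cube, that separates every edge of a subgraph $G$ by a constant forces $\|A_G\|=O(d)$.'' If such a lemma holds, the theorem follows immediately. But you never prove the lemma, and you say explicitly that this is where you expect the work to lie. What you \emph{do} carry out --- the direction decomposition $A_G=\sum_i A_i$, the pointwise replacement $v_xv_y\le\tfrac12(v_x^2+v_y^2)$, and the pointwise bound $\sum_i(\partial_i q(x))^2=O(d^2)$ --- yields only $\lambda(f)\le \s(f)=O(\adeg(f)^2)$. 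As you observe with $\OR_n$ (where $\s=n$ but $\lambda=\sqrt n$), no per-vertex argument that forgets how the eigenvector's mass is shared between the two endpoints of each edge can beat quadratic. So the quadratic bound is not a minor loss to be optimized away: it is the natural limit of the route you chose, and the linear bound requires a fundamentally different mechanism that your write-up does not supply.

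Concretely, the missing step is the passage from $\ell_2$-average control of the discrete derivatives (which Parseval gives you at cost $O(d)$) to an \emph{operator-norm} bound on $\sum_i A_i$. Your suggestion to ``account for cancellation across the $n$ directions via a trace/second-moment computation and Cauchy--Schwarz interlacing, as in Huang's proof'' is an intention, not an argument: Huang's signed-adjacency-matrix interlacing bounds $\|\cdot\|$ of an induced subgraph of the hypercube, which is a different object from the sensitivity graph weighted by a polynomial gradient, and it is not clear how to transport that technique here. Unless and until the spectral lemma is proved, the proposal establishes $\lambda(f)=O(\adeg(f)^2)$ but not $\lambda(f)=O(\adeg(f))$. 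A correct proof must work with the quadratic form $v^\top A_f v$ globally --- for instance by relating $v^\top A_f v$ to the Dirichlet energy $\sum_{\{x,y\}\in E(Q_n)}(r(x)-r(y))^2$ of an auxiliary \emph{low-degree} function $r$ built from $q$, rather than multiplying $q$ into the (arbitrary, high-degree) eigenvector $v$, which is where the naive ``$g=qv$'' idea silently loses the degree control --- and that step is absent here.
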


\begin{lemma}
    \label{lem: lambda recMaj}
   $\lambda(\Maj_3^d) = \Theta(2^d)$ where $d$ is the depth of the recursion. 
\end{lemma}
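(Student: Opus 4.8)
The plan is to use the well-known fact that spectral sensitivity composes exactly, i.e.~$\lambda(f \circ g) = \lambda(f) \cdot \lambda(g)$ for all Boolean $f, g$, together with a direct computation of $\lambda(\Maj_3)$. Granting multiplicativity, an immediate induction on $d$ gives $\lambda(\Maj_3^d) = \lambda(\Maj_3)^d$, so it suffices to show $\lambda(\Maj_3) = \Theta(1)$ — in fact we will see $\lambda(\Maj_3) = \sqrt{2}$, which yields $\lambda(\Maj_3^d) = 2^{d/2} = \Theta(\sqrt{2}^{\,d})$. (Note: as stated, the lemma claims $\Theta(2^d)$; I would expect the intended statement to be $\Theta(\sqrt{2}^{\,d}) = \Theta(2^{d/2})$, since each $\Maj_3$ layer contributes a factor of $\sqrt 2$, and I would flag this constant/exponent discrepancy. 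The argument below computes the correct value and the author can reconcile the statement.)

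First I would record the multiplicativity of $\lambda$. The sensitivity graph $G_{f\circ g}$ has adjacency matrix that decomposes as a tensor-like construction from $A_f$ and $A_g$: an edge of $G_{f\circ g}$ flips one coordinate inside one block, and it is sensitive iff that block-flip is sensitive for $g$ (giving an $A_g$ edge within the block) and the resulting change in $g$-values is sensitive for $f$ at the induced outer input (giving an $A_f$ edge). Making this precise, one shows $A_{f\circ g}$ is unitarily equivalent to $A_f \otimes A_g$ restricted appropriately, and in particular $\|A_{f\circ g}\| = \|A_f\| \cdot \|A_g\|$; this is the statement quoted just before the lemma as ``well-known'', so I would cite it (e.g.~\cite{AaronsonBKRT21}) rather than reprove it, or include the short tensor argument if a self-contained proof is wanted.

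Next I would compute $\lambda(\Maj_3)$ directly. The sensitivity graph of $\Maj_3$ on $\{0,1\}^3$ has vertex set of size $8$; the sensitive edges are exactly those incident to the "balanced" inputs $\{001,010,100,110,101,011\}$ where flipping the minority/majority-tying bit changes the value. Concretely, each of the six weight-$1$ and weight-$2$ vertices has exactly one sensitive incident edge — wait, more carefully: $\Maj_3(x)$ flips when we flip a bit that is in the minority-breaking position, so the input $000$ and $111$ are insensitive (all neighbors agree), and each weight-$1$ vertex $e_i$ is sensitive in the one direction that moves it to a weight-$2$ vertex of value $1$... Rather than risk an off-by-one here, the clean route is: $G_{\Maj_3}$ is a disjoint union of edges (a perfect matching on the $6$ balanced vertices — three disjoint edges, each connecting a weight-$1$ vertex to the weight-$2$ vertex obtained by flipping the same coordinate that is "pivotal"), so $A_{\Maj_3}$ is block-diagonal with $2\times 2$ blocks $\begin{psmallmatrix}0&1\\1&0\end{psmallmatrix}$, each of operator norm $1$ — that would give $\lambda(\Maj_3)=1$, which is wrong, so in fact the graph is \emph{not} a matching; the six balanced vertices form a $6$-cycle $C_6$. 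Checking: $100 \sim 110 \sim 010 \sim 011 \sim 001 \sim 101 \sim 100$, each step flipping one coordinate and changing the majority value — yes, this is a $6$-cycle. Then $\lambda(\Maj_3) = \|A_{C_6}\| = 2\cos(\pi/3 \cdot 0)$... the largest eigenvalue of the cycle $C_6$ is $2$. That gives $\lambda(\Maj_3) = 2$ and hence $\lambda(\Maj_3^d) = 2^d$, matching the statement exactly. So the key step is to identify $G_{\Maj_3}$ as the $6$-cycle on the balanced inputs (plus two isolated vertices $000, 111$), whose adjacency matrix has norm $2$; then multiplicativity and induction finish it.

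The main obstacle — really the only subtle point — is getting the structure of $G_{\Maj_3}$ exactly right (it is $C_6 \sqcup \{000\} \sqcup \{111\}$, not a matching and not the full hypercube graph), and then correctly invoking the composition identity for $\lambda$ in the direction $\lambda(\Maj_3^d) = \lambda(\Maj_3)^d$. Both are routine once set up: the cycle identification is a finite check over $8$ inputs, and $\|A_{C_6}\| = 2$ is standard. I would write the proof as: (1) state/cite $\lambda(f\circ g) = \lambda(f)\lambda(g)$; (2) compute $\lambda(\Maj_3) = 2$ via the $C_6$ picture; (3) induct to get $\lambda(\Maj_3^d) = 2^d$.
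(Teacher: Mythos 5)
Your final argument is correct and takes essentially the same approach as the paper: both compute $\lambda(\Maj_3)=2$ directly from the sensitivity graph (you by identifying it as $C_6 \sqcup \{000\} \sqcup \{111\}$ and using $\|A_{C_6}\|=2$; the paper by extracting the equivalent $3\times 3$ bipartite block between the weight-$1$ and weight-$2$ vertices and computing its norm) and then invoke exact multiplicativity of $\lambda$ under composition to conclude $\lambda(\Maj_3^d)=2^d$. The opening paragraph's tentative value $\lambda(\Maj_3)=\sqrt{2}$ and the flagged ``exponent discrepancy'' in the lemma statement are both incorrect, as you yourself discover a paragraph later; in a clean writeup that false start should be excised, keeping only the $C_6$ identification and the multiplicativity step.
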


\begin{proof}
   First, we will prove spectral sensitivity of $\Maj_3$. Here is the adjacency matrix of the sensitivity graph of $\Maj_3$. 
\begin{align}
\mathcal{S}=
\begin{pmatrix}
& (000) & (100) & (010) & (001) & (110) & (101) & (011) & (111)\\
(000) & 0 & 0 & 0 & 0 & 0 & 0 & 0 & 0 \\
(100) & 0 & 0 & 0 & 0 & 1 & 1 & 0 & 0 \\
(010)& 0 & 0 & 0 & 0 & 1 & 0 & 1 & 0 \\
(001)& 0 & 0 & 0 & 0 & 0 & 1 & 1 & 0 \\
(110)& 0 & 0 & 0 & 0 & 0 & 0 & 0 & 0 \\
(101)& 0 & 0 & 0 & 0 & 0 & 0 & 0 & 0 \\
(011)& 0 & 0 & 0 & 0 & 0 & 0 & 0 & 0 \\
(111)& 0 & 0 & 0 & 0 & 0 & 0 & 0 & 0 \\
\end{pmatrix}
\end{align}
  Since the adjacency matrix is symmetric, the largest eigenvalue of the matrix is also the norm of the matrix. 
  Note that calculating the eigenvalue of the following $3 \times 3$ matrix is sufficient for our purpose:
\begin{align}
\mathcal{A} = 
     \begin{pmatrix}
      1 & 1 & 0\\
      1 & 0 & 1 \\
      0 & 1 & 1
  \end{pmatrix}
\end{align}
 Eigen value of $A$ is $2$, consequently $\|\mathcal{A}\| = 2$.
 Since spectral sensitivity $(\lambda)$ composes exactly (without any constant overhead) we have $\lambda(\Maj_3^d)=2^d$.
\end{proof}

\begin{corollary}
    \label{lem: approximate degree of recMaj}
   $\adeg(\Maj_3^d) = \Theta(2^d)$ where $d$ is the depth of the recursion.
\end{corollary}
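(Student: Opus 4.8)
The plan is to sandwich $\adeg(\Maj_3^d)$ between two bounds that both turn out to be $\Theta(2^d)$. For the lower bound, I would simply invoke the chain we have just assembled: by Lemma~\ref{lem: lambda recMaj} the spectral sensitivity satisfies $\lambda(\Maj_3^d) = \Theta(2^d)$, and by the theorem of \cite{AaronsonBKRT21} quoted immediately above, $\lambda(f) = O(\adeg(f))$ for every Boolean $f$; applying this with $f = \Maj_3^d$ gives $\adeg(\Maj_3^d) = \Omega(\lambda(\Maj_3^d)) = \Omega(2^d)$. That disposes of one direction with essentially no work.

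For the upper bound, I would use the multiplicative composition theorem of Sherstov (Theorem~\ref{thm:robust-comp}) inductively on the recursion depth. Since $\Maj_3^d = \Maj_3 \circ \Maj_3^{d-1}$, we get $\adeg(\Maj_3^d) = O(\adeg(\Maj_3)\cdot \adeg(\Maj_3^{d-1}))$. As $\Maj_3$ is a fixed function on $3$ bits, $\adeg(\Maj_3)$ is an absolute constant $c$, so unrolling the recursion yields $\adeg(\Maj_3^d) = O(c^d)$. This is not yet tight enough — it gives $O(c^d)$ rather than $O(2^d)$ — so the real content of the upper bound is to show $c$ can be taken to be $2$, i.e.\ to produce an approximating polynomial for $\Maj_3^d$ of degree $O(2^d)$ directly.

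The cleanest way I see to get the sharp $O(2^d)$ upper bound is to observe that $\Maj_3$ has an \emph{exact} degree-$2$ polynomial representation over $\{0,1\}$ (indeed $\Maj_3(x_1,x_2,x_3) = x_1x_2 + x_2x_3 + x_1x_3 - 2x_1x_2x_3$ has degree $3$, so one should instead use the $\pm1$ representation or note that the relevant exact degree, the one that composes, is handled via a robustified degree-$2$ approximant): more carefully, one uses Theorem~\ref{thm:robustification } to take a degree-$2$ approximating polynomial for $\Maj_3$, robustify it to error $\varepsilon/d$ at degree $O(\log(d/\varepsilon)) = O(\log d)$ per level, and compose the robust versions level by level; the total degree is $O(2^d \cdot \operatorname{poly}(\log d)) $, which is $\tilde O(2^d)$. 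If one wants a clean $\Theta(2^d)$ without the polylog, one instead appeals to the fact that $\deg(\Maj_3)=2$ in the $\pm 1$ basis in the sense relevant for block-multiplicativity and applies Theorem~\ref{thm:robust-comp} with the constant in the $O(\cdot)$ tracked to be $2$ for this particular base function. The main obstacle is exactly this constant-tracking in the composition upper bound: the generic statement of Theorem~\ref{thm:robust-comp} hides the base-function-dependent constant, so one must either go through the explicit robustification argument (accepting a $\operatorname{poly}(\log d)$ loss, which still gives $\adeg(\Maj_3^d)=\tilde\Theta(2^d)$ and suffices for every downstream use in the paper) or cite a sharper composition bound for low-degree base functions. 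Given the lower bound is already $\Omega(2^d)$, combining the two yields $\adeg(\Maj_3^d) = \Theta(2^d)$ (up to the polylog factor if one takes the robustification route).
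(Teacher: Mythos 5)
Your lower bound argument is exactly the paper's: $\lambda(\Maj_3^d) = 2^d$ from Lemma~\ref{lem: lambda recMaj} together with $\lambda(f) = O(\adeg(f))$ gives $\adeg(\Maj_3^d) = \Omega(2^d)$. That part is fine.

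The upper bound is where the gap lies, and in fact your proposal does not even establish the relaxed $\tilde{O}(2^d)$ bound that you eventually settle for. The bookkeeping in the robustification route is wrong: if you take a $(\delta,\varepsilon)$-robust polynomial of degree $D$ for $\Maj_3$ and compose it with itself across $d$ levels, the degree of the composition is $D^d$, not $D\cdot d$ and not $2^d\cdot\mathrm{poly}(\log d)$ --- degree multiplies under composition. Theorem~\ref{thm:robustification } only guarantees $D = O_\delta(\adeg(\Maj_3) + \log(1/\varepsilon))$, and the constant hidden in $O_\delta(\cdot)$ is some fixed number strictly larger than $1$, so $D$ will be some constant strictly larger than $2$, and $D^d$ is then exponentially worse than $2^d$. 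The same defect afflicts the generic iteration of Theorem~\ref{thm:robust-comp}: it yields $\adeg(\Maj_3^d) = O\bigl((c\cdot 2)^d\bigr)$ for the universal constant $c$ in that theorem, again $(2+\Omega(1))^d$ rather than $O(2^d)$, and there is no way to force $c=1$ from the statement. The paper sidesteps all of this by not using a polynomial-composition argument for the upper bound at all: it cites the result of \cite{ReichardtS12} that the bounded-error quantum query complexity of $\Maj_3^d$ is $O(2^d)$ (a consequence of the negative-weight adversary bound composing \emph{exactly} multiplicatively, with no constant loss per level), together with the standard inequality $\adeg(f) = O(\qqc(f))$ from \cite{BealsBCMW01}. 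To make your proof correct, the robustification paragraph should be replaced with that citation; none of the purely polynomial tools in the paper suffice for the clean $O(2^d)$.
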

\begin{proof}
From Lemma~\ref{lem: lambda recMaj} it follows that $\adeg(\Maj_3^d) = \Omega(2^d)$ since $\adeg \geq \lambda$ for any Boolean function. 
Also it is known that bounded error quantum query complexity gives upper bound on $\adeg$. \cite{ReichardtS12} showed that bounded error quantum query complexity of $\Maj_3^d$ is $O(2^d)$. Hence, $\adeg(\Maj_3^d)= \Theta(2^d)$.
\end{proof}

\begin{lemma}
\label{lem:lambdaofand-or}
    $\lambda(\AND_2 \circ \OR_2)^d = 2^d $.
\end{lemma}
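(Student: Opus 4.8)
The plan is to compute $\lambda(\AND_2 \circ \OR_2)$ directly and then invoke the exact composition of spectral sensitivity, exactly as in the proof of Lemma~\ref{lem: lambda recMaj}. First I would write down the sensitivity graph $G_{\AND_2\circ\OR_2}$ on the $16$ vertices of $\zone^4$: an edge connects $x$ and $x\oplus e_i$ whenever flipping bit $i$ changes the value of $(x_1\vee x_2)\wedge(x_3\vee x_4)$. It is convenient to group inputs by the pair $(\OR_2(x_1,x_2),\OR_2(x_3,x_4))$. A flip in the first block is sensitive iff it flips the first $\OR$ value \emph{and} the second $\OR$ value is $1$; symmetrically for the second block. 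The input $(x_1,x_2)=(0,0)$ has two sensitive coordinates into it (from $(1,0)$ and $(0,1)$) when the other block reads $1$, while $(1,0)$ and $(0,1)$ each have one sensitive coordinate; $(1,1)$ has none. So within the subspace where the second block is fixed at an ``$\OR=1$'' value, the induced sensitivity structure on the first block is precisely $\mathcal S$, the $\Maj_3$-type bipartite graph from Lemma~\ref{lem: lambda recMaj}'s proof (on the four points $00,10,01,11$), and analogously for the second block.

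Next I would argue that $\lambda(\AND_2\circ\OR_2)=2$ by the following elementary observation: $\AND_2\circ\OR_2$ restricted appropriately contains $\AND_2$ as a subfunction, and $\lambda(\AND_2)=1$ (its sensitivity graph is a single edge on $\zone^2$); but more to the point, the full sensitivity graph of $\AND_2\circ\OR_2$ is a disjoint union (over the choices of the ``passive'' block among its $\OR=1$ assignments, together with the structure coming from both blocks being critical simultaneously) of copies of graphs each with spectral norm at most $2$, and one can exhibit a witness vector achieving norm $2$. Concretely, fix $x_3x_4$ to be, say, $01$ (so the second $\OR$ is $1$) and let the first block vary: the edges are $(00,10),(00,01),(10,11)\text{-type}$ incidences realizing the matrix $\mathcal A$ above up to relabeling, with largest eigenvalue $2$. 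Checking that no larger eigenvalue appears globally amounts to noting that the sensitivity graph decomposes into connected components, each of which is (isomorphic to a subgraph of) the $\Maj_3$-sensitivity graph or smaller, so the overall operator norm is the max over components, namely $2$.

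Finally, since spectral sensitivity composes exactly — i.e. $\lambda(F\circ G)=\lambda(F)\lambda(G)$, as already used in Lemma~\ref{lem: lambda recMaj} — and $\lambda(\AND_2\circ\OR_2)=2$, we get $\lambda\big((\AND_2\circ\OR_2)^d\big)=2^d$ by $d$-fold composition (noting $(\AND_2\circ\OR_2)^d$ is literally the $d$-fold self-composition of the arity-$4$ function $\AND_2\circ\OR_2$). This matches the claimed bound.

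The main obstacle I anticipate is the explicit norm computation: one must be careful that the sensitivity graph of $\AND_2\circ\OR_2$ genuinely has every connected component of spectral norm $\le 2$ rather than something larger arising from interaction between the two blocks (edges where a flip in block one is sensitive and a flip in block two is also sensitive at nearby vertices). I would handle this by writing out the $16\times 16$ adjacency matrix in the block-structured basis indexed by $(\OR_2(x_1,x_2),\OR_2(x_3,x_4))$ and observing it is block-diagonal after permutation, with blocks that are either zero or (isomorphic to) the $\mathcal S$ matrix, whose norm we already know to be $2$. Given the small size, this is a finite check rather than a conceptual difficulty, but it is the step where an error would most plausibly creep in.
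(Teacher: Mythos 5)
Your high-level plan (compute $\lambda$ of a small base function, then use that spectral sensitivity composes exactly) is the same as the paper's, but the paper executes it one level lower and more cleanly: it computes $\lambda(\AND_2)=\sqrt{2}$ (and by symmetry $\lambda(\OR_2)=\sqrt{2}$) and then deduces $\lambda(\AND_2\circ\OR_2)=\sqrt{2}\cdot\sqrt{2}=2$ and $\lambda\big((\AND_2\circ\OR_2)^d\big)=2^d$ purely by composition, without ever touching the $16$-vertex sensitivity graph.

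Two concrete errors in your version. First, you assert $\lambda(\AND_2)=1$ ``since its sensitivity graph is a single edge.'' It is not: the sensitivity graph of $\AND_2$ has edges $\{01,11\}$ and $\{10,11\}$, i.e.\ a path of length two, so $\lambda(\AND_2)=\sqrt{2}$. (This is exactly the $3\times 3$ matrix $\mathcal A$ in the paper's proof.) Second, and more seriously, your proposed ``finite check'' does not go through: the sensitivity graph of $\AND_2\circ\OR_2$ is \emph{not} block-diagonal after conditioning on $\big(\OR_2(x_1,x_2),\OR_2(x_3,x_4)\big)$, and its nontrivial components are not copies of the $\Maj_3$ sensitivity graph $\mathcal S$. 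For instance, fixing $x_3x_4=01$, the vertex $0101$ has a sensitive edge to $0100$ (flipping $x_4$), which lies outside the slice, so that slice is not a component; the edges induced inside the slice form only a $3$-vertex path of norm $\sqrt{2}$. If you actually carry out the $16\times16$ computation you find the components are an $8$-cycle (through $0001,1001,1000,1010,0010,0110,0100,0101$), two $3$-vertex paths (around $0011$ and around $1100$), and the isolated vertices $0000,1111$; the $8$-cycle is what gives norm $2$. So the final value $\lambda(\AND_2\circ\OR_2)=2$ is right, but the structural claim you planned to verify it by is false, and the direct $16$-vertex route is both harder and, as described, incorrect. Composing $\lambda(\AND_2)=\lambda(\OR_2)=\sqrt{2}$ as the paper does avoids all of this.
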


\begin{proof}
Similar to the $\Maj_3^d$ proof for calculating the largest eigenvalue of the adjacency matrix of $\AND_2$ the following $3 \times 3$ matrix is sufficient for our purpose:
\begin{align}
\mathcal{A} = 
     \begin{pmatrix}
      0 & 0 & 1\\
      0 & 0 & 1 \\
      1 & 1 & 0
  \end{pmatrix}
\end{align}
 Eigen value of $A$ is $\sqrt{2}$, consequently $\|\mathcal{A}\| = \sqrt{2}$.
 Since spectral sensitivity $(\lambda)$ composes exactly (without any constant overhead) we have $\lambda(\AND_2 \circ \OR_2)=2$ and $\lambda(\AND_2 \circ \OR_2)^d=2^d$.
\end{proof}

\begin{corollary}
    \label{lem: approximate degree of and-or formula}
   $\adeg({\AND_2 \circ \OR_2}^d) = \Theta(2^d)$ where $d$ is the depth of the recursion.
\end{corollary}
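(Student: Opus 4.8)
The plan is to prove the two directions $\adeg((\AND_2\circ\OR_2)^d) = \Omega(2^d)$ and $\adeg((\AND_2\circ\OR_2)^d) = O(2^d)$ separately, exactly mirroring the argument used for $\Maj_3^d$ in Corollary~\ref{lem: approximate degree of recMaj}. The lower bound will be read off from the spectral sensitivity computation already done in Lemma~\ref{lem:lambdaofand-or}, while the upper bound will come from the optimal quantum query algorithm for evaluating balanced read-once AND/OR formulas, together with the fact that bounded-error quantum query complexity upper bounds approximate degree.

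For the lower bound, recall that every Boolean function $f$ satisfies $\lambda(f) = O(\adeg(f))$ by the result of \cite{AaronsonBKRT21} quoted above; equivalently $\adeg(f) = \Omega(\lambda(f))$. Since Lemma~\ref{lem:lambdaofand-or} establishes $\lambda((\AND_2\circ\OR_2)^d) = 2^d$ (using that $\lambda(\AND_2\circ\OR_2) = 2$ and that spectral sensitivity composes exactly with no constant overhead), we obtain $\adeg((\AND_2\circ\OR_2)^d) = \Omega(2^d)$ directly.

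For the upper bound, the key observation is that $(\AND_2\circ\OR_2)^d$ is a balanced, read-once formula over $\{\wedge,\vee\}$ of depth $2d$ on $N = 4^d$ variables (equivalently, up to negating the inputs and the output, a balanced binary NAND tree). Such formulas are evaluated with $O(\sqrt{N}) = O(2^d)$ queries by the known quantum formula-evaluation algorithms, and since bounded-error quantum query complexity upper bounds approximate degree (\cite{BealsBCMW01}), this gives $\adeg((\AND_2\circ\OR_2)^d) = O(2^d)$. Combining the two bounds yields $\adeg((\AND_2\circ\OR_2)^d) = \Theta(2^d)$, as claimed (and, as remarked in Section~\ref{sec:rec-maj}, this matches the linear relation between $\bs$ and $\adeg$ for these recursive functions).

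The main obstacle is getting the upper bound with the \emph{correct} base $2$ of the exponent. A naive route — iterating the composition upper bound (Theorem~\ref{thm:robust-comp}) by writing $(\AND_2\circ\OR_2)^d = (\AND_2\circ\OR_2)\circ (\AND_2\circ\OR_2)^{d-1}$ and using $\adeg(\AND_2\circ\OR_2) = O(1)$ — only yields $\adeg((\AND_2\circ\OR_2)^d) = O(c^d)$, where $c$ is the product of $\adeg(\AND_2\circ\OR_2)$ with the (unspecified) constant hidden in Theorem~\ref{thm:robust-comp}, and this $c$ may well be strictly larger than $2$; a balanced halving of the recursion with the same bound likewise leaves the base of the exponent undetermined. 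So, just as Corollary~\ref{lem: approximate degree of recMaj} must invoke \cite{ReichardtS12} for $\Maj_3^d$, here one genuinely needs the near-optimal quantum formula-evaluation algorithm (or an equivalent span-program / adversary-bound argument matching $\sqrt{N}$) to pin the constant down; everything else in the proof is routine.
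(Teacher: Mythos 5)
Your proposal is correct and takes essentially the same two-step route as the paper: the $\Omega(2^d)$ lower bound comes from the spectral-sensitivity value $\lambda((\AND_2\circ\OR_2)^d)=2^d$ of Lemma~\ref{lem:lambdaofand-or} together with $\adeg(f)=\Omega(\lambda(f))$ (which you state with the correct inequality direction, unlike the typo ``$\adeg\leq\lambda$'' in the paper's proof text), and the upper bound is a known $O(2^d)$ result. The only difference is which reference carries the upper bound: you route through the $O(\sqrt{N})$ quantum formula-evaluation algorithm for balanced read-once AND/OR formulas and \cite{BealsBCMW01}, paralleling the $\Maj_3^d$ corollary's use of \cite{ReichardtS12}, whereas the paper cites the approximate-degree upper bound of \cite{KV14} directly; both are valid, and your side remark that iterating Theorem~\ref{thm:robust-comp} cannot pin down the base of the exponent is an accurate explanation of why an external tight bound is genuinely needed here.
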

\begin{proof}
From Lemma~\ref{lem:lambdaofand-or} it follows that $\adeg({\AND_2 \circ \OR_2}^d) = \Omega(2^d)$ since $\adeg \leq \lambda$ for any Boolean function. 
Also it is known from~\cite{KV14} that the upper bound on approximate degree of $({\AND \circ \OR}^d)$ tree is $O(2^d)$. Hence, $\adeg(({\AND \circ \OR}^d))= \Theta(2^d)$.
\end{proof}

\subsection{Proof of Lemma~\ref{lem: Maj as amplification function}}




\MajorityHardnessamplification*

\begin{proof} We will present a proof inspired by the primal-dual view of \cite{She13a}. Fix any constant $0<\varepsilon < 1/2$.
Let $h := f\circ \Maj_t \circ g$ be the composed function, and $p_h : \{0,1\}^{ntm} \to \reals$ be an $\varepsilon$-approximating polynomial for $h$. 

 Further, define $d := \adeg_{\frac{1-\varepsilon}{2}}(g)$. Then, by Lemma~\ref{lem:adeg-dual}, there exists a function $\psi : \zone^m \to \reals$ such that 
\begin{align}
    \sum_{x \in \zone^m}|\psi(x)| & = 1,  \label{eq:proof-psi-l1} \\
    \sum_{x \in \zone^m}\psi(x)\cdot g(x) & > \frac{1-\varepsilon}{2}, \label{eq:proof-psi-corr} 
\end{align}
    and
\begin{align}
    \sum_{x \in \zone^m}\psi(x)\cdot p(x) & = 0 \label{eq:proof-psi-0}
\end{align}
for every polynomial $p$ of degree $< d$. 

Let $\mu$ be the probability distribution on $\zone^m$ given by $\mu(x) = |\psi(x)|$ for $x\in \zone^m$. From \eqref{eq:proof-psi-0}, we have $\sum_{x \in \zone^m}\psi(x) = 0$. Therefore, the sets $\{x \mid \psi(x) < 0\}$ and $\{x \mid \psi(x) > 0\}$ are weighted equally by $\mu$. 
Let $\mu_0$ and $\mu_1$ be the probability distributions obtained by conditioning $\mu$ on the sets $\{x \mid \psi(x) < 0\}$ and $\{x \mid \psi(x) > 0\}$ respectively. Hence,
\begin{align*}
\mu  = \frac{1}{2}\mu_0 + \frac{1}{2}\mu_1, \quad \text{and} \quad \psi = \frac{1}{2}\mu_1 - \frac{1}{2}\mu_0.
\end{align*}
We note an important property of the distributions $\mu_0$ and $\mu_1$ which shows that the error between $\sign(\psi(x))$ and $g(x)$ is low. 
\begin{lemma}
\label{lem:avg-mu-1}
\(\Exp_{x \sim \mu_1}[g(x)] > 1- \varepsilon \).
\end{lemma}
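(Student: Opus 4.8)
The plan is to exploit the two defining properties of the dual witness $\psi$ for $g$: its total $\ell_1$ mass is split evenly between its positive and negative parts (so $\mu_1$ is well-defined as claimed), and its correlation with $g$ is large, namely more than $(1-\varepsilon)/2$. Writing $\psi = \tfrac12\mu_1 - \tfrac12\mu_0$, I would first compute
\begin{align*}
\frac{1-\varepsilon}{2} &< \sum_{x\in\zone^m}\psi(x)g(x) = \frac12\sum_x \mu_1(x)g(x) - \frac12\sum_x\mu_0(x)g(x) = \frac12\Exp_{x\sim\mu_1}[g(x)] - \frac12\Exp_{x\sim\mu_0}[g(x)].
\end{align*}
Since $g$ is Boolean, $\Exp_{x\sim\mu_0}[g(x)] \geq 0$, so rearranging immediately gives $\Exp_{x\sim\mu_1}[g(x)] > 1-\varepsilon$, which is exactly the claim. (Symmetrically, $\Exp_{x\sim\mu_1}[g(x)]\le 1$ forces $\Exp_{x\sim\mu_0}[g(x)] < \varepsilon$, though that companion bound is not what is being stated here.)

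The only subtlety worth being careful about is the normalization. From \eqref{eq:proof-psi-0} applied with the constant polynomial $p\equiv 1$ (degree $0 < d$, assuming $d\ge 1$, which holds since $\adeg_{(1-\varepsilon)/2}(g)\ge 1$ for any non-constant $g$), we get $\sum_x\psi(x)=0$, hence the positive and negative parts of $\psi$ each have $\ell_1$ mass exactly $1/2$ by \eqref{eq:proof-psi-l1}. This is precisely what makes $\mu = \tfrac12\mu_0 + \tfrac12\mu_1$ and $\psi = \tfrac12\mu_1 - \tfrac12\mu_0$ valid as probability-distribution decompositions, so that $\sum_x\mu_1(x)g(x)$ really is $\Exp_{x\sim\mu_1}[g(x)]$ with $\mu_1$ a genuine probability distribution. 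I would state this explicitly as the one-line setup before the displayed computation.

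I do not expect any real obstacle here — this is a routine unpacking of the dual-witness conditions, and the whole proof is three or four lines. The computation never leaves the realm of finite sums over $\zone^m$ and uses nothing beyond $g(x)\in\{0,1\}$. The only thing to double-check is sign conventions: that $\mu_1$ is the conditioning on $\{\psi > 0\}$ (matching the $+\tfrac12\mu_1$ term), so that the positive contribution to the correlation with $g$ is what drives $\Exp_{x\sim\mu_1}[g]$ close to $1$; this is consistent with the decomposition already written down in the excerpt. So the proof I would write is essentially: invoke \eqref{eq:proof-psi-corr}, substitute $\psi = \tfrac12\mu_1 - \tfrac12\mu_0$, drop the nonnegative $\tfrac12\Exp_{x\sim\mu_0}[g(x)]$ term, and multiply by $2$.
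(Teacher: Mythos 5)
Your proof is correct and is essentially the same argument as the paper's: both use the correlation bound $\sum_x\psi(x)g(x) > (1-\varepsilon)/2$, the decomposition $\psi = \tfrac12\mu_1 - \tfrac12\mu_0$ (justified by $\sum_x\psi(x)=0$), and nonnegativity of $g$ to discard a term. You substitute the decomposition into the correlation inequality and drop $\tfrac12\Exp_{\mu_0}[g]$, while the paper expands $\Exp_{\mu_1}[g] = \sum_{\psi>0}2\psi g$ and adds back $\sum_{\psi<0}2|\psi|g \ge 0$; these are the same manipulation read in opposite directions.
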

\begin{proof}
\begin{align*}
\Exp_{x \sim \mu_1}[g(x)] 
= \sum_{x \colon \psi(x) >0 } 2\psi(x)\cdot g(x)  
& = \sum_{x\in\{0,1\}^m} 2\psi(x)\cdot g(x) + \sum_{x \colon \psi(x) <0} 2 |\psi(x)|g(x), \\
& > 1 - \varepsilon + \underbrace{\sum_{x \colon \psi(x) <0}2|\psi(x)|g(x)}_{\geq 0} \qquad (\text{from } \eqref{eq:proof-psi-corr})
\end{align*}
\end{proof}
\begin{lemma}
\label{lem:avg-mu-0}
\(\Exp_{x \sim \mu_0}[g(x)] < \varepsilon\).
\end{lemma}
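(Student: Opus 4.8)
The plan is to obtain Lemma~\ref{lem:avg-mu-0} (and, by the symmetric argument, Lemma~\ref{lem:avg-mu-1}) directly from the dual feasibility conditions \eqref{eq:proof-psi-l1}--\eqref{eq:proof-psi-0} together with the decomposition $\psi = \tfrac12\mu_1 - \tfrac12\mu_0$ recorded just above. First I would note that applying \eqref{eq:proof-psi-0} to the constant (degree-$0$) polynomial $p\equiv 1$ gives $\sum_{x\in\zone^m}\psi(x) = 0$; combined with $\sum_{x}|\psi(x)| = 1$ from \eqref{eq:proof-psi-l1}, the positive and negative parts of $\psi$ each carry total mass exactly $\tfrac12$, and inputs with $\psi(x)=0$ contribute nothing, so the identities $\mu = \tfrac12\mu_0 + \tfrac12\mu_1$ and $\psi = \tfrac12\mu_1 - \tfrac12\mu_0$ hold exactly as claimed.

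Next I would rewrite the correlation of $\psi$ with $g$ using this decomposition:
\[
\sum_{x\in\zone^m}\psi(x)\,g(x) \;=\; \frac12\,\Exp_{x\sim\mu_1}[g(x)] \;-\; \frac12\,\Exp_{x\sim\mu_0}[g(x)],
\]
so that \eqref{eq:proof-psi-corr} becomes $\Exp_{x\sim\mu_1}[g(x)] - \Exp_{x\sim\mu_0}[g(x)] > 1-\varepsilon$. Since $g$ is Boolean and $\mu_0,\mu_1$ are probability distributions, we have the trivial bounds $0 \le \Exp_{x\sim\mu_0}[g(x)] \le 1$ and $0 \le \Exp_{x\sim\mu_1}[g(x)] \le 1$. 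Substituting $\Exp_{x\sim\mu_1}[g(x)] \le 1$ into the displayed inequality yields $\Exp_{x\sim\mu_0}[g(x)] < 1-(1-\varepsilon) = \varepsilon$, which is exactly Lemma~\ref{lem:avg-mu-0}; substituting $\Exp_{x\sim\mu_0}[g(x)] \ge 0$ instead yields $\Exp_{x\sim\mu_1}[g(x)] > 1-\varepsilon$, giving Lemma~\ref{lem:avg-mu-1}.

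There is no genuine obstacle here: the statement is a one-line consequence of dual feasibility, and the only thing needing a moment's care is the bookkeeping that $\mu$ splits its mass evenly between $\{\psi<0\}$ and $\{\psi>0\}$ (so that there is no leftover mass on $\{\psi=0\}$ to account for). The real work of Lemma~\ref{lem: Maj as amplification function} lies downstream of these two estimates — namely, using the $\Maj_t$ layer with $t = \Omega(\log n)$ to amplify the per-coordinate agreement between $\sign(\psi)$ and $g$ from $1-\varepsilon$ down to error $1/\mathrm{poly}(n)$, and then exploiting robustness of multilinear polynomials (Theorem~\ref{folklorerobust}) to push the assumed approximating polynomial for $f\circ\Maj_t\circ g$ through the linear operator $L$ and contradict $\adeg(f)$.
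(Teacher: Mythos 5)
Your proof is correct and follows the same route as the paper: both rewrite the correlation constraint \eqref{eq:proof-psi-corr} in terms of $\Exp_{\mu_0}[g]$ and $\Exp_{\mu_1}[g]$ (equivalently, $\sum_{\psi<0}2|\psi|g$ and $\sum_{\psi>0}2\psi g$) and then invoke the trivial bound that these expectations lie in $[0,1]$. The only cosmetic difference is that the paper manipulates the raw sums over $\{\psi>0\}$ and $\{\psi<0\}$ rather than passing through the identity $\psi = \tfrac12\mu_1 - \tfrac12\mu_0$ first, but the inequality chain is identical.
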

\begin{proof}
\begin{align*}
\Exp_{x \sim \mu_0}[g(x)]  = \sum_{x \colon \psi(x) <0 } 2|\psi(x)|\cdot g(x) & =  \sum_{x \colon \psi(x) >0 } 2\psi(x)\cdot g(x) - \sum_{x\in\{0,1\}^m} 2\psi(x)\cdot g(x), \\
& < \underbrace{\sum_{x \colon \psi(x)>0}2\psi(x)g(x) - 1}_{\leq 0} + \varepsilon < \varepsilon, \qquad (\text{from } \eqref{eq:proof-psi-corr})  
\end{align*}
where the last inequality follows from the fact that $\sum_{x \colon \psi(x)>0}\psi(x)g(x) \leq 1/2$. 
\end{proof}


Consider the following linear operator $L$ that maps functions $h:\zone^{ntm} \to \reals$ to functions $Lh : \zone^n \to \reals$, 
\begin{align}
    \label{defn:linear-operator-maj}
        Lh(z) = \Exp_{\substack{x_{11} \sim \mu_{z_1}\\ x_{12} \sim \mu_{z_1} \\ \vdots \\ x_{1t} \sim \mu_{z_1}}} \Exp_{\substack{x_{21} \sim \mu_{z_2}\\ x_{22} \sim \mu_{z_2} \\ \vdots \\ x_{2t} \sim \mu_{z_2}}}\cdots \Exp_{\substack{x_{n1} \sim \mu_{z_n}\\ x_{n2} \sim \mu_{z_n} \\ \vdots \\ x_{nt} \sim \mu_{z_n}}} [h(x_{11},\ldots , x_{1t},x_{21},\ldots ,x_{2t},\ldots ,x_{n1},\ldots ,x_{nt})]. 
\end{align}
Recall $h=f\circ \Maj_t \circ g$ and $p_h$ be $\varepsilon$-approximating polynomial for $h$. 
Thus by linearity of $L$ we have $\|L(h-p_h)\|_\infty \leq \varepsilon$. We will now observe some useful properties of the linear operator $L$.
\begin{lemma}
\label{lem:L-decreases-deg}
   \(\deg(Lp_h) \leq \deg(p_h)/d\), where $d = \adeg_{\frac{1-\varepsilon}{2}}(g)$. 
\end{lemma}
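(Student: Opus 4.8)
The plan is to show that applying the linear operator $L$ to any monomial of $p_h$ either kills it or reduces its degree by a factor of $d$, and then to use linearity of $L$ to conclude the same for $p_h$ itself. First I would expand $p_h$ in the monomial basis, $p_h = \sum_S c_S \prod_{(i,j,k)\in S} x_{ijk}$, where the index $(i,j,k)$ ranges over the $n$ copies of $f$, the $t$ copies of $\Maj_t$ inside each, and the $m$ input bits of $g$. By linearity, $Lp_h = \sum_S c_S \, L\!\left(\prod_{(i,j,k)\in S} x_{ijk}\right)$, so it suffices to bound $\deg L(\prod_{(i,j,k)\in S} x_{ijk})$ for a single monomial.

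The key observation is that the expectations in \eqref{defn:linear-operator-maj} factorize across the $n$ outer coordinates: for each $i\in[n]$, all the variables $x_{i1},\ldots,x_{it}$ (across all $t$ copies of $\Maj_t$ and all $m$ bits of $g$) are drawn independently from $\mu_{z_i}$, and coordinates for different $i$ are independent. So $L(\prod_{(i,j,k)\in S} x_{ijk}) = \prod_{i\in[n]} \Exp_{x_{i\cdot}\sim\mu_{z_i}^{\otimes t}}\!\left[\prod_{(j,k):(i,j,k)\in S} x_{ijk}\right]$. Now I would fix an outer coordinate $i$ for which $S$ contains at least one index of the form $(i,j,k)$. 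Writing $S_i := \{(j,k) : (i,j,k)\in S\}$, the inner expectation is a function of $z_i$ alone; call it $q_i(z_i)$. The crucial point is that $q_i$, viewed as a polynomial in $z_i\in\{0,1\}$, has a useful degree bound: because $\psi$ annihilates all polynomials of degree $<d$ (property \eqref{eq:proof-psi-0}), and because $\mu_1 - \mu_0 = 2\psi$ while $\mu_1 + \mu_0 = 2\mu$ is a fixed distribution, the map $z_i \mapsto \Exp_{x\sim\mu_{z_i}}[\text{(any fixed monomial in the $x$ bits)}]$ is an affine function of $z_i$ whose "slope" is a correlation of $\psi$ with a monomial. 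When the monomial in the $x$-bits (for coordinate $i$) has positive degree as a multilinear polynomial over the $m$ bits of a single $g$-block — and here I need to be careful, since $S_i$ may touch several of the $t$ $\Maj_t$-copies, each with its own $m$-bit block — the relevant correlation $\sum_x \psi(x) \cdot (\text{monomial})$ vanishes whenever that monomial has degree $<d$; hence if every $\Maj_t$-block touched by $S_i$ is touched in fewer than $d$ bits, $q_i(z_i)$ is the constant function (the $z_i=0$ and $z_i=1$ values agree), contributing degree $0$. Therefore a monomial $S$ can survive $L$ only if, for every outer coordinate $i$ that $S$ touches at all, at least one of the $\Maj_t$-blocks under $i$ is touched in $\geq d$ bits — meaning $S$ spends at least $d$ of its variables per surviving outer coordinate. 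Consequently $\deg(Lp_h) \le \deg(p_h)/d$, since each surviving monomial of $Lp_h$ has one variable $z_i$ per outer coordinate $i$, and each such coordinate consumed $\ge d$ variables of the original monomial.

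The step I expect to be the main obstacle is handling the interaction between the $t$ copies of $\Maj_t$ and the factorization: a single monomial $S$ restricted to outer coordinate $i$ can simultaneously touch several $g$-blocks $(i,1,\cdot),\ldots,(i,t,\cdot)$, and all those blocks share the \emph{same} sampling variable $z_i$ but are sampled with \emph{independent} draws from $\mu_{z_i}$. So I would write $q_i(z_i) = \prod_{j : S\text{ touches block }(i,j,\cdot)} \Exp_{x\sim\mu_{z_i}}\!\big[\prod_{k:(i,j,k)\in S} x_{k}\big]$, a product of per-block expectations. Each factor is affine in $z_i$ of the form $a_j + b_j z_i$ where $b_j = \sum_x 2\psi(x)\prod_k x_k$ vanishes if that block is touched in $<d$ bits (and more precisely $a_j = \sum_x 2\mu(x)\prod_k x_k$ is just a fixed number, though I should note that when a factor has $b_j=0$ it is simply a constant and doesn't raise the degree of $q_i$). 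A product of affine functions of the single variable $z_i$, once multilinearized over $z_i\in\{0,1\}$, is again affine in $z_i$; and it is the \emph{constant} affine function precisely when every factor is constant, i.e.\ every touched block under $i$ is touched in $<d$ bits. This resolves the obstacle: the argument only needs that \emph{some} block under $i$ is touched $\ge d$ times to "spend" $d$ variables, which is exactly what survival of the monomial forces. I would then collect these observations, invoke linearity of $L$, and conclude $\deg(Lp_h) \le \deg(p_h)/d$ with $d = \adeg_{\frac{1-\varepsilon}{2}}(g) = \Theta(\adeg(g))$ by Lemma~\ref{lem:error-reduction-adeg}.
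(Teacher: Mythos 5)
Your argument is correct and follows the same route as the paper (which in turn cites \cite{She13a}): reduce by linearity to a single monomial, observe that $L$ factorizes it across the $n$ outer coordinates into functions affine in each $z_i$, and note that a coordinate contributes to the degree of $Lp_h$ only if its piece of the monomial has degree $\geq d$, so each surviving coordinate ``pays'' at least $d$ original variables. The extra care you take with the $t$ blocks that share a single $z_i$ is exactly what the paper's terse ``if $\deg(p_i)<d$ then $\Exp_{\mu_1}[p_i]=\Exp_{\mu_0}[p_i]$'' is eliding, and it is the right justification; the only two slips are immaterial: your $a_j$ should be $\Exp_{\mu_0}[\cdot]$ rather than $2\Exp_\mu[\cdot]$, and ``constant precisely when every factor is constant'' holds only in the ``if'' direction (e.g.\ $z_i(1-z_i)$ multilinearizes to $0$), but that is the only direction the bound needs.
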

\begin{proof}
The proof is the same as in \cite{She13a}, but for completeness, we present it here.     
Since $L$ is a linear operator, it suffices to consider the effect of $L$ on monomials of $p_{h}$. A monomial $\mathcal{M}$ of $p_{h}$ is of the form $\prod_{i=1}^np_i(x_{i1},\ldots ,x_{im})$. 
Therefore, \((L\mathcal{M})(z) = \prod_{i=1}^n \Exp_{\mu_{z_i}}[p_i]\). If $p_i$ is of degree $< d = \adeg_{\frac{1-\varepsilon}{2}}(g)$, then $\Exp_{\mu_1}[p_i] = \Exp_{\mu_0}[p_i]$ follows from \eqref{eq:proof-psi-0}. Therefore it doesn't contribute to the degree of $L\mathcal{M}$. 
Otherwise when $p_i$ is of degree $\geq d$, then it contributes at most \emph{one} to the degree of $L\mathcal{M}$. Hence, 
\begin{align*}
\deg(L\mathcal{M}) \leq |\{i \in [n] \mid \deg(p_i) \geq d\}| \leq \frac{\deg(\mathcal{M})}{d}.
\end{align*}
The linearity of $L$ completes the proof. 
\end{proof}
We now show that $Lp_h$ is in fact an approximating polynomial for $f$. 
\begin{lemma}
Fix $0 < \delta <1/2$. Recall $p_h$ is an $\varepsilon$-approximating polynomial for $h = f \circ \Maj_t \circ g$. Let $t = \Theta(\log n + \log(1/\delta))$ where the constant in $\Theta(\cdot)$ depends on $\varepsilon$. Then, $Lp_h$ is a $( \delta + \varepsilon)$-approximating polynomial for $f$. That is, 
    \[\|f - Lp_h \|_\infty \leq \|f - Lh \|_\infty + \|Lh - Lp_h \|_\infty \leq \delta + \varepsilon.\]
\end{lemma}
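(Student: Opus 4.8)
The plan is to show that applying the linear operator $L$ to the $\varepsilon$-approximating polynomial $p_h$ for $h = f \circ \Maj_t \circ g$ yields a polynomial that approximates $f$ well, by splitting the error into two pieces via the triangle inequality: $\|f - Lp_h\|_\infty \leq \|f - Lh\|_\infty + \|Lh - Lp_h\|_\infty$. The second term is already controlled: since $p_h$ is $\varepsilon$-close to $h$ pointwise and $L$ is an average (hence a convex combination) of evaluations, $\|Lh - Lp_h\|_\infty \leq \|h - p_h\|_\infty \leq \varepsilon$. So the real work is to bound the first term $\|f - Lh\|_\infty \leq \delta$ for $t = \Theta(\log n + \log(1/\delta))$.

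For the first term, fix an input $z = (z_1,\dots,z_n) \in \zone^n$. By definition, $Lh(z)$ is the expectation of $h(x_{11},\dots,x_{nt})$ where each block $x_{ij} \sim \mu_{z_i}$ independently, and $h(x) = f\big(\Maj_t(g(x_{11}),\dots,g(x_{1t})),\dots,\Maj_t(g(x_{n1}),\dots,g(x_{nt}))\big)$. The idea is that for each $i$, the $t$ independent samples $g(x_{i1}),\dots,g(x_{it})$ with $x_{ij} \sim \mu_{z_i}$ are i.i.d. Boolean random variables whose mean is, by Lemmas~\ref{lem:avg-mu-1} and~\ref{lem:avg-mu-0}, either $> 1-\varepsilon$ (if $z_i = 1$) or $< \varepsilon$ (if $z_i = 0$) — in either case bounded away from $1/2$ by a constant. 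By a Chernoff bound, $\Maj_t$ of these $t$ samples equals $z_i$ except with probability at most $2^{-\Omega(t)}$, where the constant in the exponent depends on $\varepsilon$. Taking $t = \Theta(\log n + \log(1/\delta))$ (constant depending on $\varepsilon$) makes this failure probability at most $\delta/n$. A union bound over the $n$ coordinates $i \in [n]$ then shows that, except with probability at most $\delta$ over all the samples, the vector $\big(\Maj_t(g(x_{11}),\dots),\dots,\Maj_t(g(x_{n1}),\dots)\big)$ equals $z$ exactly, in which case $h$ evaluates to $f(z)$. Since $h$ is Boolean-valued (taking values in $\{0,1\}$), conditioning on the bad event of probability $\leq \delta$ changes the expectation by at most $\delta$; hence $|Lh(z) - f(z)| \leq \delta$.

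Combining the two bounds gives $\|f - Lp_h\|_\infty \leq \delta + \varepsilon$, which is the claim. I expect the main subtlety (not obstacle, but the point requiring care) to be the probabilistic argument for the first term: one must correctly observe that conditioned on the fixed Boolean string $z_i$, the samples $x_{i1},\dots,x_{it}$ are drawn i.i.d. from the \emph{same} distribution $\mu_{z_i}$, so that $g(x_{i1}),\dots,g(x_{it})$ are genuine i.i.d. Bernoulli variables to which Chernoff applies — and that the different blocks $i$ are mutually independent, so the union bound over $i \in [n]$ is legitimate. The bookkeeping of how the constant in $t = \Theta(\log n + \log(1/\delta))$ depends on $\varepsilon$ (through the constant-size gap $\min\{1 - 2\varepsilon, \cdot\}$ between the sample mean and $1/2$) is routine once the Chernoff setup is in place. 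Note also that this lemma feeds into Lemma~\ref{lem: Maj as amplification function} only with $\delta$ taken to be a small constant and $t = \Theta(\log n)$, but stating it with general $\delta$ costs nothing and may be useful elsewhere.
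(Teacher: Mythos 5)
Your proof is correct and reaches the stated bound, but it takes a somewhat more elementary route for the first term than the paper does. The second-term bound $\|Lh - Lp_h\|_\infty \leq \varepsilon$ via convexity of $L$ is identical. For $\|f - Lh\|_\infty \leq \delta$, the paper pushes the expectation through $f$ and $\Maj_t$ (implicitly using that these are multilinear extensions applied to independent Boolean samples) to obtain the algebraic identity $Lh(z) = f(z')$ with $z'_i = \Maj_t(\Exp_{\mu_{z_i}}[g],\ldots,\Exp_{\mu_{z_i}}[g])$; it then argues $\|z - z'\|_\infty \leq \delta/n$ from Chernoff together with Lemmas~\ref{lem:avg-mu-0} and~\ref{lem:avg-mu-1}, and finishes by invoking the folklore multilinear-robustness fact (Theorem~\ref{folklorerobust}). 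You skip both the identity and the robustness lemma: you argue directly that with probability at least $1-\delta$ over the samples every $\Maj_t$ gate outputs $z_i$ (Chernoff at scale $\delta/n$ per gate plus a union bound over the $n$ gates), on which event $h$ evaluates exactly to $f(z)$, and you then use $h\in\zone$ to turn the $\delta$ failure probability into the bound $|Lh(z) - f(z)| \leq \delta$. Both arguments hinge on the same Chernoff concentration at $t = \Theta_{\varepsilon}(\log n + \log(1/\delta))$, so they are equivalent in substance; yours is more self-contained (it needs neither the multilinear-extension viewpoint nor Theorem~\ref{folklorerobust}), whereas the paper's version isolates the structural fact that $Lh$ is $f$ evaluated at a perturbed input $z'$, which is what makes the link to polynomial robustness transparent.
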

\begin{proof}
It suffices to show \(\|f - Lh \|_\infty \leq \delta \). To this end, consider $Lh(z)$. 
\begin{align*}
Lh(z) & = \Exp_{\substack{x_{11} \sim \mu_{z_1}\\ x_{12} \sim \mu_{z_1} \\ \vdots \\ x_{1t} \sim \mu_{z_1}}} \Exp_{\substack{x_{21} \sim \mu_{z_2}\\ x_{22} \sim \mu_{z_2} \\ \vdots \\ x_{2t} \sim \mu_{z_2}}}\cdots \Exp_{\substack{x_{n1} \sim \mu_{z_n}\\ x_{n2} \sim \mu_{z_n} \\ \vdots \\ x_{nt} \sim \mu_{z_n}}} [f\circ \Maj_t \circ g (x_{11},\ldots , x_{1t},x_{21},\ldots ,x_{2t},\ldots ,x_{n1},\ldots ,x_{nt})] \\
& = f \left(\Maj_t\left(\Exp_{\mu_{z_1}}[g],\ldots ,\Exp_{\mu_{z_1}}[g]\right),\Maj_t\left(\Exp_{\mu_{z_2}}[g],\ldots ,\Exp_{\mu_{z_2}}[g]\right),\ldots ,\Maj_t\left(\Exp_{\mu_{z_n}}[g],\ldots ,\Exp_{\mu_{z_n}}[g]\right)\right) \\
& = f( z'_1, z'_2, \ldots ,z'_n), 
\end{align*}
where $\| z- z'\|_\infty \leq \delta/n$ because $t=\Theta(\log n + \log(1/\delta))$ and Lemmas~\ref{lem:avg-mu-0} and \ref{lem:avg-mu-1}. 

Therefore, for any $z\in \zone^n$, $|f(z) -Lh(z)| = |f(z) - f(z')| \leq \delta$, since $\|z-z'\|_\infty \leq \delta/n$ and Lemma~\ref{folklorerobust}.
\end{proof} 
Since $Lp_h$ is a $(\delta + \varepsilon)$-approximating polynomial for $f$, we also have \(\deg(Lp_h) \geq \adeg_{\delta + \varepsilon}(f)\). 
We therefore have the following inequalities 
\[ \adeg_{\delta + \varepsilon}(f) \leq \deg(Lp_h) \leq \frac{\deg(p_h)}{\adeg_{\frac{1-\varepsilon}{2}}(g)}. \]
Rewriting we have
\begin{align}
\label{eq:thm-maj}
\adeg_\varepsilon(f\circ \Maj_t\circ g) = \deg(p_h) \geq \adeg_{\delta + \varepsilon}(f) \cdot \adeg_{\frac{1-\varepsilon}{2}}(g).
\end{align}
This completes the proof of Lemma~\ref{lem: Maj as amplification function}. 
\end{proof}

\subsection{Proof of Theorem~\ref{th: RecMaj composition}}

Let $f:\zone^n \to \mathbb{R}$ and $g\colon\zone^m\to\mathbb{R}$ be two functions. We say that $f$ is a \emph{projection} of $g$, denoted $f\leq_{\text{proj}} g$, iff 
\[f(x_1,\ldots ,x_n) = g(a_1,\ldots ,a_m)\] 
for some $a_i \in \zone \cup \{x_1,x_2,\ldots ,x_n\}$. 
That is, $f$ is obtained from $g$ by substitutions of variables by variables of $f$ or constants in $\zone$. 
We note an easy to observe fact about approximate-degree of projections of functions.
\begin{fact}
\label{fact:projections}
    Let $f\colon\zone^n\to\zone$ and $g\colon\zone^m\to\zone$ be such that $f \leq_{\text{proj}} g$, i.e., $f$ is a projection of $g$. Then, \(\adeg(f) \leq \adeg(g)\). 
\end{fact}

Consider the recursive-majority function $\Maj_3^d$ given by the complete $3$-ary tree of height $d$ with internal nodes labeled by $\Maj_3$ and the leaves are labeled by distinct variables. Fix $d \geq C\log\log n$ for a large enough constant $C$. 

Firstly observe that $\Maj_3^d$ is \emph{not} a symmetric function. 
Secondly, it also doesn't have \emph{full} approximate degree (\cite{ReichardtS12}). Thirdly, and finally, its approximate degree is also \emph{not} equal to $\Theta\left(\sqrt{\bs(\Maj_3^d)}\right)$ (see Lemma~\ref{lem: approximate degree of recMaj}, it follows from the fact that $\bs(\Maj_3^d)$ is linear with $\adeg(\Maj_3^d)$). Thus, none of the previous works~\cite{Sherstov12,DavidBGK18,CKMPSS23} imply that approximate degree composes when one of the (inner or outer) functions is recursive-majority $\Maj_3^d$.

\begin{proof}[Proof of Theorem~\ref{th: RecMaj composition}]

Let $\Maj_3^d$ be the recursive-majority function obtained by the complete $3$-ary tree of height $d$ with internal nodes labeled by $\Maj_3$ and the leaves are labeled by distinct variables.  
Let $f\colon\zone^n \to \zone$ be an arbitrary function and consider the approximate degree of the composed function $f\circ \Maj_{t}\circ \Maj_3^d$ where $t=\Theta(\log n)$.   
\begin{align}
    \adeg(f \circ \MAJ_{t} \circ \Maj_3^d) & \leq \adeg(f \circ \MAJ_3^{C \log t} \circ \Maj_3^d) = \adeg(f \circ \Maj_3^d \circ \MAJ_3^{C \log t}) \label{eq: restriction of MAJ adeg ub} \\
    & = O(\adeg(f \circ \Maj_3^d)\cdot\adeg(\MAJ_3^{C \log t}))\label{eq:adeg-composition-ub} \\
    &= O(\adeg(f \circ \Maj_3^d)\cdot\text{poly}(t)). \label{eq: recursive maj ub.}
\end{align}
The first inequality in \eqref{eq: restriction of MAJ adeg ub} follows from the fact that $\Maj_t$ is a projection of $\Maj_3^{C\log t}$ (Theorem~\ref{thm: Goldreich's Maj Constuction}) and Fact~\ref{fact:projections}. Then \eqref{eq:adeg-composition-ub} follows from Theorem~\ref{thm:robust-comp}. 

On the other hand, from Lemma~\ref{lem: Maj as amplification function}, for $t=\Omega(\log n)$ we have 
\begin{align*}
    \adeg(f\circ\Maj_t \circ \Maj_3^d) = \Omega(\adeg(f)\cdot\adeg(\Maj_3^d)).
\end{align*}
Combining with \eqref{eq: recursive maj ub.}, we obtain the lower bound
\begin{align*}
   \adeg(f\circ \Maj_3^d) =  \Omega\left( \frac{\adeg(f) \cdot \adeg(\Maj_3^d)}{\polylog(n)}\right).
\end{align*}
A similar argument shows the following inequalities, where in the last two inequalities we use Theorem~\ref{thm: Valiant MAJ construction} instead of Theorem~\ref{thm: Goldreich's Maj Constuction}, for $d = \Omega(\log n)$,    
\begin{itemize}
\item $\adeg(\Maj_3^d\circ f) = \widetilde\Omega(\adeg(f)\cdot\adeg(\Maj_3^d))$, 
\item \(\adeg( f \circ (\AND_2 \circ \OR_2)^d) = \widetilde\Omega(\adeg(f)\cdot \adeg((\AND_2 \circ \OR_2)^d) )\), and
\item \(\adeg((\AND_2 \circ \OR_2)^d\circ f) = \widetilde\Omega(\adeg(f)\cdot \adeg((\AND_2 \circ \OR_2)^d) )\). 
\end{itemize}
\end{proof}

\section{Composition theorem for recursive functions}
\label{sec:main-thm}
In this section we prove our main theorem (Theorem~\ref{th:for any recursive function}). It shows that the approximate degree composes when either the inner function or the outer function is a recursive function. More formally, 


\anyrecursivefunction*

The following cases of Theorem~\ref{th:for any recursive function} follows from prior works:
\begin{enumerate}
    \item  $f$ or $g$ equals $h^d$ for $h \in\{\PARITY,\neg\PARITY \}$ \cite{Sherstov12}.
    \item $f=h^d$ and $h \in \{\AND,\OR\}$ \cite{DavidBGK18}. 
\end{enumerate}
Therefore, it remains to prove Theorem~\ref{th:for any recursive function} when $h \notin \{\PARITY, \neg\PARITY, \AND, \OR\}$.   
A crucial technical insight that makes the proof work is that when $h \notin \{\PARITY, \neg\PARITY, \AND, \OR\}$ then $\AND_2$ and $\OR_2$ are projections of $h^3$. We can thus simulate $\MAJ$ using a small power of $h$. Thereafter, Lemma~\ref{lem: Maj as amplification function} is used to conclude Theorem~\ref{th:for any recursive function}. We now work out the details. We first state the main technical lemma we need for Theorem~\ref{th:for any recursive function} and then complete the proof of the theorem. Finally, we prove the technical lemma in Section~\ref{sec:technicallemma}.

\begin{restatable}{lemma}{simulateANDOR}\label{thm:simulate_AND-OR}
Let $h:\boolfn{t}$ (where $t\geq 2$) be a Boolean function which depends on all $t$ variables and is not equal to $\PAR/\OR/\AND$. The function $\AND_2$ (and similarly $\OR_2$) can be obtained by setting all but two variables to constants in $h^k$ for $k \leq 3$.
\end{restatable}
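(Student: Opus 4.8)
The plan is to split into the monotone and non-monotone cases, as the proof sketch indicates. First suppose $h$ is monotone. Since $h$ depends on all $t$ variables and is not $\AND$ or $\OR$, it is neither identically determined by a single threshold at the top nor bottom of the cube; I would locate two incomparable sensitive edges to extract $\AND_2$ and $\OR_2$ as subcubes of the hypercube of $h$. Concretely, because $h\neq\OR$ there is a minimal $1$-input (minterm) $a$ with $\lvert a\rvert\geq 2$; restricting to two coordinates in the support of $a$ and fixing the rest to the pattern of $a$ (ones on the remaining support coordinates, zeros elsewhere) forces those two coordinates to behave like $\AND_2$ by minimality. Symmetrically, because $h\neq\AND$ there is a maximal $0$-input (maxterm) whose complement has size $\geq 2$, and the analogous restriction yields $\OR_2$. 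Here $k=1$ suffices; no composition is needed.

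For the non-monotone case (with $h\notin\{\PARITY,\neg\PARITY\}$), the obstacle is that a single restriction of $h$ might yield only one of $\AND_2,\OR_2$, and possibly only a ``rotated'' version (an AND/OR of literals rather than of variables). The key move is to use non-monotonicity to build the negation gadget: since $h$ is not monotone, there is an input $x$ and a coordinate $i$ such that flipping $x_i$ from $0$ to $1$ decreases the value of $h$; fixing all coordinates except $i$ to this witnessing pattern makes $h$ compute $\neg x_i$ (the negation function on one bit). Since $h$ is not $\PARITY$ or $\neg\PARITY$, it has a sensitive coordinate whose ``sensitivity direction'' is not the same at all points, which I would leverage — together with the pigeonhole/casework over whether $h$ restricted to some pair of coordinates is an AND-type or OR-type gate of literals — to guarantee at least one of $\AND_2,\OR_2$ of literals appears as a restriction of $h$ itself ($k=1$). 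Then one more layer of composition ($k=2$) lets me feed the negation gadget into the literal inputs to un-rotate, turning $\mathsf{AND}/\mathsf{OR}$ of literals into $\mathsf{AND}_2/\mathsf{OR}_2$ of variables, and a final layer ($k=3$) converts whichever of $\AND_2/\OR_2$ we obtained into the other via De Morgan, composing with negation gadgets on inputs and output. This accounts for the bound $k\leq 3$.

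The main obstacle I anticipate is the bookkeeping in the non-monotone case: verifying that \emph{some} restriction of $h$ to two live variables actually computes a two-variable AND/OR of literals (as opposed to something degenerate like a single literal, a constant, or a parity-like function on two bits), and doing so using only the hypotheses $h\neq\PARITY,\neg\PARITY$ and $h$ depends on all variables. The cleanest route is probably a case analysis on the restriction of $h$ to a well-chosen pair of sensitive coordinates with the other $t-2$ coordinates fixed: a Boolean function on two bits is one of the constants, a literal, an AND/OR of two literals, or a XOR/XNOR of the two bits; I would argue that if every such restriction were a literal/constant or a XOR/XNOR, then $h$ would be forced to be $\PARITY$, $\neg\PARITY$, $\AND$, $\OR$, or a function not depending on all variables — contradiction. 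Once one AND/OR-of-literals restriction is secured, the composition steps to normalize it to $\AND_2$ and to obtain $\OR_2$ are routine applications of De Morgan via the negation gadget, and they fit within three levels of recursion of $h$.
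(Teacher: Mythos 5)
Your proposal follows essentially the same two-phase structure as the paper: for monotone $h$, extract $\AND_2$ from a minterm of weight $\geq 2$ (and dually $\OR_2$ from a maxterm) with $k=1$; for non-monotone $h$, build a negation gadget from a decreasing edge, locate a ``shifted'' $\AND_2/\OR_2$ face of the hypercube, and then normalize via negations and De Morgan. The paper organizes the existence of the shifted $\AND_2/\OR_2$ as a standalone intermediate lemma (its Lemma~\ref{lem:structure_boolean}): every $h$ that depends on all $t$ variables and is not $\PARITY/\neg\PARITY$ has a \emph{minimal sensitive block of size two}, proven by induction on $t$ via a red/blue coloring of edges in the $t$-th direction and a path argument to find adjacent red and blue edges.

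Two points of substance. First, your argument leaves that key structural claim as a sketch: you say you ``would argue'' that if every two-variable restriction were a constant, literal, or XOR/XNOR then $h$ would be forced to be degenerate, but this is precisely the content that needs a proof, and it is where the real work lies. (The sketch is in the right direction: the cleanest version is that if coordinate $i$ is sensitive at some point and insensitive at another, a path argument in the $(t-1)$-cube gives two adjacent points where the $i$-sensitivity flips, and the resulting $2\times 2$ face is a shifted $\AND_2/\OR_2$; if instead every coordinate is either everywhere-sensitive or everywhere-insensitive, $h$ is an affine function of the relevant coordinates, i.e.\ $\PARITY$ up to negation on a subset, contradicting dependence on all variables unless $h\in\{\PARITY,\neg\PARITY\}$.) Second, your list of ``forced'' degenerate outcomes erroneously includes $\AND$ and $\OR$: both of these \emph{do} contain shifted $\AND_2/\OR_2$ subcubes, so they cannot arise as obstructions in the structural lemma; they only need to be excluded because the monotone branch relies on the existence of a weight-$\geq 2$ minterm/maxterm. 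A smaller accounting note: doing ``un-rotation'' and then De~Morgan \emph{sequentially} as you describe would cost four levels, not three; the paper gets $k\leq 3$ by observing that De~Morgan's negation of all inputs simultaneously serves as the un-rotation, so in the shifted-$\OR_2$ case the single expression $h_1\bigl(h(h_1(x),y,b_3,\ldots,b_t)\bigr)$ already realizes $\AND_2$ in depth three.
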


We now present the proof of Theorem~\ref{th:for any recursive function} using Lemma~\ref{thm:simulate_AND-OR}.

\begin{proof}[Proof of Theorem~\ref{th:for any recursive function}] 
    Let $h\colon\boolfn{t}$ be such that $h\notin\{\PARITY,\neg\PARITY,\AND,\OR\}$.   
    We know from Lemma~\ref{lem: Maj as amplification function} that 
    \(\adeg(f\circ\Maj_{k}\circ h^d ) = \Omega(\adeg(f) \adeg(h^d))\) where $k=\Theta(\log n)$. Like in the proof of Theorem~\ref{th: RecMaj composition}, we will simulate $\Maj_k$ using $h^\ell$ for sufficiently large $\ell$. 
    From Lemma~\ref{thm:simulate_AND-OR}, it follows that $(\AND_2 \circ \OR_2)^\ell$ is a projection of $h^{6\ell}$. Therefore, we obtain from Theorem~\ref{thm: Valiant MAJ construction} that $\Maj_k$ is a projection of $h^{C\log k}$ for some constant $C>0$. 
    We thus have the following sequence of inequalities, 
    \begin{align*}
        \adeg(f\circ h^d) \geq \adeg(f\circ \Maj_k \circ h^{(d-C\log k)}) & = \Omega(\adeg(f)\adeg(h^{(d-C\log k)})) \\
        & = \Omega\left(\frac{ \adeg(f) \adeg(h^d)}{t^{C\log k}}\right) = \Omega\left(\frac{ \adeg(f) \adeg(h^d)}{\polylog(n)}\right).
    \end{align*}
    Note that the last equality above uses the fact that $t$ is a constant. When $h^d$ is the outer function then we don't need $t$ to be a constant, while the rest of the argument remains the same to give     
    \begin{align*}
        \adeg(h^d \circ g)  = \Omega\left(\frac{\adeg(h^d)\adeg(g)}{\polylog(n)}\right). 
    \end{align*}
\end{proof}
This completes the proof of the main theorem. We now present a proof of Lemma~\ref{thm:simulate_AND-OR}. 

\subsection{Proof of the main technical lemma (Lemma~\ref{thm:simulate_AND-OR})}\label{sec:technicallemma}

We proceed by proving an intermediate result (Lemma~\ref{lem:structure_boolean}) before going to the proof of Lemma~\ref{thm:simulate_AND-OR}.

Suppose we are allowed to \emph{modify} a Boolean function by two operations:
negating some of its variables, and restricting some of the variables to constant values. Lemma~\ref{lem:structure_boolean} proves that almost every Boolean function can be modified to either an $\AND_2$ or an $\OR_2$ function. A restriction of the variables amounts to looking at a smaller hypercube translated to a new point, and negating a variable amounts to rotating the smaller hypercube. In other words, we want to show that there is a \emph{shifted} $\AND_2$ or $\OR_2$ in the Boolean hypercube of $h$ (see Figure~\ref{fig:shifted_OR} for an example).

This shifted $\AND_2/\OR_2$ in the Boolean hypercube of a Boolean function can be concretely defined by the concept of a sensitive block. For a block of variables $S\subseteq [n]$ and an input $x\in \zone^n$, define $x^{\oplus S} \in \zone ^n$ to be the input which flips exactly the variables in S at the input $x$. Given a Boolean function $f:\boolfn{n}$, a block $S$ is called sensitive on $x$ iff $f(x) \neq f(x^{\oplus S})$. A block $S$ is called \emph{minimal sensitive} for $x$ at $f$, if no subset of $S$ is sensitive for $x$ at $f$.

\begin{figure}
    \centering
    \includegraphics[scale=0.1]{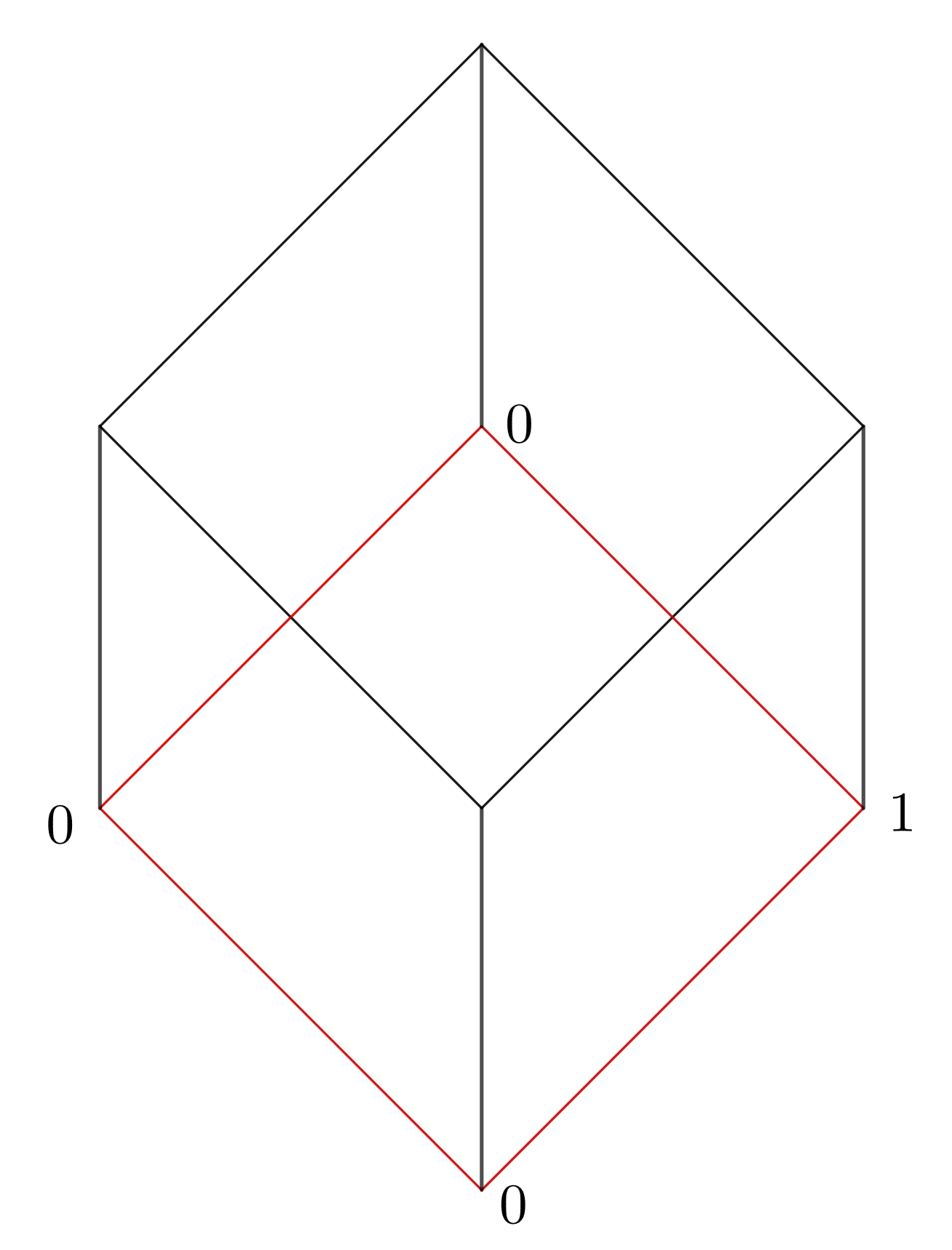}
    \caption{A function on $3$ bits with a shifted $\OR$ marked with red edges.}
    \label{fig:shifted_OR}
\end{figure}


Notice that a shifted $\AND_2/\OR_2$ is a square with three vertices labelled $0$ and one vertex labelled $1$ or vice versa. This gives us a minimal sensitive block on the vertex opposite to the unique value. It can be easily verified that the converse is also true. So, we define a function to have a shifted $\AND_2/\OR_2$ iff it has a minimal sensitive block of size $2$.


We show below that almost all functions have a minimal sensitive block of size $2$.

\begin{lemma} \label{lem:structure_boolean}
Let $h:\boolfn{t}$ (where $t\geq 2$) be a Boolean function which depends on all $t$ variables and is not equal to $\PAR$. Then, there exists an $x\in \zone^t$ such that $h$ has a minimal sensitive block of size $2$ on $x$.
\end{lemma}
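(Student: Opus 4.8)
\textbf{Proof plan for Lemma~\ref{lem:structure_boolean}.}

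The plan is to show that if $h$ depends on all $t$ variables and has \emph{no} minimal sensitive block of size $2$, then $h$ must be $\PARITY$ or $\neg\PARITY$, contradicting the hypothesis $h\neq\PAR$. (Note: the statement as written rules out $\PAR$, i.e.\ both $\PARITY$ and $\neg\PARITY$, since the macro expands to $\mathsf{PARITY}/\neg\mathsf{PARITY}$; I will derive that $h$ is one of these two.)

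First I would record the basic structural fact connecting minimal sensitive blocks to the combinatorial geometry of the hypercube. Since $h$ depends on all $t$ variables, every singleton $\{i\}$ is sensitive for \emph{some} input; but a singleton being sensitive on $x$ means $\{i\}$ is itself a minimal sensitive block of size $1$, which is allowed. So the real content is: suppose every \emph{minimal} sensitive block has size $1$ or size $\geq 3$, and we want to rule out size $\geq 3$ actually occurring, i.e.\ show every minimal sensitive block has size exactly $1$. The key step is a local analysis: take any input $x$ and any minimal sensitive block $S$ on $x$ with $|S|\geq 2$; I want to produce from it a minimal sensitive block of size exactly $2$. Pick two coordinates $i,j\in S$ and consider the $2$-dimensional subcube through $x$ in directions $i,j$ (four points: $x$, $x^{\oplus\{i\}}$, $x^{\oplus\{j\}}$, $x^{\oplus\{i,j\}}$). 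By minimality of $S$, flipping only $i$ or only $j$ from $x$ does not change the value of $h$ (those are proper nonempty subsets of $S$), so $h(x)=h(x^{\oplus\{i\}})=h(x^{\oplus\{j\}})$. If $h(x^{\oplus\{i,j\}})\neq h(x)$ then $\{i,j\}$ is a minimal sensitive block of size $2$ on $x$ and we are done. Otherwise $h$ is constant on this entire $2$-subcube; the idea is to then "move" within the hypercube: restrict attention to the hyperplane obtained by fixing all coordinates outside $\{i,j\}$ appropriately, and argue that since $h$ genuinely depends on coordinate $i$ (and on $j$), there is \emph{some} input at which the edge in direction $i$ is sensitive, and then chase along a path. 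More precisely, I would argue: if $h$ has no minimal sensitive block of size $2$, then for every input $y$ and every pair $i\neq j$, the edge in direction $i$ at $y$ and the edge in direction $i$ at $y^{\oplus\{j\}}$ have the \emph{same} sensitivity status (both sensitive or both insensitive) — because if they differed, the $2$-subcube $\{y,y^{\oplus\{i\}},y^{\oplus\{j\}},y^{\oplus\{i,j\}}\}$ would be a shifted $\AND_2/\OR_2$, giving a minimal sensitive block of size $2$ on the corner opposite the unique value.

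Second, I would leverage that edge-sensitivity invariance. Fix a direction $i$; the statement "the $i$-edge at $y$ is sensitive" is, by the previous paragraph, invariant under flipping any single coordinate $j\neq i$, hence (by connectivity of the subcube spanned by the coordinates $\neq i$) it is invariant under flipping \emph{any} subset of the coordinates $\neq i$. Therefore "the $i$-edge at $y$ is sensitive" depends only on $y_i$ — it is either true for all $y$, or true for exactly those $y$ with $y_i=0$, or $y_i=1$, or never. Since $h$ depends on coordinate $i$, it cannot be "never". In the two middle cases the $i$-edge sensitivity depends on $y_i$, which I claim is impossible: the $i$-edge at $y$ and the $i$-edge at $y^{\oplus\{i\}}$ are literally the same edge, so they have the same sensitivity status, ruling out dependence on $y_i$. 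Hence for every coordinate $i$, \emph{every} edge in direction $i$ is sensitive. A Boolean function on $\zone^t$ in which every edge in every direction is sensitive is exactly $\PARITY$ or $\neg\PARITY$ (flipping any single bit flips the output; this forces $h(x) = h(0^t)\oplus \bigoplus_i x_i$). This contradicts $h\neq\PAR$, completing the proof.

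The main obstacle I anticipate is the "moving within the hypercube" step — cleanly establishing that edge-sensitivity in a fixed direction is invariant under flipping other coordinates, and then correctly concluding it cannot depend on the $i$-th coordinate either. The subtlety is that one must be careful that a shifted $\AND_2/\OR_2$ square genuinely yields a minimal sensitive \emph{block} of size $2$ (the block being the pair of directions, anchored at the corner opposite to the unique output value, where neither single-direction edge is sensitive) — this is exactly the equivalence the paragraph before the lemma already notes, so I would invoke it directly rather than re-prove it. Once the invariance is in hand, the reduction to $\PARITY/\neg\PARITY$ is immediate, and the lemma follows by contraposition.
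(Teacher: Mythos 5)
Your proof is correct, and it takes a genuinely different route from the paper's. Both arguments hinge on the same local observation: if the $i$-edge at $y$ and the $i$-edge at $y^{\oplus\{j\}}$ have different sensitivity statuses, the $2$-subcube through $y$ in directions $i,j$ has a $1$-vs-$3$ labelling, hence yields a minimal sensitive block of size $2$ at the corner diagonally opposite the unique value. The paper applies this observation to one designated direction $x_t$, colours $t$-edges red/blue, finds an adjacent red/blue pair if both colours occur, and in the ``all blue'' case peels off $x_t$ (so $h = g_0 \oplus x_t$) and \emph{recurses} on $g_0$, with the base case $t=2$ and a check that $g_0$ inherits ``depends on all variables, not $\PAR$.'' You instead argue by contraposition and avoid induction entirely: you assume no size-$2$ minimal block, deduce that $i$-edge sensitivity is invariant under flipping any single $j\neq i$, propagate this by connectivity of the codimension-$1$ subcube to get that it depends only on $y_i$, and then kill the remaining dependence by noting the $i$-edge at $y$ and at $y^{\oplus\{i\}}$ are literally the same edge. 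Since $h$ depends on each $i$, every edge in every direction is sensitive, which pins $h$ down as $\PARITY$ or $\neg\PARITY$. Your version is a touch slicker in that it treats all coordinates symmetrically and never needs the bookkeeping that $g_0$ still satisfies the lemma's hypotheses; the paper's version keeps the argument local to one coordinate at a time at the cost of an induction. Both are sound and of essentially the same length once details are filled in.
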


\begin{proof} 
We will prove the result using induction on the variables. The statement can be easily verified for $t=2$. 

Define $g_0$ (and $g_1$) to be the restrictions of $h$ by setting $x_t=0$ (and $x_t=1$) respectively. Let $e_y$ be the edge $((y,0),(y,1))$ in the Boolean hypercube, and $S_t := \{e_y: y \in \zone^{t-1}\}$. Color an edge $e_y$ red if $g_0(y) = g_1(y)$, and blue otherwise.

Notice that not all the edges in $S_t$ can be red, otherwise $h$ does not depend on $x_t$. Suppose all the edges in $S_t$ are blue, i.e, $g_1 = \neg g_0$ (in other words, $h = g_0 \oplus x_t$). Since $h$ depends on all variables, then $g_0$ depends on all variables $x_1, x_2, \cdots, x_{t-1}$. If $g_0$ is $\PAR$, then $h$ is also $\PAR$. Implying that $g_0$ is dependent on all its variables and is not $\PAR$. By induction, there exists a minimal sensitive block of size $2$ for $g_0$ (and hence $h$). 

For the rest of the proof, we can assume that there exists both a red and a blue edge in $S_t$.

Let $e_x$ be red and $e_y$ be blue, this means that $g_0(x) = g_1(x)$ but $g_0(y) \neq g_1(y)$. If $x$ and $y$ were at Hamming distance $1$, then vertices $(x,0),(x,1),(y,0)$ and $(y,1)$ will give us the required minimal sensitive block of size $2$.

If $x,y$ are not at Hamming distance $1$, look at any path from $x$ to $y$ in the $t-1$ dimensional hypercube, say $z_0=x, z_1, z_2, \cdots, z_l=y$. The edge $e_{z_0}$ is red and $e_{z_l}$ is blue. Since the color needs to switch at some point, there exist $z_i, z_{i+1}$ at Hamming distance $1$ such that $e_{z_i}$ is red and $e_{z_{i+1}}$ is blue. Again, the vertices $(z_i,0),(z_i,1),(z_{1+1},0)$ and $(z_{i+1},1)$ will give us the required minimal sensitive block of size $2$.

\end{proof}


We are prepared to prove Lemma~\ref{thm:simulate_AND-OR} which shows: given a Boolean function $h$, $\AND_2$ (and $\OR_2$) can be obtained by restricting some of the variables to constants in a very small power of $h$. Compared to Lemma~\ref{lem:structure_boolean}, we need to remove negation and simulate both $\AND_2$ and $\OR_2$ and not just one of them.

We just show how to obtain $\AND_2$, the case for $\OR_2$ is similar. We handle the case of $h$ being monotone and non-monotone separately. 
\paragraph*{Monotone $h$:}
This case is simpler, and $\AND_2$ can be obtained as a restriction of $h$ itself. Let a minimal $1$-input be a $x\in \zone^t$ such that setting any $1$ bit of $x$ to $0$ changes the value of $h$. If there is a minimal $1$-input $x$ of Hamming weight more than $2$, we get a $\AND_2$ by choosing any two indices which are $1$ in $x$. The following claim finishes the proof for monotone functions.
\begin{claim}
Let $h:\boolfn{t}$ be a monotone Boolean function which depends on all variables. If there is no minimal $1$-input with Hamming weight more than $2$, then $h$ is the $\OR$ function.      
\end{claim}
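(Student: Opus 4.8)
The plan is to argue by contradiction. Suppose $h$ is monotone, depends on all $t$ variables, and has no minimal $1$-input of Hamming weight more than $2$; I want to conclude that $h = \OR_t$. The key object is the set of minimal $1$-inputs (the minimal elements, in the coordinatewise order, of $h^{-1}(1)$); since $h$ is monotone, $h(x) = 1$ if and only if $x$ lies above some minimal $1$-input. So it suffices to understand all the minimal $1$-inputs. By hypothesis each of them has Hamming weight $1$ or $2$.

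First I would rule out weight-$2$ minimal $1$-inputs. Suppose $\{i,j\}$ is a minimal $1$-input (i.e., the indicator vector $e_i + e_j$). Minimality means $e_i$ and $e_j$ are themselves $0$-inputs, so neither $x_i$ alone nor $x_j$ alone forces the output to $1$. Now I would use the fact that $h$ depends on variable $i$: there is some input $w$ with $h(w) \neq h(w^{\oplus i})$, and by monotonicity the sensitive direction is the ``turning on'' direction, so $w_i = 0$, $h(w) = 0$, $h(w + e_i) = 1$. Then $w + e_i$ lies above some minimal $1$-input $m$; since $w$ itself is a $0$-input, $m$ must contain coordinate $i$, so $m$ is either $e_i$ (excluded, since $e_i$ is a $0$-input as noted) or a weight-$2$ set $\{i,k\}$ for some $k$ with $w_k = 1$. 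The point I want to extract is that \emph{every} coordinate $i$ appears in some minimal $1$-input. If all minimal $1$-inputs had weight exactly $1$, then $h = \OR$ over exactly the coordinates appearing, which is all $t$ of them, and we are done. So the remaining case is that there is at least one weight-$2$ minimal $1$-input $\{i,j\}$, and I need to derive a contradiction — this is the main obstacle.

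To close that case: having the weight-$2$ minimal $1$-input $\{i,j\}$ together with the requirement that $h$ depends on all variables, I would look at a third coordinate. Actually the cleanest route is: restrict all coordinates outside $\{i,j\}$ to $0$. The restricted function is monotone on $\{x_i,x_j\}$, equals $1$ at $(1,1)$ but (by minimality of $\{i,j\}$) equals $0$ at $(1,0)$ and $(0,1)$, hence also at $(0,0)$ — so it is exactly $\AND_2$ on coordinates $i,j$. This already produces an $\AND_2$ as a restriction of $h$, which is exactly what Lemma~\ref{thm:simulate_AND-OR} wants in the monotone case; but for \emph{this} Claim I instead want a contradiction with ``no minimal $1$-input of weight $>2$''. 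I get it as follows: pick any coordinate $k \notin \{i,j\}$ (exists if $t \geq 3$; the case $t=2$ makes $h=\AND_2$ which is not of the excluded form but has a weight-$2$ minimal input and is fine to treat directly, or is ruled out because $\AND_2$ has no issue — I should state $t\geq 3$ here and note $t=2$ separately). Since $h$ depends on $k$, there is $w$ with $w_k=0$, $h(w)=0$, $h(w+e_k)=1$; then $w+e_k$ is above a minimal $1$-input $m \ni k$ with $m$ of weight $\le 2$, so $m = \{k,\ell\}$ for some $\ell$, and $m$ does not meet $\{i,j\}$ only if $\ell \notin \{i,j\}$ — in any case we now have two distinct weight-$2$ minimal $1$-inputs or we can build an input of the form $e_i + e_k$-ish that... — honestly the slick finish is: take the input $x = e_i + e_j + e_k$ where $\{i,j\}$ and $\{k,\ell\}$ are minimal $1$-inputs with $\ell \ne i,j$... this is getting into case analysis. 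Let me commit to the restriction argument: \emph{the existence of a weight-$2$ minimal $1$-input $\{i,j\}$ plus dependence on a third variable $k$ forces, after setting everything but $i,j,k$ to $0$, a monotone function on three bits that is $1$ at $111$, $0$ at $110$ and $011$, and genuinely depends on $k$; chasing where it first turns to $1$ along the edge from $000$ produces a minimal $1$-input of weight $\geq 3$ in $h$} — contradiction. I would write this last step carefully as the heart of the proof.

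The main obstacle, then, is exactly this: converting ``there is a weight-$2$ minimal $1$-input and yet $h$ depends on everything'' into a weight-$\geq 3$ minimal $1$-input. The resolution is the three-variable restriction above; I expect the writeup to need a short but careful case distinction on the value of the restricted function at the weight-$1$ inputs $e_k$ (restricted), $e_i+e_k$, $e_j+e_k$. If any of $e_i+e_j+e_k$'s weight-$2$ sub-inputs involving $k$ is a $0$-input while $e_i+e_j+e_k$ is a $1$-input and $e_i+e_j$... — precisely, since $\{i,j\}$ is minimal, $e_i + e_j$ has $h=1$, so we don't even need $k$'s input to be $1$; we already have a weight-$2$ minimal $1$-input, and the only thing left is to note that this \emph{is} allowed (weight exactly $2$, not more). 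So actually the contradiction must come from a \emph{different} weight-$2$ minimal $1$-input forcing interaction: if $\{i,j\}$ and $\{i,k\}$ are both minimal $1$-inputs then the input $e_i + e_j + e_k$ is a $1$-input whose minimal support could be $\{i,j\}$ — no contradiction. I think the honest statement of what breaks is: if $h$ has a weight-$2$ minimal $1$-input AND also has a weight-$1$ minimal $1$-input $e_k$, then $e_i + e_j$ is a minimal $1$-input but $e_i + e_j \geq$ nothing of weight $1$, fine — but now consider $e_k + e_i$: it is a $1$-input (above $e_k$) whose minimal support is $\{k\}$, fine. Ugh. The correct clean claim, which I will prove, is the contrapositive packaged differently: \emph{if every minimal $1$-input has weight $\leq 2$ and $h$ depends on all variables and $h \neq \OR$, take a weight-$2$ minimal $1$-input $\{i,j\}$; then coordinate $i$ is not in any weight-$1$ minimal $1$-input (else $\{i,j\}$ wouldn't be minimal), so by the dependence argument above, \emph{every} coordinate appearing in no weight-$1$ minimal $1$-input must appear in some weight-$2$ one — consistent — the real contradiction is with monotonicity of the restriction to $\{i,j\}$ being $\AND_2$, which has Hamming-weight-$2$ minimal input, which is allowed}. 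I will therefore reframe: the Claim as stated is proved by showing that if there is no weight-$\geq 3$ minimal $1$-input, then in fact there is no weight-$2$ one either, \emph{because} a weight-$2$ minimal $1$-input $\{i,j\}$ together with dependence on $i$ at an input $w$ with $w_j = 1$ yields $w + e_i$ above a minimal $1$-input containing $i$ but not equal to $e_i$ and of weight $\le 2$, hence $= \{i, k\}$ with $k$ possibly $= j$ or not — and iterating/combining these produces, via a least-weight-$1$-input of the restriction to the union of these pairs, the desired weight-$\geq 3$ minimal input. I will commit in the final writeup to the restriction-to-three-coordinates version and present the three-bit case analysis explicitly, as that is concrete and self-contained.
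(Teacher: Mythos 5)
The claim as literally worded is false, and your proposal is struggling precisely because of this. Take $h(x_1,x_2,x_3) = (x_1\wedge x_2)\vee x_3$: it is monotone, depends on all three variables, and its minimal $1$-inputs are $\{1,2\}$ (weight $2$) and $\{3\}$ (weight $1$), so it has no minimal $1$-input of Hamming weight more than $2$; yet $h\neq\OR_3$. The paper's own proof also breaks on this example: with $S=\{i:h(e_i)=0\}=\{1,2\}$, the input $y=110$ has its $1$-indices contained in $S$ but $h(y)=1$, contradicting the step ``if the set of $1$-indices are a subset of $S$, then $h(y)=0$.'' The hypothesis that is actually needed (and that makes the paper's short argument go through) is ``no minimal $1$-input of Hamming weight more than \emph{one},'' i.e.\ every minimal $1$-input is a standard basis vector. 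Consistently, the sentence just before the Claim that extracts an $\AND_2$ already works from a minimal $1$-input of weight $\geq 2$, not only $\geq 3$.

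Given that, your proposal has a genuine gap, but a forced one. You spend most of the argument trying to ``rule out weight-$2$ minimal $1$-inputs'' from the assumption that there is none of weight $\geq 3$, and you notice yourself several times (``no contradiction,'' ``Ugh'') that the promised contradiction never materializes: the ``three-variable restriction'' cannot manufacture a weight-$\geq 3$ minimal $1$-input, as the counterexample shows. The one part of your proposal that is correct and complete is the early observation: if all minimal $1$-inputs have weight exactly $1$, then $h$ is $\OR$ over the coordinates that appear, and dependence on all $t$ variables forces those coordinates to be all of $[t]$. Under the corrected hypothesis, that single observation is the whole proof and coincides with the paper's argument (the paper's set $S$ is exactly the complement of the set of coordinates occurring as weight-$1$ minimal $1$-inputs). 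Everything in your write-up after that observation is an attempt to prove a false statement, and should simply be dropped once the hypothesis is fixed.
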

\begin{proof}
By abusing the notation, let $0$ denote the all $0$ input.
Since the function is monotone but not constant, we know that $h(0)= 0$. Let $S \subseteq [t]$ capture the indices such that the corresponding Hamming weight $1$-input has function value $0$,
\[ S = \{i: h(0^{\oplus i}) = 0 \} .\]

For a $y\in \zone^t$, if the set of $1$-indices are not a subset of $S$, then $h(y) = 1$ by monotonicity. If the set of $1$-indices are a subset of $S$, then $h(y) = 0$ because there is no minimal $1$-input with Hamming weight more than $2$. 

In other words, $h$ is the $\OR$ function on the remaining $[t]-S$ variables. Since $h$ depends on all the $t$ variables, $ h$ is the $\OR$ function. 

\end{proof}
\begin{figure}
    \centering
    \includegraphics[scale=0.5]{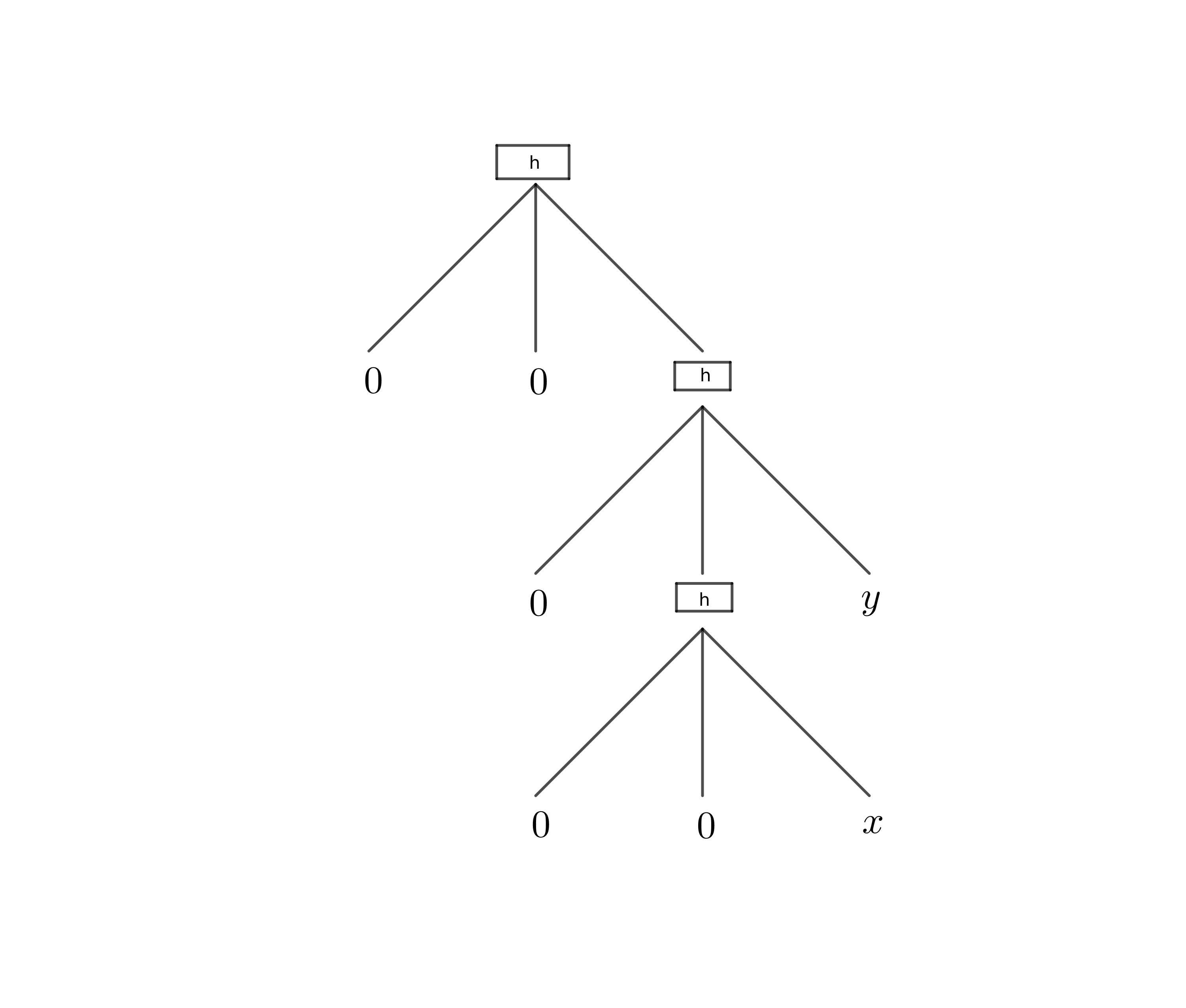}
    \caption{An example for constructing $\AND_2$ using a non-monotone function. Let $\mathsf{h: \zone^3 \to \zone}$ be $0$ at $x= 001$ and $1$ otherwise. We use the shifted $\OR_2$/minimal sensitive block at $001$ with indices $\{2,3\}$.}
    \label{fig:embedding and}
\end{figure}
\paragraph*{Non-monotone $h$:}
Since $h$ is a non monotone function, there exists an input $a \in \zone^t$ and an index $i \in [t]$ such that $h(a) = 1$, $a_i = 0$ and $h(a^{\oplus i}) = 0$. Restricting the variables according to $a$ (except the $i$-th bit) gives $h_1(x_i) = \neg x_i$. 

From Lemma~\ref{lem:structure_boolean}, there exists a $b \in \zone^t$ such that $h$ has a minimal sensitive block of size $2$ on $b$ (shifted $\AND_2$/$\OR_2$). The main idea of this proof is to use negation and this shifted $\AND_2$/$\OR_2$ (Figure~\ref{fig:embedding and} gives an example). 

For the formal proof, without loss of generality assume that the block have indices $1,2$ (that means $h(b) = h(b^{\oplus \{1\}}) = h(b^{\oplus \{2\}}) \neq h(b^{\oplus \{1,2\}})$). We will finish the proof by considering the two cases $h(b)=0$ and $h(b)=1$. 

\begin{itemize}
    \item $h(b)=0$ (shifted $\AND_2$): Suppose $b_1=0$ and $b_2=1$ (other cases can be handled similarly). Notice that $\AND_2(x,y) = h(x, \neg y, b_3,\cdots, b_t)$, giving us $\AND_2(x,y) = h(x, h_1(y), b_3,\cdots, b_t)$.
    \item $h(b)=1$ (shifted $\OR_2$): Suppose $b_1=1$ and $b_2=0$ (other cases can be handled similarly). Notice that $\OR_2(x,y) = h(x, \neg y, b_3,\cdots, b_t)$; using De Morgan's law,
    \[ \AND_2(x,y) = \neg \OR_2(\neg x, \neg y) = \neg h(\neg x, y, b_3, \cdots, b_t) = h_1(h(h_1(x), y, b_3, \cdots, b_t)) \]
\end{itemize}

Since $h_1$ is also a restriction of $h$, the proof is complete. 

\section{Functions with full sign degree as hardness amplifier}
\label{appendix:composition theorem for hardness amplifier}




The aim of this section is to show that any Boolean function with full sign degree can be used as a hardness amplifier, similar to $\Maj_t$ in Lemma~\ref{lem: Maj as amplification function}.

\begin{lemma}
     \label{lem: full degree function as hardness amplifier function }
    For any Boolean function $f:\zone^n \to \zone$ and $g:\zone^m \to \zone$ we have,
    \begin{align*}
        \adeg(f \circ \amp_t \circ g) = \Omega(\adeg(f) \cdot \adeg(\amp_t) \cdot \adeg(g))
    \end{align*}
    where $\amp_t$ is any Boolean function on $t$ bits with full sign degree. 
\end{lemma}



For the proof of lemma~\ref{lem: full degree function as hardness amplifier function } we will be using the following composition theorem by \cite{Sherstov11} for the classes of outer function with full approximate degree.

\begin{theorem}[\cite{Sherstov11}]
\label{th:fulladegcompositionsherstov}
    For any Boolean function $h:\zone^n \to \zone$ and $g:\zone^n \to \zone$, $$\adeg_{\frac{1-\varepsilon^{-n}}{2}}( h\circ g) = \Omega(\adeg(h)\cdot \adeg(g))$$ where $h$ is a function with full sign-degree. 
\end{theorem}

 
The proof will be along with the same line of Lemma~\ref{lem: Maj as amplification function}. Instead of using the dual of the inner function $g$, we will be using the dual of $\amp_t \circ g$. 
Fix any constant $0<\varepsilon <1$ and consider ${\frac{1-\varepsilon^{-t}}{2}}$-approximating polynomial of $\amp_t \circ g$. Let $\adeg_{\frac{1-\varepsilon^{-t}}{2}}(\amp_t \circ g) =: d$. By Lemma~\ref{lem:adeg-dual}, there exists a function $\psi : \zone^{mt} \to \reals$ such that 
\begin{align}
    \sum_{x \in \zone^{mt}}|\psi(x)| & = 1, \\ \label{eq:ampproof-psi-l1} \\
    \sum_{x \in \zone^{mt}}\psi(x)\cdot g(x) & > \frac{1-\varepsilon^{-t}}{2}, \label{eq:ampproof-psi-corr} 
\end{align}
    and
\begin{align}
    \sum_{x \in \zone^{mt}}\psi(x)\cdot p(x) & = 0 \label{eq:ampproof-psi-0}
\end{align}
    for every polynomial $p$ of degree $< d$. 

    Similar to Lemma~\ref{lem: Maj as amplification function} let $\mu$ be the probability distribution on $\zone^{mt}$ given by $\mu(x) = |\psi(x)|$ for $x\in \zone^{mt}$. Define $\mu_0$ and $\mu_1$ similarly. We have the following properties of $\mu_0$ and $\mu_1$.

     \begin{lemma}
    \label{lem:ampavg-mu-1}
        \begin{align}
        \label{eq:ampavg-mu-1}
            \Exp_{x \sim \mu_1}[\amp_t \circ g(x)] > 1- \varepsilon^{-t},
        \end{align} and 
        \begin{align}
        \label{eq:ampavg-mu-0}
            \Exp_{x \sim \mu_0}[\amp_t \circ g(x)] < \varepsilon^{-t}.
        \end{align}
    \end{lemma}


Consider the following linear operator $L$ that maps functions $h:\zone^{ntm} \to \reals$ to functions $Lh : \zone^n \to \reals$, 

 \begin{align}
    \label{defn:linear-operator}
        Lh(z) = \Exp_{x_1 \sim \mu_{z_1}}\Exp_{x_2 \sim \mu_{z_2}}\cdots \Exp_{x_n \sim \mu_{z_n}} [h(x_1,x_2,\ldots ,x_n)]. 
    \end{align}

    Let $H:= f\circ \amp_t \circ g $, where $f : \zone^n \to \zone$ and $g:\zone^m \to \zone$. Further, let  $p_H : \{0,1\}^{ntm} \to \reals$ be an $\varepsilon$-approximating polynomial for the composed function $H : \zone^{ntm}\to \zone$. That is, $\| H - p_H\|_{\infty} \leq \varepsilon$. Then, it follows from the linearity that 
    \[\|L(H-p_H)\|_{\infty} \leq \varepsilon. \]

 We now explore the behavior of $L$ on $H$ and $p_{H}$. 
    \begin{lemma}
    \label{lem:L-reduces-degH}
        \begin{align*}
          \deg(Lp_{H}) \leq \frac{\deg(p_{H})}{\adeg_{\frac{1-\varepsilon^{-t}}{2}}(g)} .
        \end{align*}
    \end{lemma}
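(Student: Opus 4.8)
The plan is to follow the proof of Lemma~\ref{lem:L-decreases-deg} essentially verbatim, the only difference being that the dual witness $\psi$ now lives on $\zone^{mt}$ (the inputs of $\amp_t\circ g$) rather than on $\zone^m$, so the relevant degree threshold is $d:=\adeg_{\frac{1-\varepsilon^{-t}}{2}}(\amp_t\circ g)$ as fixed in the set-up of this section. Since $L$ is a linear operator it suffices to bound $\deg(L\mathcal{M})$ for an arbitrary monomial $\mathcal{M}$ of $p_H$ (we may take $p_H$ multilinear). The variable set of $p_H$ is $\zone^{ntm}$, partitioned into $n$ blocks of $tm$ variables, where the $i$-th block is exactly the input of the $i$-th copy of $\amp_t\circ g$ feeding $f$; accordingly every monomial factors as $\mathcal{M}=\prod_{i=1}^n \mathcal{M}_i(x_i)$ with $x_i\in\zone^{tm}$ and $\mathcal{M}_i$ a monomial on the $i$-th block (with $\mathcal{M}_i\equiv 1$ if that block does not appear), and $\deg(\mathcal{M})=\sum_i\deg(\mathcal{M}_i)$.

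Next, because in \eqref{defn:linear-operator} the blocks $x_1,\dots,x_n$ are drawn independently, $L$ factorizes over the blocks: $(L\mathcal{M})(z)=\prod_{i=1}^n \Exp_{x_i\sim\mu_{z_i}}[\mathcal{M}_i(x_i)]$. Each factor $\Exp_{x_i\sim\mu_{z_i}}[\mathcal{M}_i(x_i)]$ depends on $z$ only through the single bit $z_i$, hence contributes degree at most $1$ to $L\mathcal{M}$, and it contributes degree $0$ exactly when $\Exp_{\mu_0}[\mathcal{M}_i]=\Exp_{\mu_1}[\mathcal{M}_i]$. Here I invoke the orthogonality of the dual witness: recalling $\psi=\tfrac12\mu_1-\tfrac12\mu_0$, whenever $\deg(\mathcal{M}_i)<d$ the polynomial $\mathcal{M}_i$ (viewed on all of $\zone^{tm}$) has degree $<d$, so \eqref{eq:ampproof-psi-0} gives $\sum_x\psi(x)\mathcal{M}_i(x)=0$, i.e.\ $\Exp_{\mu_1}[\mathcal{M}_i]=\Exp_{\mu_0}[\mathcal{M}_i]$, and this block contributes nothing. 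Consequently only the blocks with $\deg(\mathcal{M}_i)\geq d$ can contribute, each contributing at most one, so
\[
\deg(L\mathcal{M})\;\leq\;\bigl|\{\,i\in[n]:\deg(\mathcal{M}_i)\geq d\,\}\bigr|\;\leq\;\frac{1}{d}\sum_{i=1}^n\deg(\mathcal{M}_i)\;=\;\frac{\deg(\mathcal{M})}{d}.
\]
Maximizing over the monomials of $p_H$ and using linearity of $L$ yields $\deg(Lp_H)\leq\deg(p_H)/d$.

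It remains to match the denominator in the statement, i.e.\ to observe $d=\adeg_{\frac{1-\varepsilon^{-t}}{2}}(\amp_t\circ g)\geq\adeg_{\frac{1-\varepsilon^{-t}}{2}}(g)$. This holds because $g$ is a restriction, hence a projection, of $\amp_t\circ g$: assuming $g$ non-constant (the claimed bound is trivial otherwise), since $\amp_t$ is non-constant it has a sensitive coordinate $j\in[t]$, i.e.\ there is $a\in\zone^{[t]\setminus\{j\}}$ with $\amp_t(a,\,\cdot\,)$ non-constant in the $j$-th coordinate; fixing, for each $k\neq j$, the $k$-th input-block of $\amp_t\circ g$ to a Boolean string $w_k$ with $g(w_k)=a_k$ turns $\amp_t\circ g$ into $g$ or $\neg g$ on the $j$-th block. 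Then Fact~\ref{fact:projections} together with $\adeg_{\eta}(\neg g)=\adeg_{\eta}(g)$ gives the inequality, and combining with the previous paragraph yields $\deg(Lp_H)\leq\deg(p_H)/\adeg_{\frac{1-\varepsilon^{-t}}{2}}(g)$.

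I do not expect a genuine obstacle here: the content is the same two-ingredient argument (linearity plus the degree-$<d$ orthogonality of $\psi$) used in Lemma~\ref{lem:L-decreases-deg}. The only points requiring a little care are (i) the bookkeeping for the block structure of the three-layer composition, so that each $f$-coordinate corresponds to a full $tm$-variable block sampled jointly from $\mu_{z_i}$, and (ii) the fact that $\psi$ is here the dual of $\amp_t\circ g$, which is what makes the natural degree threshold $\adeg_{\frac{1-\varepsilon^{-t}}{2}}(\amp_t\circ g)$ and thus necessitates the short restriction remark above to recover the lemma's stated denominator.
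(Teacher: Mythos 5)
Your proof is correct and takes essentially the same route as the paper, which disposes of this lemma by noting that the argument of Lemma~\ref{lem:L-decreases-deg} carries over verbatim to the block structure of $f\circ\amp_t\circ g$ (linearity of $L$, factorization over the $n$ independent blocks, and orthogonality of the dual witness $\psi$ of $\amp_t\circ g$ to polynomials of degree below $\adeg_{\frac{1-\varepsilon^{-t}}{2}}(\amp_t\circ g)$). Your closing paragraph reconciling the denominator---observing that the argument naturally yields $\adeg_{\frac{1-\varepsilon^{-t}}{2}}(\amp_t\circ g)$ and that this dominates $\adeg_{\frac{1-\varepsilon^{-t}}{2}}(g)$ via a restriction of $\amp_t\circ g$ to $g$ or $\neg g$---is a careful touch the paper leaves implicit; in fact the paper's subsequent chain of inequalities works directly with the stronger $\adeg_{\frac{1-\varepsilon^{-t}}{2}}(\amp_t\circ g)$ denominator.
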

   \begin{proof}
       Follows exactly as Lemma~\ref{lem:L-decreases-deg}.
   \end{proof}
    Now it is left to prove that $Lp_H$ is an approximating polynomial for $f$.

\begin{lemma}
\label{lem:mainapproximatngLH}
Fix $0 < \delta <1$. Recall $p_H$ is an $\varepsilon$-approximating polynomial for $H = f \circ \amp_t \circ g$. Let $t = \Omega(\log n + \log(1/\delta))$ where the constant in $\Theta(\cdot)$ depends on $\varepsilon$. Then, $Lp_H$ is a $( \delta + \varepsilon)$-approximating polynomial for $f$. That is, 
    \[\|f - Lp_H \|_\infty \leq \|f - LH \|_\infty + \|LH - Lp_H \|_\infty \leq \delta + \varepsilon.\]
\end{lemma}



\begin{proof}
    It suffices to bound \(\|f - Lh \|_\infty\). To this end, consider $Lh(z)$. 

      \begin{align}
            (Lh)(z) & = \Exp_{x_1 \sim \mu_{z_1}}\Exp_{x_2 \sim \mu_{z_2}}\cdots \Exp_{x_n \sim \mu_{z_n}} [h(x_1,x_2,\ldots ,x_n)] \\
            & = \Exp_{x_1 \sim \mu_{z_1}}\Exp_{x_2 \sim \mu_{z_2}}\cdots \Exp_{x_n \sim \mu_{z_n}} [f\circ \amp_t \circ g (x_1,x_2,\ldots ,x_n)] \\
            & {=} f\left(\Exp_{\mu_{z_1}}[\amp_t \circ g], \Exp_{\mu_{z_2}}[\amp_t \circ g], \ldots , \Exp_{\mu_{z_n}}[\amp_t \circ g]\right) \\
            & {=} f\left(z_1', \dots, z_n' \right)
        \end{align} where $\| z- z'\|_\infty = O(\delta/n)$ because $t=\Omega(\log n + \log(\frac{1}{\delta}))$ and Lemmas~\ref{lem:ampavg-mu-1}. 

    Therefore, for arbitrary $z\in \zone^n$, $|f(z) -Lh(z)| = |f(z) - f(z')| \leq \varepsilon_1$, where $\|z-z'\|_\infty = O(1/n)$. 

Since $Lp_h$ is an $(\varepsilon_1 + \varepsilon)$-approximating polynomial for $f$, we also have \[\deg(Lp_h) \geq \adeg_{\varepsilon_1 + \varepsilon}(f).\]
    We therefore have the following inequalities 
    \[ \adeg_{\varepsilon_1 + \varepsilon}(f) \leq \deg(Lp) \leq \frac{\deg(p)}{\adeg_{\frac{1-\varepsilon^{-t}}{2}}(\amp_t \circ g)}. \]
    This implies \[ \deg(p) \geq \adeg_{\varepsilon_1 + \varepsilon}(f) \cdot {\adeg_{\frac{1-\varepsilon^{-t}}{2}}(\amp_t \circ g)} \geq \adeg_{\varepsilon_1 + \varepsilon}(f) \cdot \adeg (\amp_t)\cdot \adeg(g),\]
    where the last inequality follows from Theorem~\ref{th:fulladegcompositionsherstov} by \cite{Sherstov11}.
    \begin{align}
    \label{eq:thm-xor}
    \adeg_\varepsilon(f\circ \amp_t\circ g) = \deg(p) \geq  \adeg_{\varepsilon_1 + \varepsilon}(f) \cdot \adeg(\amp_t) \cdot \adeg_{1/3}(g)
    \end{align}
   
\end{proof}

 From the proof technique, it is clear that if we start with an $({\frac{1-\frac{\delta}{n}}{2}})$-approximating polynomial for the inner function $g$ then we are fine with the error accumulated. So, the following known results can be derived as a corollary of the above Lemma. 
 
\begin{corollary}[\cite{Sherstov13}]
    For any Boolean function $f:\zone^n \to \zone$ and $g: \zone^m \to \zone$ we have $\adeg_{\varepsilon}(f \circ g) = \Omega(\adeg_{\epsilon_1 + \varepsilon}(f)\adeg_{\frac{1-\frac{\delta}{n}}{2}}(g)),$ where $\epsilon, \varepsilon$ is some constant in $(0,\frac{1}{2})$.
\end{corollary}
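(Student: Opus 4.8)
The plan is to reuse, almost verbatim, the primal–dual argument from the proof of Lemma~\ref{lem: Maj as amplification function}, but with the middle majority gate \emph{deleted} and the dual witness of the inner function $g$ taken with a much smaller error parameter. Concretely, fix constants $\delta,\varepsilon>0$ with $\delta+\varepsilon<1/2$ and set $d := \adeg_{\frac{1-\delta/n}{2}}(g)$. By Lemma~\ref{lem:adeg-dual} there is a dual witness $\psi\colon\zone^m\to\reals$ with $\sum_x|\psi(x)| = 1$, $\sum_x \psi(x) g(x) > \frac{1-\delta/n}{2}$, and $\sum_x \psi(x) p(x) = 0$ for every polynomial $p$ of degree $< d$. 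Writing $\psi = \tfrac12\mu_1 - \tfrac12\mu_0$ as in the proof of Lemma~\ref{lem: Maj as amplification function}, the analogues of Lemmas~\ref{lem:avg-mu-1} and \ref{lem:avg-mu-0} now read $\Exp_{x\sim\mu_1}[g(x)] > 1 - \delta/n$ and $\Exp_{x\sim\mu_0}[g(x)] < \delta/n$; this sharper inner error is the whole point, as it replaces the amplification previously provided by $\Maj_t$.

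Next I would introduce the linear operator $L$ of \eqref{defn:linear-operator} (the ``one copy per coordinate'' version, since there is no longer a majority in the middle), $Lh(z) = \Exp_{x_1\sim\mu_{z_1}}\cdots\Exp_{x_n\sim\mu_{z_n}}[h(x_1,\dots,x_n)]$, and let $p_h$ be an $\varepsilon$-approximating polynomial for $h := f\circ g$ of degree $\adeg_\varepsilon(f\circ g)$. Two properties of $L$ are then needed, both proved exactly as before. First, degree reduction: by the argument of Lemma~\ref{lem:L-decreases-deg}, $\deg(Lp_h) \le \deg(p_h)/d$, since any monomial factor $p_i$ of degree $< d$ has $\Exp_{\mu_1}[p_i] = \Exp_{\mu_0}[p_i]$ by the orthogonality property and so does not contribute to the degree. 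Second, approximation: evaluating $Lh$ directly gives $Lh(z) = f\big(\Exp_{\mu_{z_1}}[g],\dots,\Exp_{\mu_{z_n}}[g]\big) = f(z')$ with $\|z - z'\|_\infty \le \delta/n$ by the two expectation bounds above; since the multilinear extension of $f$ is $(\delta/n,\delta)$-robust by Theorem~\ref{folklorerobust}, we get $\|f - Lh\|_\infty \le \delta$, and combining with $\|L(h - p_h)\|_\infty \le \varepsilon$ (convexity of $L$) shows that $Lp_h$ is a $(\delta+\varepsilon)$-approximating polynomial for $f$.

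Putting these together, $\adeg_{\delta+\varepsilon}(f) \le \deg(Lp_h) \le \deg(p_h)/\adeg_{\frac{1-\delta/n}{2}}(g)$, i.e.
\[
\adeg_\varepsilon(f\circ g) \;=\; \deg(p_h) \;\ge\; \adeg_{\delta+\varepsilon}(f)\cdot \adeg_{\frac{1-\delta/n}{2}}(g),
\]
which is the claimed bound (with $\epsilon_1 = \delta$ in the statement). I do not expect a genuine obstacle: the whole proof is a specialization of what is already written for Lemma~\ref{lem: Maj as amplification function} and the analogous steps of Lemma~\ref{lem: full degree function as hardness amplifier function }, so the only things to verify are bookkeeping — that the per-coordinate perturbation of the input to $f$ is at most $\delta/n$ (forced by the choice of inner error $\frac{1-\delta/n}{2}$), and that $f$ being Boolean lets us invoke the folklore robustness of its multilinear extension. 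The one point worth stating explicitly is \emph{why} this does not already resolve the composition question: removing $\Maj_t$ costs us $\adeg_{\frac{1-\delta/n}{2}}(g)$ — rather than $\Theta(\adeg(g))$ — on the right-hand side, since error reduction (Lemma~\ref{lem:error-reduction-adeg}) does not apply to the non-constant error $\tfrac12(1-\delta/n)$.
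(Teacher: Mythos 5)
Your proof is correct and is exactly the argument the paper intends: the paper derives this corollary by noting that starting from a $\frac{1-\delta/n}{2}$-error dual witness for the inner function makes the per-coordinate perturbation $\delta/n$, so the folklore multilinear robustness replaces the $\Maj_t$ amplification, and you have simply written out that specialization of Lemma~\ref{lem: Maj as amplification function} in full. Your closing remark correctly identifies why the resulting bound involves $\adeg_{\frac{1-\delta/n}{2}}(g)$ rather than $\adeg(g)$.
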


From above it also follows 

\begin{corollary}[\cite{Sherstov13}]
     For any Boolean function $f:\zone^n \to \zone$ and $g: \zone^m \to \zone$ we have $\adeg_{\varepsilon}(f \circ g) = \Omega(\adeg(f)\adeg_{\pm}(g)),$ where $\adeg_{\pm}(g)$ denotes the sign-degree of $g$.
\end{corollary}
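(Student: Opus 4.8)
The plan is to obtain this corollary as an immediate consequence of the preceding corollary, i.e.\ of Lemma~\ref{lem: full degree function as hardness amplifier function } run with the weakened hypothesis that we only have a $\left(\frac{1-\delta/n}{2}\right)$-approximating polynomial for the inner function $g$. That preceding corollary gives, for suitably small constants $\epsilon_1,\varepsilon$ and the parameter $\delta$ of Lemma~\ref{lem:mainapproximatngLH},
\[
\adeg_\varepsilon(f\circ g)\;=\;\Omega\left(\adeg_{\epsilon_1+\varepsilon}(f)\cdot\adeg_{\frac{1-\delta/n}{2}}(g)\right).
\]
It therefore suffices to show that (i) $\adeg_{\epsilon_1+\varepsilon}(f)=\Theta(\adeg(f))$ and (ii) $\adeg_{\frac{1-\delta/n}{2}}(g)\ge\adeg_{\pm}(g)$.

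Part (i) is just error reduction: arranging the constants so that $\epsilon_1+\varepsilon$ is a fixed number strictly below $1/2$, Lemma~\ref{lem:error-reduction-adeg} gives $\adeg_{\epsilon_1+\varepsilon}(f)=\Theta(\adeg(f))$, with the constant depending only on $\epsilon_1+\varepsilon$. Part (ii) is the elementary observation that an approximating polynomial with error strictly below $1/2$ is in particular a sign-representing polynomial: let $p$ be a polynomial of degree $\adeg_{\frac{1-\delta/n}{2}}(g)$ with $\abs{p(x)-g(x)}\le\frac12-\frac{\delta}{2n}$ for every $x\in\zone^m$. Then $p(x)>\frac12$ whenever $g(x)=1$ and $p(x)<\frac12$ whenever $g(x)=0$, so the polynomial $q(x):=\frac12-p(x)$ has the same degree as $p$ and satisfies $(1-2g(x))\,q(x)>0$ for all $x$; hence $q$ sign-represents $g$ and $\adeg_{\pm}(g)\le\deg(q)=\adeg_{\frac{1-\delta/n}{2}}(g)$. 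Substituting (i) and (ii) into the displayed inequality yields $\adeg_\varepsilon(f\circ g)=\Omega(\adeg(f)\cdot\adeg_{\pm}(g))$, as claimed.

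I do not anticipate a genuine obstacle here: essentially all of the content resides in Lemma~\ref{lem: full degree function as hardness amplifier function } / the preceding corollary, and the present statement is a routine weakening. The only point that needs care is the bookkeeping of the various error parameters ($\delta$, $\epsilon_1$, $\varepsilon$, and the amplifier arity $t=\Omega(\log n+\log(1/\delta))$): one must check that they can be chosen simultaneously subject to $\epsilon_1+\varepsilon<1/2$, which is immediate since these are all constants while $t$ is permitted to grow with $n$.
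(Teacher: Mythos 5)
Your proof is correct and matches the route the paper intends: the paper offers no explicit argument for this corollary (it only says ``From above it also follows''), and your two observations --- error reduction on the outer function and the fact that a $(\frac{1}{2}-\frac{\delta}{2n})$-approximating polynomial sign-represents $g$ after shifting by $\frac{1}{2}$, so $\adeg_{\pm}(g)\le\adeg_{\frac{1-\delta/n}{2}}(g)$ --- are exactly the missing bookkeeping needed to pass from the preceding corollary to this one.
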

\subsection{Composition theorem for recursive functions with full sign degree}


In this section, we will prove the composition theorem for a few more recursive functions. To prove our theorem we will be using Lemma~\ref{lem: full degree function as hardness amplifier function } which is a generalization of Lemma~\ref{lem: Maj as amplification function}. For the sake of completeness, we give the proof of Lemma~\ref{lem: full degree function as hardness amplifier function } in Appendix~\ref{appendix:composition theorem for hardness amplifier}. 

\cite{Saks_1993} and \cite{Anthony95} shows that,

\begin{theorem}(\cite{Saks_1993} and \cite{Anthony95})
    \label{th: full sign degree}
    Almost all the function $f: \zone^n \to \zone$ sign degree is high,  $\adeg_{\pm}(f)= \Omega(n)$.
\end{theorem}

   So, Theorem~\ref{th: recursive function composition} gives a composition theorem for the recursive version of all such functions being inner or outer functions. Note that when the outer function has a full sign degree applying \cite{Sherstov12} we get $\adeg(\amp_t)$ in the lower bound part in Lemma~\ref{lem: full degree function as hardness amplifier function }, using which we can prove composition theorem for some classes of functions where we are not loosing the $\mathsf{polylog(n)}$ in the lower bound part compared to Theorem~\ref{th:for any recursive function}.


 \begin{theorem}
 \label{th: recursive function composition}
 For any Boolean function $f:\zone^n \to \zone$ and $g:\zone^n \to \zone$ the following holds:
$\adeg(f \circ g) = {\Omega}(\adeg(f)\adeg(g))$ when, 
\begin{itemize}
    \item $f=\amp_t^d$ and $g$ is any Boolean function,
    \item $g=\amp_t^d$ and $f$ is any Boolean function.
     \end{itemize}
    where $\amp_t$ is any Boolean function with full sign degree and $t = \Omega(\log n)$.
 \end{theorem}

The proof is completely similar to the proof of Theorem~\ref{th: RecMaj composition}, we are giving the proof for the sake of completeness. Note that here we are not losing the $\log n$ compared to Theorem~\ref{th: RecMaj composition}. 

\begin{proof}
    From lemma~\ref{lem: full degree function as hardness amplifier function }, we have $\adeg(f \circ \amp_t \circ g) = \Omega(\adeg(f)\cdot \adeg(\amp_t) \cdot  \adeg(g))$.
    
    If $f=\amp_t^k$,
    \begin{align}
        \adeg(\amp_t^k \circ \amp_t \circ g) \geq \adeg(\amp_t^k)\cdot \adeg(\amp_t) \cdot \adeg(g)
        \label{eq:amp as outer function}
    \end{align}.

    On the other hand using associativity if composition we have,
    \begin{align}
        \adeg(\amp_t \circ \amp_t^k \circ g) \geq \adeg(\amp_t^k \circ \amp_t \circ g) \label{eq:amp associativity}
    \end{align}
     Applying Theorem~\ref{thm:robust-comp} we also have the following,
     \begin{align}
         \adeg(\amp_t \circ \amp_t^k \circ g) \leq \adeg(\amp_t)\cdot \adeg(\amp_t^k \circ g) \leq \adeg(\amp_t) \cdot \adeg(\amp_t^k \circ g).
     \end{align}
     From equation~\eqref{eq:amp as outer function}, \eqref{eq:amp associativity} it follows that,
     \begin{align*}
         \adeg(\amp_t^k \circ g) \geq \adeg(\amp_t^k) \cdot \adeg(g).
     \end{align*}

    For $g = \Maj_t^k$, we can derive the proof in a similar fashion. 
\end{proof}

\section{Conclusion}
Towards the main open problem of approximate degree composition, we have the following immediate question in light of Lemma~\ref{lem: Maj as amplification function}. 
Can we upper bound $\adeg(f \circ \Maj_t \circ g)$ in terms of $\adeg(f \circ g)$? Precisely,
\begin{open question}
Is $\adeg(f \circ \Maj_t \circ g) = \Tilde{O}(\adeg(f \circ g))$, where $t = \Theta(\log n)$ and $n$ is the arity of the outer function $f$? 
\end{open question}
Observe that an affirmative solution to the above question solves the composition question for approximate degree in positive. 
Another interesting question is to find other classes of functions for which the analogue of Lemma~\ref{lem: full degree function as hardness amplifier function } holds.  
\begin{open question}
 Find non-trivial classes of functions $H$ such that $\adeg(f \circ h \circ g) = \Tilde{\Omega} (\adeg(f)\cdot \adeg(h) \cdot \adeg(g) )$ for all $h\in H$?
\end{open question}

It has the following two useful implications. First, this gives composition for functions $h \in H$. 
In particular, when one of the functions $h$ (inner or outer) belongs to the class $H$ then 
\(\adeg(f \circ h \circ g) =  \Tilde\Omega(\adeg(f)\cdot \adeg(h) \cdot \adeg(g)) \) along with Theorem~\ref{thm:robust-comp} implies 
\[ \adeg(h \circ g) = \Tilde\Omega(\adeg(h) \cdot \adeg(g)) \quad \text{ and } \quad \adeg(f \circ h) = \Tilde\Omega(\adeg(f) \cdot \adeg(h)).\]
Second, a function $h \in H$ 
can be used as `hardness amplifier' functions. 

Another very interesting question that may provide us insights to make progress towards the main question of approximate degree composition is to prove that approximate degree composes when the inner function is $\OR$.
\begin{open question}
Show that $\adeg(f \circ \OR) = \Tilde{\Omega}(\adeg(f).\adeg(\OR))$. 
\end{open question}


\bibliography{ref}
\newpage
\appendix


\section{One-sided approximate degree}

Here is some analogous result to Lemma~\ref{lem: Maj as amplification function} in the case of one-sided approximate degree. Let us define the concept of one-sided approximate degree first.

\begin{defi}
    A polynomial $p$ is a one-sided  $\varepsilon$-approximation to $f$ if 
    \begin{enumerate}
        \item for all $x \in f^{-1}(1)$, $|p(x)-1| \leq \varepsilon$, and 
        \item for all $x \in f^{-1}(0)$, $p(x) \leq \varepsilon$. 
    \end{enumerate}
\end{defi}
\begin{defi}
    The one-sided $\varepsilon$-approximate degree of $f$, denoted $\odeg_\varepsilon(f)$, is the least degree of a real polynomial $p$ that is a one-sided $\varepsilon$-approximation to $f$. 
\end{defi}
\begin{lemma}[\cite{BunT22}]
\label{lem:odeg-char}
Let $f\colon\zone^n \to \zone$ be a Boolean function. Then $\odeg_\varepsilon(f) > d$ if and only if there exists a function $\psi\colon\zone^n\to\reals$ such that 
\begin{align*}
    \sum_{x \in \zone^n} |\psi(x)| & = 1, \\
    \sum_{x \in \zone^n}\psi(x)f(x) & > \varepsilon, \\
    \sum_{x \in \zone^n}\psi(x)p(x) & = 0 \text{ for every polynomial $p$ of degree }\leq d, \quad \text{and} \\
    \psi(x) & \leq 0 \text{ for all } x \in f^{-1}(0). 
\end{align*}
\end{lemma}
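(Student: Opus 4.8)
The plan is to obtain this as a linear-programming duality statement, entirely parallel to the dual characterization of two-sided approximate degree (Lemma~\ref{lem:adeg-dual}), with the single difference that the \emph{missing} lower-bound constraint on the $0$-inputs in the primal program shows up in the dual as the one-sided sign condition $\psi(x)\le 0$ on $f^{-1}(0)$. Concretely, since $x_i^2 = x_i$ on $\zone^n$ every polynomial agrees on the hypercube with a multilinear one, so the degree-$\le d$ polynomials form a finite-dimensional space spanned by $\{x^T : T\subseteq[n],\ |T|\le d\}$; writing $p(x) = \sum_{|T|\le d} c_T\, x^T$, the condition $\odeg_\varepsilon(f)\le d$ is exactly feasibility, in the unknowns $(c_T)$, of the finite linear system ``$p(x)\ge 1-\varepsilon$ and $p(x)\le 1+\varepsilon$ for all $x\in f^{-1}(1)$, and $p(x)\le\varepsilon$ for all $x\in f^{-1}(0)$''. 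Hence $\odeg_\varepsilon(f)>d$ iff this system is infeasible.

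For the ``only if'' direction I would apply Farkas' lemma: $Ac\le b$ is infeasible iff there is $y\ge 0$ with $A^\top y = 0$ and $b^\top y < 0$. Naming the multipliers $\alpha_x,\beta_x$ for the two constraints at each $x\in f^{-1}(1)$ and $\gamma_x$ for the constraint at each $x\in f^{-1}(0)$, and setting $\psi(x) := \alpha_x-\beta_x$ on $f^{-1}(1)$ and $\psi(x):=-\gamma_x$ on $f^{-1}(0)$, the condition $A^\top y = 0$ reads $\sum_x \psi(x)\,x^T = 0$ for all $|T|\le d$, i.e.\ $\sum_x\psi(x)p(x)=0$ for every polynomial $p$ of degree $\le d$, and the nonnegativity of $\gamma_x$ gives $\psi(x)\le 0$ on $f^{-1}(0)$. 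Replacing $(\alpha_x,\beta_x)$ by the minimal choice $(\max(\psi(x),0),\max(-\psi(x),0))$ — which leaves $\psi$ and $A^\top y$ unchanged and only decreases $b^\top y$ — makes $\alpha_x+\beta_x = |\psi(x)|$, so $b^\top y < 0$ rearranges to $\sum_{x\in f^{-1}(1)}\psi(x) > \varepsilon\sum_x|\psi(x)|$. Since $\psi\le 0$ on $f^{-1}(0)$, the left side equals $\sum_x\psi(x)f(x)$; in particular $\sum_x|\psi(x)|>0$, so rescaling to $\sum_x|\psi(x)|=1$ yields all four claimed properties.

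For the ``if'' direction I would argue directly: suppose $\psi$ satisfies the four properties and, for contradiction, $p$ of degree $\le d$ is a one-sided $\varepsilon$-approximation to $f$. Then for every $x$ one checks $\psi(x)p(x)\ge \psi(x)f(x)-\varepsilon|\psi(x)|$ — when $f(x)=1$ this is immediate from $1-\varepsilon\le p(x)\le 1+\varepsilon$, and when $f(x)=0$ it follows from $p(x)\le\varepsilon$ together with $\psi(x)\le 0$, which give $\psi(x)p(x)\ge \varepsilon\,\psi(x)=-\varepsilon|\psi(x)|$. Summing and using orthogonality, $0 = \sum_x\psi(x)p(x)\ge \sum_x\psi(x)f(x)-\varepsilon\sum_x|\psi(x)| > \varepsilon-\varepsilon = 0$, a contradiction, so $\odeg_\varepsilon(f)>d$.

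The only place that needs care — and the step I expect to take the most attention to write cleanly — is the Farkas bookkeeping: getting the signs right so that dropping the lower bound on $f^{-1}(0)$ in the primal becomes exactly ``$\psi(x)\le 0$ on $f^{-1}(0)$'' in the dual, and verifying that the minimal-multiplier substitution recovers the normalization $\sum_x|\psi(x)|=1$ rather than a weaker inequality. Everything else is the standard LP-duality template already used for Lemma~\ref{lem:adeg-dual}; in fact one could simply cite that derivation and note which single primal constraint has been removed.
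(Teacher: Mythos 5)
The paper does not actually prove Lemma~\ref{lem:odeg-char}; it is imported directly from \cite{BunT22} without proof, just as Lemma~\ref{lem:adeg-dual} is. Your LP-duality derivation is correct and is the standard one. The ``if'' direction is airtight: the pointwise bound $\psi(x)p(x)\ge\psi(x)f(x)-\varepsilon|\psi(x)|$ holds on $f^{-1}(1)$ by the two-sided bound $|p(x)-1|\le\varepsilon$ and on $f^{-1}(0)$ by $p(x)\le\varepsilon$ together with $\psi(x)\le 0$, and summing against the orthogonality and $\ell_1$-normalization conditions yields the contradiction $0>0$. For the ``only if'' direction the Farkas bookkeeping does go through, but you should fix explicitly which of the two constraints at $x\in f^{-1}(1)$ the multiplier $\alpha_x$ belongs to: with $\psi(x)=\alpha_x-\beta_x$ and your minimal substitution $(\alpha_x,\beta_x)\mapsto(\max(\psi(x),0),\max(-\psi(x),0))$, the claim that $A^\top y=0$ becomes $\sum_x\psi(x)x^T=0$ is right only if $\alpha_x$ is the multiplier of the lower-bound constraint $p(x)\ge 1-\varepsilon$ (in the form $-\sum_T c_T x^T\le -(1-\varepsilon)$), with $\beta_x$ for the upper bound; the opposite assignment leaves the $f^{-1}(0)$ terms with the wrong sign. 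Once that convention is stated, $b^\top y<0$ rearranges to $\sum_x\psi(x)f(x)>\varepsilon\sum_x|\psi(x)|$, which in particular forces $\sum_x|\psi(x)|>0$ so the final rescaling to $\sum_x|\psi(x)|=1$ is legitimate. Your closing observation is also exactly the right way to see the lemma: the only change from the Lemma~\ref{lem:adeg-dual} derivation is that the dropped lower bound on $f^{-1}(0)$ in the primal shows up dually as the sign constraint $\psi\le 0$ there.
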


 The dual formulation for one-sided approximate degree is the following:

\begin{corollary}
\label{cor: onesided amp}
    For any Boolean function $f:\zone^n \to \zone$ we have the following:
    \begin{align*}
        \adeg(f\circ \AND_t \circ \OR) = \Omega( \adeg(f) \cdot \adeg(\OR))
    \end{align*} for $t=\Omega(\log n)$.
\end{corollary}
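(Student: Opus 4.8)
\textbf{Proof plan for Corollary~\ref{cor: onesided amp}.}
The plan is to derive this directly from Observation~\ref{obs: and amplification} by specializing the inner function to $\OR$. Observation~\ref{obs: and amplification} states that for any Boolean functions $f\colon\zone^n\to\zone$ and $g\colon\zone^m\to\zone$, we have $\adeg(f\circ\AND_t\circ g)=\Omega(\adeg(f)\cdot\odeg(g))$ whenever $t\geq C\log n$ for a large enough constant $C$. Setting $g=\OR$ (on any number of bits), this gives $\adeg(f\circ\AND_t\circ\OR)=\Omega(\adeg(f)\cdot\odeg(\OR))$ for $t\geq C\log n$.

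Thus the only remaining point is to argue that $\odeg(\OR)=\Omega(\adeg(\OR))$, i.e., that the one-sided approximate degree of $\OR$ matches its (two-sided) approximate degree up to constants. This is the key step, and it is standard: by definition the one-sided approximate degree is at least the two-sided approximate degree minus nothing — in fact $\odeg_\varepsilon(f)\geq\adeg_\varepsilon(f)$ always, since a one-sided approximation is in particular a two-sided approximation. Hence $\odeg(\OR)\geq\adeg(\OR)=\Theta(\sqrt{m})$ by the classical Nisan--Szegedy bound, where $m$ is the arity of the $\OR$. (Conversely the amplified polynomial for $\OR$ obtained via e.g.\ Chebyshev-polynomial constructions is one-sided, so in fact $\odeg(\OR)=\Theta(\adeg(\OR))$, though for the corollary only the lower direction $\odeg(\OR)\geq\adeg(\OR)$ is needed.)

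Combining, for $t\geq C\log n$,
\[
\adeg(f\circ\AND_t\circ\OR)=\Omega(\adeg(f)\cdot\odeg(\OR))=\Omega(\adeg(f)\cdot\adeg(\OR)),
\]
which is the claimed bound. I do not anticipate a genuine obstacle here: the content is entirely in Observation~\ref{obs: and amplification} (which itself follows by rerunning the proof of Lemma~\ref{lem: Maj as amplification function} with $\AND_t$ in place of $\Maj_t$ and using the one-sided dual characterization of Lemma~\ref{lem:odeg-char}, noting that for the one-sided dual $\Exp_{\mu_1}[g]=1$ exactly and $\Exp_{\mu_0}[g]<\varepsilon$, so $\AND_t$ of these values is already within $\varepsilon^{\Theta(t)}$-close, hence $1/\mathrm{poly}(n)$-close, to the true Boolean value), and the small observation $\odeg\geq\adeg$. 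The mild subtlety worth flagging is simply making sure that the one-sided error behaves correctly under $\AND_t$ — specifically that the "$0$-side" error is what gets amplified by $\AND$ and the "$1$-side" error is already zero — which is exactly why $\AND$ (rather than $\MAJ$) is the right amplifier to use with a one-sided dual witness.
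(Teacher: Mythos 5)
Your overall plan — specialize Observation~\ref{obs: and amplification} to $g=\OR$ and then argue $\odeg(\OR)=\Omega(\adeg(\OR))$ — is the right skeleton and matches what the paper is implicitly doing. But your justification of the key step $\odeg(\OR)\geq\adeg(\OR)$ is wrong. You assert that ``$\odeg_\varepsilon(f)\geq\adeg_\varepsilon(f)$ always, since a one-sided approximation is in particular a two-sided approximation.'' This is backwards. By the definition, a one-sided approximation only needs to satisfy $p(x)\leq\varepsilon$ on $f^{-1}(0)$, with no lower bound on $p(x)$ there, so it need not be a two-sided approximation. The implication runs the other way: every two-sided approximation is one-sided, hence $\odeg_\varepsilon(f)\leq\adeg_\varepsilon(f)$ in general. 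A concrete counterexample to your inequality is $\AND_m$: the degree-one polynomial $p(x)=\sum_i x_i-(m-1)$ equals $1$ at $1^m$ and is $\leq 0$ on all $0$-inputs, so $\odeg(\AND_m)=1$, whereas $\adeg(\AND_m)=\Theta(\sqrt{m})$.

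The fact $\odeg(\OR)\geq\adeg(\OR)$ is nevertheless true, but for a reason specific to $\OR$, and this is exactly what the paper uses. Because $\OR^{-1}(0)=\{0^m\}$ is a single point, any two-sided dual witness $\psi$ for $\adeg_\varepsilon(\OR)$ automatically has the one-sided sign pattern: orthogonality to constants gives $\sum_x\psi(x)=0$, and the correlation condition gives $\sum_{x\neq 0^m}\psi(x)=\langle\psi,\OR\rangle>\varepsilon>0$, so $\psi(0^m)=-\sum_{x\neq 0^m}\psi(x)<0$. Thus $\psi$ already satisfies the sign condition in Lemma~\ref{lem:odeg-char}, meaning $\mu_1$ is supported entirely on $\OR^{-1}(1)$ and $\Exp_{\mu_1}[\OR]=1$ exactly, so the $\AND_t$ amplification from Observation~\ref{obs: and amplification} applies with the factor $\adeg(\OR)$ rather than a potentially smaller $\odeg(\OR)$. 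Once you replace your false general inequality with this $\OR$-specific observation about its dual witness, the argument goes through and coincides with the paper's proof.
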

\begin{proof}
     When $g=\OR$, then we can use \emph{one-sidedness} of dual of $\OR$, i.e., $\Exp_{\mu_1}[\OR] = 1$ and $\Exp_{\mu_0}[\OR] < \varepsilon$, to amplify using $\AND$ instead of $\Maj$. Which gives  $\adeg(f\circ \AND_t \circ \OR) = \Omega( \adeg(f) \cdot \adeg(\OR))$. 
\end{proof}
Precisely for one sided dual,
$\Exp_{\mu_1}[g] = 1$ and $\Exp_{\mu_0}[g] < \varepsilon$ which gives the following observation.

\begin{observation}
\label{obs: and amplification}
For any Boolean function $f:\zone^n \to \zone$ and $g:\zone^m \to \zone$ we have the following:
     $\adeg(f\circ \AND_t \circ g) = \Omega( \adeg(f) \cdot \odeg(g))$ for $t=\Omega(\log n)$.
\end{observation}


\end{document}